\documentclass[submission,copyright,creativecommons]{eptcs}
\providecommand{\event}{AFL 2026} % Name of the event you are submitting to

\usepackage{iftex}

\ifpdf
  \usepackage{underscore}         % Only needed if you use pdflatex.
  \usepackage[T1]{fontenc}        % Recommended with pdflatex
\else
  \usepackage{breakurl}           % Not needed if you use pdflatex only.
\fi

\newcommand{\longversion}[1]{}
\newcommand{\shortversion}[1]{#1}
\usepackage{subcaption}        
\usepackage{hyperref}

\usepackage[utf8]{inputenc}

\usepackage[]{todonotes}
\usepackage{hyperref} 
\usepackage{mathtools} 
\usepackage{amsmath,amsfonts,amssymb} 

\usepackage[capitalize]{cleveref}
\usepackage{complexity}
\usepackage{xspace}

\usetikzlibrary{positioning, shapes,patterns,shadows.blur, arrows,decorations,arrows,automata,shadows,patterns,chains,graphs,calc,intersections,matrix,fit,shapes,chains,decorations.pathreplacing,chains,fit,shapes}

\DeclareMathOperator{\classical}{cl}
\DeclareSymbolFont{Shuffle}{U}{shuffle}{m}{n}
\DeclareFontFamily{U}{shuffle}{}
\DeclareFontShape{U}{shuffle}{m}{n}{%
  <-8>shuffle7%
  <8->shuffle10%
}{}
\DeclareMathSymbol\shuffle{\mathbin}{Shuffle}{"001}
\DeclareMathSymbol\cshuffle{\mathbin}{Shuffle}{"002}

\newcommand{\emptyword}{\lambda}

\def\ta{\mathtt{a}}
\def\tb{\mathtt{b}}
\def\tc{\mathtt{c}}
\def\td{\mathtt{d}}
\def\te{\mathtt{e}}
\def\tf{\mathtt{f}}

\def\tn{\mathtt{n}}

\DeclareMathOperator{\alphabet}{alph}

\newcommand{\N}{\mathbb{N}}

\newcommand{\cG}{\mathcal{G}}

\renewcommand{\alph}{\alphabet}

\def\bico{\mathtt{\mathrm{bico}\text{\,-\,}}}

\newcommand{\iffl}{\shortversion{iff}\longversion{if and only if}\xspace}

\longversion{\usepackage[appendix=inline]{apxproof}}
\shortversion{\usepackage[appendix=strip]{apxproof}}
\shortversion{}

\newcommand{\todoscs}[1]{\todo[color=green!50,inline]{#1}}
\newcommand{\todoscsInPlace}[1]{\todo[color=green!50]{#1}}
\newcommand{\todohf}[1]{\todo[color=red!80,inline]{#1}}
\newcommand{\todohfInPlace}[1]{\todo[color=red!80]{#1}}

\shortversion{\renewcommand{\iff}{\Leftrightarrow}}

\newcommand{\trd}{\textsf{td}}

\newcommand{\nd}{\textsf{nd}}

\newif\ifinappendix

\newcommand{\applabel}[1]{
\shortversion{\ifinappendix\hypertarget{app:#1}{}\else
{$(*)$}\fi}
\label{#1}}

\newcommand{\intervals}{\mathcal{I}}

\usepackage{comment}
\usepackage[T1]{fontenc}
\usepackage{graphicx}
\usepackage{xcolor}
\usepackage{afterpage}
\definecolor{LB}{RGB}{188,218,234}
\definecolor{DB}{RGB}{33,121,180}
\definecolor{LR}{RGB}{252,179,179}
\definecolor{DR}{RGB}{233,76,78}
\definecolor{DG}{RGB}{51,159,43}
\definecolor{DO}{RGB}{255,126,0}
\usepackage{enumitem}
\newtheorem{theorem}{Theorem}
\newtheorem{lemma}{Lemma}
\newtheorem{proposition}{Proposition}
\newtheorem{corollary}{Corollary}

\newtheorem{remark}{Remark}
\newtheorem{definition}{Definition}
\newtheorem{example}{Example}

\newtheoremrep{thm}[theorem]{Theorem}
\newtheoremrep{lem}[lemma]{Lemma}
\newtheoremrep{obs}[observation]{Observation}
\newtheoremrep{cor}[corollary]{Corollary}
\newtheoremrep{prp}[proposition]{Proposition}

\title{Language-Representability:\\ Possibilities and Limitations}
\author{Zhidan Feng
\institute{BTU, China} \email{zhidanfeng@bjut.edu.cn}
\and
Pamela Fleischmann \institute{Kiel University, Germany}\email{fpa@informatik.uni-kiel.de}
\and
Henning Fernau \qquad Kevin Mann \qquad Silas Cato Sacher
\institute{Trier University, Germany} \email{\{fernau,mann,sacher\}@uni-trier.de}
}
\def\titlerunning{Language-Representability: Possibilities and Limitations}
\def\authorrunning{Z. Feng, H. Fernau, P. Fleischmann, K. Mann, 
S.~C. Sacher}
\begin{document}
\maketitle
%\begin{document}
%
%\author{\inst{1}\orcidID{0000-0002-3364-5396} \and
%\inst{2}\orcidID{0000-0002-4444-3220} \and
%\inst{3}\orcidID{0000-0002-1531-7970} \and 
%\inst{2}\orcidID{0000-0002-0880-2513} \and
%\inst{2}\orcidID{0009-0004-6850-1298}
%}
%
%\longversion{}
% First names are abbreviated in the running head.
% If there are more than two authors, 'et al.' is used.
%
%\longversion{ %}}
%

\begin{comment}
\author{Zhidan Feng}
{Universit\"at Trier, Fachbereich IV, Informatikwissenschaften, Germany \and Shandong University, School of Mathematics and Statistics, Weihai, China}
{zhidanfeng@mail.sdu.edu.cn}
{THTps://orcid.org/0000-0002-3364-5396}
{}

\author{Henning Fernau}
{Universit\"at Trier, Fachbereich IV, Informatikwissenschaften, Germany \and \url{THTps://www.uni-trier.de/index.php?id=49861}}
{fernau@uni-trier.de}
{THTps://orcid.org/0000-0002-4444-3220}
{}

\author{Pamela Fleischmann}
{Kiel University, Germany}
{fpa@informatik.uni-kiel.de}
{THTps://orcid.org/0000-0002-1531-7970} 
{}

\author{Kevin Mann}
{Universit\"at Trier, Fachbereich IV, Informatikwissenschaften, Germany}
{mann@uni-trier.de}
{THTps://orcid.org/0000-0002-0880-2513} 
{}

\author{Silas Cato Sacher}
{Universit\"at Trier, Fachbereich IV, Informatikwissenschaften, Germany}
{sacher@informatik.uni-trier.de}
{THTps://orcid.org/0009-0004-6850-1298} 
{}
\end{comment}

%\maketitle              % typeset the header of the contribution
%

\begin{abstract}
The study of word-representability was initiated by the seminal work of Kitaev and Pyatkin in 2008 that has later led to the monograph by Kitaev and Lozin in 2015. In this paper, we build on the very recent work by Fernau \emph{et al.} who proposed  a general framework that generalizes certain aspects of word-representability, so that any binary language describes a graph class. In this work, we systematically study particularly small languages and observe that they characterize well-known graph classes, e.g., interval, permutation, circle, and bipartite chain graphs. Thus, we strengthen the bond between formal languages and graph classes, even for small binary languages. We also show some limitations of our approach by proving that, e.g., families of sparse graphs like planar graphs cannot be characterized by any language following this approach. 
%\keywords{Graph classes \and Word-representable graphs \and Language-representability \and Twins in graphs}
\end{abstract}

\setcounter{footnote}{0}

\section{Introduction}

In classical word-representability, a word $w$ represents a graph $G$ iff the graph's vertices are exactly the word's letters and
two letters in $w$ alternate iff they are adjacent in $G$ \cite{KitLoz2015}. This pattern of alternation has been altered since then to overcome the disadvantage that not all graphs are word-representable in the classical way (e.g., $k$-11-representability \cite{CheKKKP2019}, permutational representability \cite{KitSei2008}).
Also, word-representability has been generalized towards so-called multi-word-representability in \cite{KenMal2023,KenSalSas2025}; this approach is different from ours but we also introduce a generalization of both approaches. 
In this paper, we investigate further the recent notion of $L$-representability for a formal language $L$ \cite{FerFMS2026a}. This notion allows for other patterns than alternations; in particular  a binary language $L$ represents a class of graphs \iffl there exists a word $w\in L$
such that the projection of $w$ onto two letters follows the pattern given by~$L$ \iffl the associated vertices are adjacent. With this new notion, we are able to change the pattern, e.g., to the language of palindromes, to Lyndon words, or to $\{0^n1^n,1^n0^n|\,n\geq 1\}$ and other variants of the Dyck language, see \cite{FerFMS2026a}, or to any other language with a certain symmetry property.
Consider for instance the regular language $L_{\text{even}}\coloneqq\{w\in\{0,1\}^{\ast}|\,|w|_0\equiv_2 0 \vee |w|_1\equiv_20\}$, i.e., the language of all words having an even number of $0$s or $1$s. 
Additionally, let $w\in V^{\ast}$ be a word over the alphabet $V$ and $v_1,v_2\in V$. Then $w$ represents the following graph: if $|v_1|_0\equiv_2|v_2|_0\equiv_20$ then there is an edge between $v_1$ and $v_2$; if $|v_1|_0\equiv_2|v_2|_0\equiv_21$ then there is no edge between $v_1$ and $v_2$; lastly every $v$ with $|v|_0\equiv_20$ is connected to every $v'$ with $|v'|_0\equiv_2 1$. Thus, $L_{\text{even}}$ represents graphs where some vertices form a clique, the others are an independent set, and all vertices from the clique are adjacent to all vertices from the independent set - a split graph.

One of the natural syntactic complexity measures \shortversion{w.r.t.}\longversion{with respect to} word-representable graphs is the \emph{representation number} $\mathcal{R}(G)$ of a graph~$G$ which is the smallest $k \geq 1$ such that there is a $k$-uniform word-representing~$G$.
For instance, it is known that a graph is a circle graph \iffl $\mathcal{R}(G)\leq 2$ \cite{EnrKit2019,Kit2017}. This corresponds in our setting to $L=\{01,10,010,101,0101,1010\}$ characterizing circle graphs.
Thus, studying languages of limited word size already gives insights into graph classes in our new setting. Unfortunately, not all graph classes are representable in this way, but this downside is compensated by the fact that we can represent or exclude whole graph classes as, e.g., each representable graph class is closed under \emph{twinning}, i.e., all classes not closed under twinning cannot be representable in our setting and thus can be excluded immediately without any further proof.
This excludes all sparse graph classes  (like planar graphs) and complements thereof.

%\todo[inline]{Is this true?? Think of complementation!}

%\todo[inline]{You can cite \cite{FerFMS2026a,FerFMS2026} as you wish for the DLT project.}

%Conversely, it is important to study the limitations of a new approach. Can we represent each interesting (hereditary) graph class %by a language? We will see that this is not the case, basically because all representable graph classes have a particular \emph{twinning} closure property that excludes, for instance, characterizing the class of planar graphs, or the class of graphs of treewidth at most~3, etc.

%Presenting these two extremes of language-representability should help put this framework into a proper research context. Of course, many questions remain to be investigated, including possible applications.

\section{Preliminaries}

Let $\N$ denote the natural numbers including $0$, let  $[n]=\{i\in\N \mid 1\leq i\leq n\}$ and  $\N_{\geq n}=\{k\in \N\mid k\geq n\}$ for some $n\in\N$.
%The cardinality of a set $X$ is denoted by $|X|$. 
For a set $X$ and $k\in\N$, define $\binom{X}{k}=\{Y\subseteq X\mid |Y|=k\}$. 

\paragraph*{Language Theory and Combinatorics on Words.}
Each non-empty, finite set $\Sigma$ is an {\em alphabet} whose elements are called {\em letters}.
 A \emph{word} over $\Sigma$ is a finite sequence  of letters from~$\Sigma$; $\Sigma^{\ast}$ denotes the set of all words over~$\Sigma$, including the \emph{empty word}~$\emptyword$. 
Each \longversion{subset $L$ of $\Sigma^{\ast}$}\shortversion{$L\subseteq\Sigma^*$} is called a \emph{language} over~$\Sigma$. Let $\Sigma^+=\Sigma^{\ast}\setminus\{\emptyword\}$.  The {\em length of a word $w$}\longversion{, i.e., the number of its letters,} is denoted by $|w|$. 
For $w\in\Sigma^{\ast}$ and for all $i\in[|w|]$, $w[i]$ denotes the $i^{\text{th}}$ letter of~$w$.
The number of occurrences of $\ta\in\Sigma$ in $w\in\Sigma^*$ is defined as $|w|_{\ta}=|\{i\in[|w|]\mid w[i]=\ta\}|$ and $w$'s alphabet is given by $\alphabet(w)\longversion{=\{\ta\in\Sigma\mid\exists i\in[|w|]:\,w[i]=\ta\}}$. Define $F(w)=\{|w|_a\mid a\in \Sigma\}$ as the 
{\em frequentnesses}.
A word $w\in\Sigma^{\ast}$ is called \emph{$k$-uniform} for some $k\in\N$ if $F(w)=\{k\}$\longversion{, i.e., $|w|_{\ta}=k$ for all $\ta\in\Sigma$}. 
Let $\Sigma^{k\text{-uni}}\coloneqq\{w\in \Sigma^*\mid F(w)=\{k\}\}$ \longversion{collect}\shortversion{be} all \emph{$k$-uniform words} and $\Sigma^{k,\ell\text{-uni}}\coloneqq  \{w\in \Sigma^*\mid F(w)=\{k,\ell\}\}$
all \emph{$(k,\ell)$-uniform words}, i.e., $\Sigma^{k\text{-uni}}=\Sigma^{k,k\text{-uni}}$.
Define the \emph{reversal} $w^R$ of $w\in\Sigma^{\ast}$ by $w^R=w[|w|]\cdot w[|w|-1]\cdots w[1]$.
%;  $w$ is a \emph{palindrome} if $w=w^R$ holds.
%The word $w\in\Sigma^{\ast}$ is called a \emph{repetition} if there exist $u\in\Sigma^{\ast}$ and $k\in\N_{\geq 2}$ such that $w=u^k$ where $u^0=\emptyword$ and $u^k=uu^{k-1}$. Repetitions with $k=2$ are called \emph{copy-words} and words that are not a repetition are called {\em primitive}. \longversion{In addition to the concatenation of words, we need}\shortversion{We also use} the {\em shuffle product}
%to combine two words: for $u,v\in\Sigma^*$, define the {\em shuffle} by $u\shuffle v \coloneqq \{x_1y_1x_2y_2\cdots x_ny_n\mid \exists x_1,\dots,x_n, y_1,\dots,y_n\in\Sigma^*: u=x_1\cdots x_n\land v=y_1\cdots y_n\}$.
%\longversion{As an example, consider the words $\mathtt{ml}$ and $\mathtt{eon}$; we can take the $\mathtt{m}$ from the first word, then the $\mathtt{eo}$ from the second, $\mathtt{l}$ from the first, and finally $\mathtt{n}$ from the second to get $\mathtt{meoln}$ as one word of the shuffle product. But also $\mathtt{melon}$, $\mathtt{emlon}$, and $\mathtt{mleon}$ are examples for words from the shuffle-product of these two words.}
A mapping $f$ from the free monoid $\Sigma^{\ast}$ into another monoid is called a {\em morphism} if
$f(xy)=f(x)f(y)$ holds for all $x,y\in\Sigma^{\ast}$. Note that a morphism is uniquely defined by giving the images of all letters. For $A \subseteq \Sigma$,  the \emph{projective morphism} 
$h_A: \Sigma^* \rightarrow A^*$ is defined by  $h_A(\ta) = \ta$ for $\ta \in A$ and $h_A(\ta) = \emptyword$ otherwise. \longversion{Thus, we have $h_{\ta}(\mathtt{banana})=\ta\ta\ta$.} Let $\widetilde{\cdot}:\{0,1\}^*\rightarrow\{0,1\}^*$ be the \emph{complement morphism} mapping $0$ to $1$ and $1$ to $0$\longversion{. Here we have for instance $\widetilde{010}=101$. Note that the complement morphism is an involution, i.e.,}\shortversion{ which is an involution, i.e., } $\widetilde{\widetilde{w}}=w$.
\longversion{
\bigskip

We finish this section with the main definitions from formal language theory.}
%\shortversion{.}\longversion{, e.g., the set of all Lyndon words is a language.} 
%We extend the concatenation to languages, so that we can define powers of a language~$L$ and the \emph{Kleene star} of $L$ as $L^*=\bigcup_{n\in\N}L^n$ (analogously, we define other operations on words for entire languages). 
We define the \emph{symmetric hull} operator $\langle L\rangle\coloneqq L\cup \widetilde{L}$.   A language $L\subseteq \{0,1\}^{\ast}$ is called {\em $0$-$1$-symmetric} if $L=\widetilde{L}$ (i.e., $L=\langle L \rangle$). For instance, the language $L_{\classical} =(1\cup\emptyword)(01)^*(0\cup\emptyword)$ is $0$-$1$-symmetric. This also holds for its complement $\overline{L_{\classical}}$ where $\overline{L}$ denotes $\Sigma^{\ast}\backslash L$ for some language $L\subseteq\Sigma^{\ast}$. Let $\mathrm{freq}(L)=\{n\in\mathbb{N}_{\geq 1} \mid \exists w\in L: |w|_0 = n\}$ denote the set of frequentnesses in a $0$-$1$-symmetric language~$L\subseteq \{0,1\}^{\ast}$.
%Identifying singleton sets with their elements, we can build \emph{regular expressions} from letters by using concatenation $\cdot$, union $\cup$, and Kleene star. We also include set complementation~$\bar\cdot$ and the \emph{shuffle}~$\shuffle$ when building such expressions. 

%\medskip

\paragraph*{Graph Theory.} 
\longversion{Throughout this paper, w}\shortversion{W}e only consider undirected finite graphs. A graph~$G$ is a pair $(V,E)$ with the finite, non-empty set of vertices $V$ and the set of edges $E\subseteq\binom{V}{2}$. For a given graph~$G$, let $V(G)$ and $E(G)$ denote its sets of vertices and edges, respectively.
%The cardinality of $V$ is called the \emph{order} of~$G$. 
If $\{u,v\}\in E$, $u$ is called a \emph{neighbor} of~$v$, and $N(v)\subseteq V$ are the neighbors of~$v$; $N[v]\coloneqq N(v)\cup\{v\}$ is the \emph{closed neighborhood} of~$v$. 
Two vertices $u,v$ are called \emph{true} resp. \emph{false} \emph{twins} if $N[u]=N[v]$ resp. $N(u)=N(v)$.
A vertex in a graph is called \emph{universal} if $N[v]=V$ and {\em isolated} if $N(v)=\emptyset$.
% neighbor and neighborhood are only used to define degree and universal and isolated vertices
The {\em degree} of \longversion{a vertex }$v\in V$ is\longversion{ defined as} $\deg(v)\coloneqq|N(v)|$. 
% degree is used
The \emph{complement} of the graph $G$, written $\overline{G}$, satisfies $\overline{G}=(V,\binom{V}{2}\setminus E)$. For a graph class $\mathcal{G}$, $\text{co-}\mathcal{G} \coloneqq \{ \overline{G} \mid G \in \mathcal{G}\}$. 
% \text{co-}\mathcal{G} is used
If $G_1=(V_1,E_1)$ and $G_2=(V_2,E_2)$ are graphs, then $\varphi:V_1\to V_2$ 
is a \emph{graph morphism} iff $\{u,v\}\in E_1$ implies $\{\varphi(u),\varphi(v)\}\in E_2$; a bijection $\varphi$ 
is a \emph{graph isomorphism} iff $\{u,v\}\in E_1\iff \{\varphi(u),\varphi(v)\}\in E_2$;\longversion{ if such an isomorphism exists,} then we write $G_1\simeq G_2$.
%graph isomorphism and \simeq are used
A graph $G_1=(V_1,E_1)$ is a \emph{subgraph} of $G_2=(V_2,E_2)$ iff $V_1\subseteq V_2$ and $E_1\subseteq E_2$. $G_1$ is an  \emph{induced subgraph} of $G_2$ ($G_1=G_2[V_1]$) iff $E_1=E_2\cap \binom{V_1}{2}$. If $G_1=(V_1,E_1)$ and $G_2=(V_2,E_2)$ are two graphs with (often, but not necessarily) disjoint sets of vertices $V_1$ and $V_2$, then the (graph) \emph{union} of $G_1$ and $G_2$ is $G_1\cup G_2=(V_1\cup V_2,E_1\cup E_2)$ while the (graph) \emph{join}  of $G_1$ and $G_2$ is
$G_1\nabla G_2=(V_1\cup V_2,E_1\cup E_2\cup \{\{x_1,x_2\}\mid x_1\in V_1, x_2\in V_2\})$.
%subgraph is not used
Graph classes closed under taking induced subgraphs are called \emph{hereditary.}

We call a graph with $n\in\N$ vertices and an empty edge set a {\em null graph}, denoted by $N_n$, and a graph with $n$ vertices and all possible edges a {\em complete graph}, denoted by $K_n$, i.e. $\overline{N_n}=K_n$.
%complete graphs are used
A graph with $n+m$ vertices, with $n,m\in\mathbb{N}_{\geq 1}$ is called a \emph{complete bipartite graph} $K_{n,m}=(V,E)$ if $V=U\cup W$, $U\cap W=\emptyset$, with $|U|=n$ and $|W|=m$ and $E$ contains all edges between $U$ and $W$ but no other edges. 
% complete bipartite graphs are used
%A graph  with $n\in\N$ vertices is a \emph{path} $P_n=(V,E)$ if there is a linear ordering~$<$ on its vertex set such that $\{u,v\}\in E$ if either $u$ is the immediate predecessor of~$v$ in~$<$ or $u$ is the immediate successor of~$v$ in~$<$.
%A graph  with $n\in\N$ vertices is a \emph{cycle} $C_n=(V,E)$ if it contains a path $P_n=(V,E')$ and one additional edge~$e$ that connects the two vertices of this path that have degree one.
%universal vertices are not used
%isoladed vertices are used
% graph union is used 
%while the (graph) \emph{join}  of $G_1$ and $G_2$ is
%$G_1\nabla G_2=(V_1\cup V_2,E_1\cup E_2\cup \{\{x_1,x_2\}\mid x_1\in V_1, x_2\in V_2\})$.
%\longversion{We can describe many graph classes by our framework whose definitions are explained in}\shortversion{For the definition of many graph classes, we refer to} \url{graphclasses.org}.%\todohfInPlace{Here are the graph class definitions hidden.}
Let $G=(V,E)$ be a graph with a partition $V=A\cup B$. If $\binom{A}{2}\cap E=\emptyset$ and $\binom{B}{2}\cap E=\emptyset$, then $G$ is called \emph{bipartite}.  If $\binom{A}{2}\cap E=\emptyset$ and $\binom{B}{2}\cap E=\binom{B}{2}$, then $G$ is called a \emph{split graph}. If $\binom{A}{2}\cap E=\binom{A}{2}$ and $\binom{B}{2}\cap E=\binom{B}{2}$, then $G$ is called \emph{cobipartite}. %$k$-colorability generalizes bipartiteness by partitioning the vertex set into $k$ independent sets instead of two.
%If there exists a partial order on the vertex set~$V$ of  $G=(V,E)$ such that $u,v\in V$ are incomparable if and only if $\{u,v\}\in E$, then $G$ is a \emph{comparability graph}.
Many graphs and graph classes can be obtained by associating geometric objects to vertices and having an edge between two vertices \iffl the corresponding objects intersect, or are disjoint, or one is contained in the other, or they overlap (i.e., they intersect but we see no containment). Considering intervals on the real line as objects, intersection yields the class of \emph{interval graphs}, disjointness gives the \emph{co-interval graphs}, containment defines the \emph{permutation graphs}, and overlap gives the \emph{circle graphs} that can be alternatively described by an intersection model of chords of a cycle. \emph{Cographs} form the smallest class of graphs containing $K_1$ and being closed under disjoint graph union and graph join. \emph{Threshold graphs} are cographs that are also split graphs. This class has many characterizations, see \cite{Gol2004e}. 

Let $G=(V,E)$ be a bipartite graph with the partition classes $A,B \subseteq V$.  Then the \emph{bipartite complement graph} of $G$ with respect to the partition $V = A \cup B$ is $(V, E')$ with $E' = \{ \{ a, b \} \mid a \in A, b \in B \} \setminus E$. \longversion{This operation was studied in~\cite{AleKLZ2020} and is denoted by $\tilde{G}$ there. }For a graph class $\cG$ of bipartite graphs, $\bico\cG$  is the class of bipartite complement graphs of graphs $G \in \cG$ with respect to any partition of $V(G)$ into two disjoint, independent sets.  The bipartite graph~$G$ is called \emph{convex} if there exists a linear ordering $<_A$ on~$A$ such that for all $b\in B$, $N(b)$ is consecutively ordered with respect to $ <_A$, i.e., there is no $a \in A\setminus N(b)$ such that there are $a_1,a_2\in N(b)$ with $a_1<_A a <_A a_2$. Also see \cite{Tuc72}.
 $G$ is called a \emph{bipartite chain graph} if there is a linear ordering $\leq_A$ on $A$ such that for each pair $a_1,a_2 \in A$, $a_1 \leq_A a_2$ implies $N(a_1) \subseteq N(a_2)$. This leads to an linear ordering $\leq_B$ on $B$ such that $b_1\leq_B b_2$ implies $N(b_1)\supseteq N(b_2)$ for all $b_1,b_2\in B$. For more information on this graph class, we refer to~\cite{Yan82}.
By definition, every bipartite chain graph is convex. 

We refrain from giving formal definitions of further notions from graph theory that we only mention in passing, like treewidth, treedepth, or degeneracy.\longversion{ Also the notion of planar graphs is quite intuitive but lengthy do define in a formal way.} We refer to textbooks on graph theory and, concerning graph classes, to~\cite{isgci}.
\shortversion{Due to space limitations, we omitted some proofs and mark this by~$(*)$.}

\paragraph*{Our Main Notions Based On \cite{FerFMS2026a}.}

We start with the definition, which allows us to compare words over an alphabet of a graph's nodes with a binary language. This morphism $h_{\ta,\tb}$ can be viewed as the composition of the projection $h_{\{u,v\}}$ and the renaming isomorphism $\iota:\{u,v\}^*\to\{0,1\}^*$ with $u\mapsto 0$ and $v\mapsto 1$
for some given nodes $u,v$. For example, if  $\ta,\tb,\tn\in V$ and $\mathtt{banana}\in V^{\ast}$, we get $h_{\ta,\tb}(\mathtt{banana})=1000$. If we have $0$-$1$-symmetric languages, it is irrelevant whether $u$ or $v$ is mapped to $0$ or $1$, respectively.

\begin{definition}
Let $V$ be an alphabet and fix $u,v\in V$ with $u\neq v$. Define $h_{u,v}:V^{\ast}\rightarrow\{0,1\}^{\ast}$ by $u\mapsto 0$, $v\mapsto 1$ and $x\mapsto \emptyword$ for all $x\in V\setminus\{u,v\}$.
\end{definition}

\noindent
Based on $h_{u,v}$, we introduce the notion of $L$-representability of graphs and of classes of graphs.
\begin{definition}

Let $L\subseteq\{0,1\}^*$ be $0$-$1$-symmetric. Let $w\in V^*$ with $V=\alphabet(w)$.
We call a graph $G=(V,E)$ \emph{$L$-represented by $w\in V^{\ast}$} iff, for all $u,v\in V$ with $u \neq v$, we have
\(\{u,v\}\in E \Leftrightarrow h_{u,v}(w)\in L. \)
A graph  $G=(V,E)$  is \emph{$L$-representable} if it can be $L$-represented by some word $w\in V^*$. 
Let $G(L,w)$ denote the graph $G$ that is $L$-represented by $w\in\Sigma^{\ast}$ and let $\cG_L=\{G(L,w)\mid w\in\Sigma^*, \Sigma=\alphabet(w)\}$ be the graph class represented by $L$.
\end{definition}

As $0$-$1$-symmetry is essential for $G(L,w)$ and $\mathcal{G}_L$ being well-defined, we will assume this condition for any binary language used for defining graphs and graph classes in the following without always mentioning it.
%We will also say that, e.g., $C_4$ is $L$-represented by some word $w\in V^*$ (with $|V|=4$), referring to some abstract graph.
For some immediate insights we refer to \cite{FerFMS2026a}.
 \longversion{For instance, t}\shortversion{T}he language $L_{\classical} \coloneqq(1\cup\emptyword)(01)^*(0\cup\emptyword) \cup (0\cup\emptyword)(10)^*(1\cup\emptyword)$ modeling classical word representability is $0$-$1$-symmetric\longversion{. This also holds for}\shortversion{ as is} its complement $\overline{L_{\classical}}$\longversion{ where $\overline{L}$ denotes $\Sigma^{\ast}\backslash L$ for some language $L\subseteq\Sigma^{\ast}$}.

\begin{example}
     \label{exa:represented-graphs-NullGraphs}
    %\todo{item now referenced later} 
    If %$L \subseteq \{\emptyword\}$, 
    $L\subseteq 0^*\cup 1^*$,
    then $\cG_L=\{N_n\mid n\in\N_{\geq 1}\}$. In order to represent $N_n=([n],\emptyset)$, take the word $w_n\coloneqq12\cdots n$\longversion{, or any other word that contains each letter from $[n]$ at least once}: for any pair of vertices $i< j$, we have $h_{i,j}(w_n)=01\notin L$.
\end{example}

\begin{example}
\label{exa:represented-graphs-CompleteUnionNullGraphs} %\todo{referenced in \autoref{tab:graphClasses} and in the proof of \autoref{cor:noSubsetInclusionImplied_1}} 
   If $L=\{01,10\}=\{0,1\}^{1\text{-uni}}$, then $\cG_L=\{K_n\cup N_m\mid n,m\in\N\}$. In order to represent $K_n\cup N_m$ as $\left([n+m],\binom{[n]}{2}\right)$, take the word $w_{n,m}\coloneqq w_n\cdot (n+1)\cdot (n+1)\cdots (n+m)(n+m)$, or any other word that contains each letter from $[n]$ exactly once and each other letter at least twice.  Clearly, for any pair of vertices $i,j\in [n]$ $i< j$, $h_{ij}(w_{n,m})=01\in L$, while for any pair $i,j\in [n+m]$, $i<j$ and $j>n$, $h_{ij}(w_{n,m})=011\notin L$ or $h_{i,j}(w_{n,m})=0011\notin L$. The argument can be generalized to any $L=\{0,1\}^{k\text{-uni}}$. 
\end{example}

\begin{example}
\label{exa:represented-graphs-CompleteBipartiteUnionNullGraphs} 
   %\todo{referenced above \autoref{prop:bipartite} and in \autoref{tab:graphClasses}} 
   If $L= \{0,1\}^{1,2\text{-uni}}$,  
    then  $\cG_L=\{K_{n,m}\cup N_\ell\mid n,m,\ell\in\N\}$. 
    \longversion{Namely, t}\shortversion{T}ake $w_{n,m,\ell}\coloneqq w_{n,m}\cdot\mbox{$(n+m+1)^3$}\linebreak[3]\cdots\linebreak[3] (n+m+\ell)^3$. Observe\shortversion{:}\longversion{ that}  $h_{i,j}(w_{n,m,\ell})=011\in L$ \iffl $1\leq i\leq n$ and $n<j\leq n+m$. Th\longversion{e argument can be generalized}\shortversion{is generalizes} to any $L=\{0,1\}^{k,\ell\text{-uni}}$ if $k\neq\ell$. 
    \end{example}

\begin{example}\label{exa:represented-graphs-classical}
%\paragraph*{Uniform and Nearly Uniform Words}
\longversion{Observe that} $\cG_{L_{\classical}}$ is the class of word-representable graphs\longversion{ (in the classical sense)}.
We can  define the finite language $L_{\text{cl-$k$-uni}}\coloneqq L_{\text{cl}}\cap \{0,1\}^{k\text{-uni}}$ and hence the class $\mathcal{G}_{L_{\text{cl-$k$-uni}}}$ of \emph{$k$-uniformly word-representable graphs} from the literature.
%Note that $L_{k\text{-uni}}\subseteq\{0,1\}^{2k}$ is a finite language.
\longversion{This is one of the motivations for us to pay special attention to graphs that are $L$-representable by finite~$L$.} Even these simple languages can possess interesting characterizations, e.g.,  by \cite[Thm. 5.1.7]{KitLoz2015}, $\cG_{\langle 0101\rangle}$ is the class of circle graphs as $\langle 0101\rangle = L_{\text{cl-2-uni}}$. 
Reinterpreting~\cite{CheKKKP2019}, the family of $k$-11-representable graphs 
can be described by $L_{k\text{-}11}$, the\longversion{ language of} binary words containing each factor $00$ and $11$ at most $k$ times, e.g.,  $L_{\text{cl}}=L_{0\text{-}11}$. Note that $\cG_{L_{2\text{-}11}}$ contains all graphs and $\cG_{L_{1\text{-}11}\cap \{0,1\}^{2\text{-uni}}}$ is the class of interval graphs~\cite{CheKKKP2019}.
\end{example}

%\begin{lemrep}\applabel{lem:clique-representation}
%Let $V$ be an alphabet and $w\in V^*$. If $0^k\shuffle 1^\ell\subseteq L\subseteq\{0,1\}^*$ for all $k,\ell\in F(w)$, then %$K_{|V|}$ is $L$-represented by~$w$.
%\end{lemrep}

%\begin{proof}
%Consider $G=(V,E)=G(L,w)$. Let $u,v\in V$ be two arbitrary letters. By definition, $|w|_u,|w|_v\in F(w)$. Hence, $h_{u,v}(w)\in 0^{|w|_u}\shuffle 1^{|w|_v}\subseteq L$, i.e., $\{u,v\}\in E$. Therefore, $G$ is a complete graph.
%\end{proof}

\paragraph*{One Example Graph - Represented By Different Words For Different Languages.}
%\todo[inline]{- example for an easy graph ($C_4$ with an extra edge) that can be represented with different finite languages [Kevin]}
\begin{figure}[bt]
\centering
\begin{minipage}[bt]{0.35\textwidth}\centering
 \scalebox{1}{\begin{tikzpicture}
    \tikzset{every node/.style={fill = white,circle,minimum size=0.05cm}}
            \node[draw] (x1) at (0,1) {$1$};
            \node[draw] (x2) at (1,1) {$2$};
            \node[draw] (x3) at (1,0) {$3$};
            \node[draw] (x4) at (0,0) {$4$};
    
            \path (x1) edge[-] (x2);
            \path (x2) edge[-] (x3);
            \path (x3) edge[-] (x4);
            \path (x4) edge[-] (x1);
            \path (x4) edge[-] (x2);
    \end{tikzpicture}} 
%\caption{Graph classes represented by certain length-uniform and nearly length-uniform languages} \label{tab:graphClasses}
\end{minipage}
\begin{minipage}[t]{0.62\textwidth}
\begin{tabular}[width=.5\textwidth]{| l |  l ||l|l|}
\hline 
$L$ & $w$ &$L$ & $w$ \\
\hline 
% $L_{\neg{1^3}}$  \scalebox{.68}{\cite[Thm. 4]{JonKPR2015}} & $11241 1234 1234$ 
$\langle 0110 \rangle$ \negthinspace  & $24131342$ &$\langle 0101, 0110 \rangle$ \negthinspace  & $12413423$ \\
%$L_{1112}$  \scalebox{.68}{\cite[Thm. 3.12]{GaeJi2020}} & $221324124311241312341234$ \\
%$L_{2\text{-}11}$  \scalebox{.68}{\cite[Thm. 5.1]{CheKKKP2019}}  & $14323412413224314231243143124312$\\
%\hline
%$L_{\cP}$ \scalebox{.68}{(Thm.~\ref{thm:palindromes-get-all})} & $4312112314$  \\
%$L_{\mathcal{C}}$ \scalebox{.68}{(Thm.~\ref{thm:copy-words-get-all})}  & $12134 12314$ \\
%$L_{\mathcal{L}}$ \scalebox{.68}{(Thm.~\ref{thm:Lyndon-words-get-all})}\negthinspace  &$111222333444 1234 1234 1124 113 234 234 2234 22 34 34 334 33 44 44 44$ \\
%$L_{\mathcal{D}}$ \scalebox{.68}{(Thm.~\ref{thm:dyck_comparability})}\negthinspace  & $3124132413241324$ \\
$\langle 0011 \rangle$ \negthinspace  & $22133144$ 
&$\langle 0011, 0110 \rangle$ \negthinspace  & $13132442$ \\
$\langle 0101 \rangle$ \negthinspace  & $13243124$ 
&$\langle 0011, 0101 \rangle$ \negthinspace  & $13312424$ \\
%$\langle 0110 \rangle$ \negthinspace  & $24131342$ &&\\
%$\langle 0101, 0110 \rangle$ \negthinspace  & $12413423$ \\
%$\langle 0011, 0110 \rangle$ \negthinspace  & $13132442$ \\
%$\langle 0011, 0101 \rangle$ \negthinspace  & $13312424$ \\
%\hline $L_{\mathcal{D}}$ & $12341234132412341234$ \\
\hline
\end{tabular}

    \end{minipage}
    \caption{Language $L$ represents $K_4-e$ (shown on the left) with the word $w$. }
    \label{fig:languages-K4-e}
\end{figure}

\autoref{fig:languages-K4-e} shows how the word representation for the same graph can differ when the language changes. 
%The languages $L_{\cP}$, $L_{\mathcal{C}}$, %$L_{\mathcal{L}}$,  
%and $L_{\mathcal{D}}$ refer to the symmetric hulls of palindromes, copy-languages, %Lyndon word,
%and Dyck word languages. For the first four languages, we know that $K_4-e$ (the complete graph on four nodes minus one edge) can be represented by shorter words; the given words are built with ideas from subsequent proofs. 
%Observe that especially $L_{\mathcal{L}}$ results in a very long word even when the length is (in general) in $\mathcal{O}(\vert V\vert^2)$.\todo{The comment on Lyndon words is a bit distracting; should we simply omit these here for the sake of simplification?} 
%Different from $L_{\overline{1^3}}, L_{\cP},$ and $ L_\mathcal{C}$ where we add something  for missing edges, the approach for $L_{\mathcal{L}}$ adds something if an edge exists. Since the given graph has many edges in comparison to its order, this results in a longer word. Furthermore, the approach needs a prefix of length $\vert V \vert^3$ before stating with the actual description of edges. This is more than for the other ones: sometimes such a prefix is not needed.    
Notice that in general, it is not clear, given a binary language~$L$ and some graph~$G$, how to determine an $L$-representation of~$G$ or how to decide if such a representation exists at all.
\begin{toappendix}
We are aware that it is a uniform version of the recognition problem of the graph class~$\cG_L$ to determine an $L$-representation of a graph $G$ for some given binary language $L$ and some graph $G\in\cG_L$. There are literally hundreds of papers dealing with this problem for different graph classes, indicating that we cannot expect a general answer, also because the language~$L$ might be arbitrarily complicated from a computational perspective. However, this is a natural line of research for our framework. For some language-representable graph classes, in particular those considered in the following subsection, it is known that recognition is possible in polynomial time. 
%However, it is an \NP-complete problem to recognize whether a given graph is word-representable \cite{KitLoz2015}.
Recognition is \NP-complete for word-representable graphs; see~\cite{HalKitPya2016,KitLoz2015}.
\end{toappendix}

%We will call $L\subseteq\{0,1\}^*$ \emph{length-uniform} if there is some $k\in\N$ such that $L\subseteq 0^k\shuffle 1^k$. Then, we also say that $L$ is \emph{$k$-uniform}. Similarly, $L\subseteq\{0,1\}^*$ is called \emph{nearly length-uniform} if there two different numbers $k,\ell\in \N_{\geq 1}$ such that $L\subseteq (0^k\shuffle 1^\ell)\cup (0^\ell\shuffle 1^k)$. If $k<\ell$, we also say that $L$ is \emph{$(k,\ell)$-uniform}.  Nearly by definition, $L_{k\text{-uni-}}$ is $k$-uniform. \longversion{Moreover, $\langle 001,010,011\rangle$ is $(1,2)$-uniform.}} 

\begin{figure}
\centering
\begin{minipage}[bt]{0.3\textwidth}
    $\langle 0011 \rangle$: \\
    \centering
    \begin{tikzpicture}
    \tikzset{every node/.style={fill = white,circle,minimum size=0.05cm}}
            \node[draw] (a) at (0,1.5) {$\ta$};
            \node[draw] (b) at (1,2) {$\tb$};
            \node[draw] (c) at (2,1.5) {$\tc$};
            \node[draw] (d) at (2,0.5) {$\td$};
            \node[draw] (e) at (1,0) {$\te$};
            \node[draw] (f) at (0,0.5) {$\tf$};
    
            \path (a) edge[-,color=red,line width=1.5pt] (f);
            \path (b) edge[-,color=red,line width=1.5pt] (f);
            \path (c) edge[-,color=red,line width=1.5pt] (f);
            \path (d) edge[-,color=red,line width=1.5pt] (f);
    \end{tikzpicture}
\end{minipage}
\begin{minipage}[bt]{0.3\textwidth}
    $\langle 0101 \rangle$: \\
    \centering
    \begin{tikzpicture}
    \tikzset{every node/.style={fill = white,circle,minimum size=0.05cm}}
            \node[draw] (a) at (0,1.5) {$\ta$};
            \node[draw] (b) at (1,2) {$\tb$};
            \node[draw] (c) at (2,1.5) {$\tc$};
            \node[draw] (d) at (2,0.5) {$\td$};
            \node[draw] (e) at (1,0) {$\te$};
            \node[draw] (f) at (0,0.5) {$\tf$};
    
            \path (a) edge[-,color=blue,dashed,line width=1.8pt] (b);
            \path (a) edge[-,color=blue,dashed,line width=1.8pt] (d);
            \path (a) edge[-,color=blue,dashed,line width=1.8pt] (e);
            \path (c) edge[-,color=blue,dashed,line width=1.8pt] (d);
    \end{tikzpicture}
\end{minipage}
\begin{minipage}[bt]{0.3\textwidth}
    $\langle 0110 \rangle$: \\
    \centering
    \begin{tikzpicture}
    \tikzset{every node/.style={fill = white,circle,minimum size=0.05cm}}
            \node[draw] (a) at (0,1.5) {$\ta$};
            \node[draw] (b) at (1,2) {$\tb$};
            \node[draw] (c) at (2,1.5) {$\tc$};
            \node[draw] (d) at (2,0.5) {$\td$};
            \node[draw] (e) at (1,0) {$\te$};
            \node[draw] (f) at (0,0.5) {$\tf$};
    
            \path (a) edge[-,color=cyan,dotted,line width=2pt] (c);
            \path (b) edge[-,color=cyan, dotted,line width=2pt] (c);
            \path (b) edge[-,color=cyan, dotted,line width=2pt] (d);
            \path (b) edge[-,color=cyan, dotted,line width=2pt] (e);
            \path (c) edge[-,color=cyan, dotted,line width=2pt] (e);
            \path (d) edge[-,color=cyan, dotted,line width=2pt] (e);
            \path (e) edge[-,color=cyan, dotted,line width=2pt] (f);
    \end{tikzpicture}
\end{minipage}
\begin{minipage}[bt]{0.3\textwidth}
    $\langle 0011,0101 \rangle$: \\
    \centering
    \begin{tikzpicture}
    \tikzset{every node/.style={fill = white,circle,minimum size=0.05cm}}
            \node[draw] (a) at (0,1.5) {$\ta$};
            \node[draw] (b) at (1,2) {$\tb$};
            \node[draw] (c) at (2,1.5) {$\tc$};
            \node[draw] (d) at (2,0.5) {$\td$};
            \node[draw] (e) at (1,0) {$\te$};
            \node[draw] (f) at (0,0.5) {$\tf$};

            \path (a) edge[-,color=red,line width=1.5pt] (f);
            \path (b) edge[-,color=red,line width=1.5pt] (f);
            \path (c) edge[-,color=red,line width=1.5pt] (f);
            \path (d) edge[-,color=red,line width=1.5pt] (f);
            \path (a) edge[-,color=blue,dashed,line width=1.8pt] (b);
            \path (a) edge[-,color=blue,dashed,line width=1.8pt] (d);
            \path (a) edge[-,color=blue,dashed,line width=1.8pt] (e);
            \path (c) edge[-,color=blue,dashed,line width=1.8pt] (d);
    \end{tikzpicture}
\end{minipage}
\begin{minipage}[bt]{0.3\textwidth}
    $\langle 0011,0110 \rangle$: \\
    \centering
    \begin{tikzpicture}
    \tikzset{every node/.style={fill = white,circle,minimum size=0.05cm}}
            \node[draw] (a) at (0,1.5) {$\ta$};
            \node[draw] (b) at (1,2) {$\tb$};
            \node[draw] (c) at (2,1.5) {$\tc$};
            \node[draw] (d) at (2,0.5) {$\td$};
            \node[draw] (e) at (1,0) {$\te$};
            \node[draw] (f) at (0,0.5) {$\tf$};

            \path (a) edge[-,color=red,line width=1.5pt] (f);
            \path (b) edge[-,color=red,line width=1.5pt] (f);
            \path (c) edge[-,color=red,line width=1.5pt] (f);
            \path (d) edge[-,color=red,line width=1.5pt] (f);
            \path (a) edge[-,color=cyan,dotted,line width=2pt] (c);
            \path (b) edge[-,color=cyan, dotted,line width=2pt] (c);
            \path (b) edge[-,color=cyan, dotted,line width=2pt] (d);
            \path (b) edge[-,color=cyan, dotted,line width=2pt] (e);
            \path (c) edge[-,color=cyan, dotted,line width=2pt] (e);
            \path (d) edge[-,color=cyan, dotted,line width=2pt] (e);
            \path (e) edge[-,color=cyan, dotted,line width=2pt] (f);
    \end{tikzpicture}
\end{minipage}
\begin{minipage}[bt]{0.3\textwidth}
    $\langle 0101,0110 \rangle$: \\
    \centering
    \begin{tikzpicture}
    \tikzset{every node/.style={fill = white,circle,minimum size=0.05cm}}
            \node[draw] (a) at (0,1.5) {$\ta$};
            \node[draw] (b) at (1,2) {$\tb$};
            \node[draw] (c) at (2,1.5) {$\tc$};
            \node[draw] (d) at (2,0.5) {$\td$};
            \node[draw] (e) at (1,0) {$\te$};
            \node[draw] (f) at (0,0.5) {$\tf$};

            \path (a) edge[-,color=blue,dashed,line width=1.8pt] (b);
            \path (a) edge[-,color=blue,dashed,line width=1.8pt] (d);
            \path (a) edge[-,color=blue,dashed,line width=1.8pt] (e);
            \path (c) edge[-,color=blue,dashed,line width=1.8pt] (d);
            \path (a) edge[-,color=cyan,dotted,line width=2pt] (c);
            \path (b) edge[-,color=cyan, dotted,line width=2pt] (c);
            \path (b) edge[-,color=cyan, dotted,line width=2pt] (d);
            \path (b) edge[-,color=cyan, dotted,line width=2pt] (e);
            \path (c) edge[-,color=cyan, dotted,line width=2pt] (e);
            \path (d) edge[-,color=cyan, dotted,line width=2pt] (e);
            \path (e) edge[-,color=cyan, dotted,line width=2pt] (f);
    \end{tikzpicture}
\end{minipage}
\begin{minipage}[bt]{0.3\textwidth}
    $\emptyset$: \\
    \centering
    \begin{tikzpicture}
    \tikzset{every node/.style={fill = white,circle,minimum size=0.05cm}}
            \node[draw] (a) at (0,1.5) {$\ta$};
            \node[draw] (b) at (1,2) {$\tb$};
            \node[draw] (c) at (2,1.5) {$\tc$};
            \node[draw] (d) at (2,0.5) {$\td$};
            \node[draw] (e) at (1,0) {$\te$};
            \node[draw] (f) at (0,0.5) {$\tf$};

    \end{tikzpicture}
\end{minipage}
\begin{minipage}[bt]{0.3\textwidth}
    $\langle 0011,0101,0110 \rangle$: \\
    \centering
    \begin{tikzpicture}
    \tikzset{every node/.style={fill = white,circle,minimum size=0.05cm}}
            \node[draw] (a) at (0,1.5) {$\ta$};
            \node[draw] (b) at (1,2) {$\tb$};
            \node[draw] (c) at (2,1.5) {$\tc$};
            \node[draw] (d) at (2,0.5) {$\td$};
            \node[draw] (e) at (1,0) {$\te$};
            \node[draw] (f) at (0,0.5) {$\tf$};

            \path (a) edge[-,color=red,line width=1.5pt] (f);
            \path (b) edge[-,color=red,line width=1.5pt] (f);
            \path (c) edge[-,color=red,line width=1.5pt] (f);
            \path (d) edge[-,color=red,line width=1.5pt] (f);
            \path (a) edge[-,color=blue,dashed,line width=1.8pt] (b);
            \path (a) edge[-,color=blue,dashed,line width=1.8pt] (d);
            \path (a) edge[-,color=blue,dashed,line width=1.8pt] (e);
            \path (c) edge[-,color=blue,dashed,line width=1.8pt] (d);
            \path (a) edge[-,color=cyan,dotted,line width=2pt] (c);
            \path (b) edge[-,color=cyan, dotted,line width=2pt] (c);
            \path (b) edge[-,color=cyan, dotted,line width=2pt] (d);
            \path (b) edge[-,color=cyan, dotted,line width=2pt] (e);
            \path (c) edge[-,color=cyan, dotted,line width=2pt] (e);
            \path (d) edge[-,color=cyan, dotted,line width=2pt] (e);
            \path (e) edge[-,color=cyan, dotted,line width=2pt] (f);
    \end{tikzpicture}
\end{minipage}
\caption{Graphs represented by the word $w = \mathtt{aebcdcadbffe}$ for the different sublanguages of $\langle0011,0101,0110\rangle$. Observe that $G(\langle 0011\rangle,w)=G(\langle 0101\rangle,\mathtt{fabcdfdcbaee})$.}
\label{fig:graphs-for-one-word-with-different-L}
\end{figure}

\paragraph*{One Word Represents Different Graphs With Different Languages.}
In \autoref{fig:graphs-for-one-word-with-different-L}, we fix one word $w$ and study all graphs that can be represented by this word $w$ with different $2$-uniform languages, namely all $0$-$1$-symmetric sublanguages $L$ of $\langle 0011,0101,0110 \rangle$. Note that each $2$-uniform word $v \in L$ represents a set of edges together with its complement $\widetilde{v}$. The edge sets described by two different words $v,v'$ are disjoint. By combining two words, e.g., $v = 0101$ and $v' = 0110$ we get a graph whose edge set is the union of the edge sets described by $v$ and $v'$. 

\begin{figure}
\centering

\begin{tabular}{| c | c | c |}
\hline 
$L$ & $\mathcal{G}_L$ & reference \\
\hline 
$\emptyset$ or $\{\emptyword\}$ or $\langle 0\rangle$ & $\{N_m\mid m\in\N\}$ & \autoref{exa:represented-graphs-NullGraphs} \\
$\langle 01 \rangle$ & $\{K_n\cup N_m\mid n,m\in\N\}$ & \autoref{exa:represented-graphs-CompleteUnionNullGraphs} \\
\hline
$\langle 001 \rangle$ & bipartite chain graphs & \autoref{thm:bipartite-chain}\\
$\langle 010 \rangle$ & convex graphs & \autoref{thm:convex} \\
$\langle 011 \rangle$ & bipartite chain graphs & \autoref{cor:bipartite-chain2} \\
$\langle 010,011 \rangle$ & bipartite chain graphs & \autoref{cor:bica-classes}\\
$\langle 001, 011\rangle$ & bico-convex graphs & \autoref{cor:bica-classes}\\
$\langle 010,001 \rangle$ & bipartite chain graphs & \autoref{cor:bica-classes} \\
$\langle 001,010,011 \rangle$ & $\{K_{n,m}\cup N_\ell\mid n,m,\ell\in\N\}$ & \autoref{exa:represented-graphs-CompleteBipartiteUnionNullGraphs} \\
\hline
$\langle 0011 \rangle$ & co-interval graphs & \autoref{cor:co-intervalgraphs} \\
$\langle 0101 \rangle$ & circle graphs & \cite[Thm. 5.1.7]{KitLoz2015} \\
$\langle 0110 \rangle$ & permutation graphs & \autoref{thm:permGraphs} \\
$\langle 0101,0110 \rangle$ & interval graphs & \autoref{thm:intervalgraphs} \\
$\langle 0011,0110 \rangle$ & $\{ G \cup N_m \mid G \text{ is a co-circle graph}, m \in \mathbb{N} \}$ & \autoref{thm:co-circle} \\
$\langle 0011,0101 \rangle$ & permutation graphs & \autoref{cor:co-permutation} \\
$\langle 0011,0101,0110 \rangle$ & $\{K_n\cup N_m\mid n,m\in\N\}$ & %\autoref{lem:00shuffle11} 
(*)\\
\hline
\end{tabular}
\caption{Graph classes represented by certain ``small'' binary languages.} \label{tab:graphClasses-intro}
\end{figure}

\section{General Properties}

First, we present an interplay between graph and monoid morphisms.

\begin{lemma}\label{lem:isomorphism}
    Let $G_1=(V_1,E_1)$, $G_2=(V_2,E_2)$ be graphs, $L\subseteq \{0,1\}^*$ be a language and $w=w_1\cdots w_n$ a word with $G_1=G(L,w)$. Then for each graph isomorphism $h:V_1\to V_2$ from $G_1$ to $G_2$, $G_2\simeq G(L,h(w_1)\cdots h(w_n))=G(L,h(w))$, with $h$ lifted to a monoid morphism. If $h:V_1\to V_2$ is a graph morphism, then $G(L,h(w))$ is isomorphic to a subgraph of~$G_2$.
\end{lemma}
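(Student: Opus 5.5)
The plan is to reduce everything to one identity about projections: for distinct letters $u,v\in V_1$ whose images $h(u),h(v)$ are distinct,
\[
h_{h(u),h(v)}(h(w)) \;=\; h_{u,v}(w).
\]
Here $h$ on the left is the monoid morphism induced by the vertex map. To prove it, I go through $w=w_1\cdots w_n$ letter by letter. A position with $w_i=u$ becomes $h(u)$, which $h_{h(u),h(v)}$ sends to $0$. A position with $w_i=v$ becomes $h(v)$, which is sent to $1$. The remaining positions must be erased.

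Erasing the remaining positions is where injectivity is needed. If $h$ is injective on letters, then $w_i\notin\{u,v\}$ gives $h(w_i)\notin\{h(u),h(v)\}$, so the letter is mapped to $\emptyword$ and the identity holds. For a graph isomorphism $h$ this is automatic. The identity then gives
\[
\{h(u),h(v)\}\in E(G(L,h(w))) \iff h_{u,v}(w)\in L \iff \{u,v\}\in E_1 \iff \{h(u),h(v)\}\in E_2 .
\]
Since $\alphabet(h(w))=h(V_1)=V_2$, the vertex sets match. Hence $G(L,h(w))=G_2$ exactly, which is in particular $\simeq G_2$. Note that $0$-$1$-symmetry of $L$ makes the order of $u,v$ irrelevant.

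For the second claim I would assume the graph morphism $h$ is injective, i.e.\ an embedding of vertex sets. The same identity then shows that $G(L,h(w))$ has vertex set $h(V_1)\subseteq V_2$. Its edges are exactly $\{\{h(u),h(v)\}\mid\{u,v\}\in E_1\}$, and these lie in $E_2$ because $h$ is a graph morphism. So $G(L,h(w))$ is a subgraph of $G_2$, in fact isomorphic to $G_1$.

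The main obstacle is non-injective $h$. There several letters are merged, $h_{x,y}(h(w))$ mixes occurrences of several preimages, and the identity above breaks down. I do not expect the claim to survive without injectivity. For example, take $L=\langle 011\rangle$, $w=\mathtt{abc}$ and $G_1=G(L,w)$, which is the null graph. Let $h$ map $\ta,\tb\mapsto x$ and $\tc\mapsto y$, with $G_2$ the null graph on $\{x,y\}$; this $h$ is a graph morphism. Then $h(w)=xxy$ and $h_{x,y}(xxy)=001\in L$, so $G(L,h(w))$ has an edge that $G_2$ lacks. I would therefore state and prove the second part for injective graph morphisms, i.e.\ monomorphisms, and point out this counterexample.
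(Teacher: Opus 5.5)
Your argument for the isomorphism case is the paper's proof. It uses the same chain of equivalences, and it justifies by injectivity the identity $h_{h(u),h(v)}(h(w))=h_{u,v}(w)$, which the paper uses without comment. You also correctly obtain $G(L,h(w))=G_2$ exactly, not just up to isomorphism.

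Your diagnosis of the second claim is right and goes beyond the paper. The paper says that for a mere graph morphism only the last equivalence weakens to an implication. But the middle step $h_{h(v),h(u)}(h(w))\in L \Leftrightarrow h_{v,u}(w)\in L$ already fails without injectivity, and the paper's definition of graph morphism allows non-adjacent vertices to be merged. Once you restrict to injective graph morphisms, your proof of that part is correct.

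The concrete gap is your counterexample, which is miscomputed. We have $\langle 011\rangle=\{011,100\}$, so $h_{x,y}(xxy)=001\notin L$, and also $h_{y,x}(xxy)=110\notin L$. Hence $G(\langle 011\rangle,xxy)$ is edgeless and \emph{is} a subgraph of $G_2$, so it refutes nothing.

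The repair is easy. Keep $L=\langle 011\rangle$ and $w=\mathtt{abc}$. Then every pair projects to $01\notin L$, so $G_1=N_3$ and every map is a graph morphism. Now take $\ta\mapsto x$ and $\tb,\tc\mapsto y$. Then $h(w)=xyy$ and $h_{x,y}(xyy)=011\in L$, so $G(L,h(w))\simeq K_2$, which is not isomorphic to any subgraph of the null graph on $\{x,y\}$. Alternatively, keep your $h$ and switch to $L=\langle 001\rangle$.
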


\begin{proof}
    Let $v,u\in V$. Define $h(w)=h(w_1)\cdots h(w_n)$, i.e., interpret the graph isomorphism~$h$ now as a monoid morphism $h:V_1^*\to V_2^*$. Then $ \{h(v),h(u)\} \in E(G(L,h(w))) \iff$
    \begin{equation}
         h_{h(v),h(u)}(h(w))\in L\iff h_{v,u}(w)\in L\iff \{v,u\} \in E_1 \iff \{h(v),h(u)\} \in E_2\,. \label{eq:morphisms}
    \end{equation} Also, as $h:V_1\to V_2$ is a bijection, $|V_2|=|h(w)|=|w|=|V_1|$.
    If $h:V_1\to V_2$ is (only) a graph morphism, then the last equivalence in \autoref{eq:morphisms} will be (only) an implication.%\qed 
\end{proof}

\noindent
The following lemma and corollary are known from \cite{FerFMS2026a}: 
\begin{lemma}\applabel{lem:representability-reversal}
A graph~$G$ can be $L$-represented \iffl it can be $L^R$-represented.
\end{lemma}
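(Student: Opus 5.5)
The natural approach is to use the reversal of the representing word. Suppose $G=(V,E)$ is $L$-represented by $w=w_1\cdots w_n\in V^*$. We claim that $w^R$ then $L^R$-represents $G$. The key identity is that projections commute with reversal: for distinct $u,v\in V$,
\[
h_{u,v}(w^R)=\bigl(h_{u,v}(w)\bigr)^R .
\]
This holds because $h_{u,v}$ is a morphism that maps each letter to a word of length at most one. Indeed, $h_{u,v}(w^R)=h_{u,v}(w_n)\cdots h_{u,v}(w_1)$, which is exactly the reversal of $h_{u,v}(w_1)\cdots h_{u,v}(w_n)$. A formal proof is a one-line induction on $|w|$.

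With this identity the chain of equivalences is immediate:
\[
\{u,v\}\in E \iff h_{u,v}(w)\in L \iff \bigl(h_{u,v}(w)\bigr)^R\in L^R \iff h_{u,v}(w^R)\in L^R .
\]
Moreover $\alphabet(w^R)=\alphabet(w)=V$. Hence $G=G(L^R,w^R)$.

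One side condition must be checked: $L^R$ has to be $0$-$1$-symmetric, since otherwise $G(L^R,\cdot)$ is not well defined. This follows because the complement morphism commutes with reversal, i.e., $\widetilde{x^R}=(\widetilde{x})^R$, since $\widetilde{\cdot}$ acts letterwise. Therefore $\widetilde{L^R}=(\widetilde{L})^R=L^R$.

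For the converse direction we apply the same argument to $L^R$ in place of $L$, using $(L^R)^R=L$ and $(w^R)^R=w$. No step here is a real obstacle. The only point needing care is the commutation of $h_{u,v}$ with reversal, which relies on $h_{u,v}$ being letter-to-letter-or-erasing; for such morphisms it is immediate.
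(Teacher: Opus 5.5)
Your proposal is correct: reversing the representing word and using that $h_{u,v}$ commutes with reversal gives $G(L,w)=G(L^R,w^R)$, and your check that $L^R$ is $0$-$1$-symmetric settles well-definedness. The paper does not prove this lemma itself but quotes it from earlier work; your identity is exactly what underlies \autoref{cor:L-symmetry}, so this is the expected argument.
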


\begin{corollary}\label{cor:L-symmetry}
    If $L\subseteq \{0,1\}^*$ \longversion{be a language with}\shortversion{satisfies} $L=L^R$, then $G(L,w)= G(L,w^R)$ for each word $w$.
\end{corollary}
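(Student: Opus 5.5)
The statement to prove is \autoref{cor:L-symmetry}: if $L=L^R$, then $G(L,w)=G(L,w^R)$ for every word $w$. I will argue directly from the definition rather than through \autoref{lem:representability-reversal}. That lemma only says that the two graph classes coincide, whereas here we need the very same graph on the same vertex set.

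\textbf{Step 1 (same vertex set).} Since $\alphabet(w^R)=\alphabet(w)$, the graphs $G(L,w)$ and $G(L,w^R)$ have the same vertex set $V$. It therefore suffices to compare, for each pair of distinct $u,v\in V$, whether $\{u,v\}$ is an edge in each graph.

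\textbf{Step 2 (projection commutes with reversal).} The key identity is $h_{u,v}(w^R)=\bigl(h_{u,v}(w)\bigr)^R$. To prove it, induct on $|w|$ using $(xa)^R=a\,x^R$ for a letter $a$. Alternatively, note that $h_{u,v}$ maps letters to words of length at most one, so erasing letters and renaming them commutes with reading the word backwards. This is the only step with any content, and it is routine.

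\textbf{Step 3 (conclude).} Fix distinct $u,v\in V$. By Step 2 and $L=L^R$,
\[
\{u,v\}\in E(G(L,w^R)) \iff h_{u,v}(w)^R\in L \iff h_{u,v}(w)\in L^R=L \iff \{u,v\}\in E(G(L,w)).
\]
The middle equivalence holds because $x^R\in L$ iff $x\in L^R$, reversal being an involution. Note that $L^R$ is again $0$-$1$-symmetric, since complementation and reversal commute; so both graphs are well defined under the standing assumption.

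There is no real obstacle in this proof. The only point requiring care is Step 2, namely that the projection morphism commutes with reversal, and it is immediate because $h_{u,v}$ is letter-to-letter or erasing.
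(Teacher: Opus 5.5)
Your proof is correct and is essentially the paper's route. The paper states this as a corollary of \autoref{lem:representability-reversal}, cited from prior work, and the argument behind that lemma is exactly your identity $h_{u,v}(w^R)=h_{u,v}(w)^R$, giving $G(L,w)=G(L^R,w^R)$, which becomes the claim once $L=L^R$. You are right that the lemma's class-level statement alone does not give equality of these specific graphs, and your direct argument supplies precisely that step.
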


The set operations on the sets of edges of the graphs $G(L_1,w)$ and $G(L_2,w)$ are compatible with the set operations of $L_1$ and $L_2$ in the following sense:

\begin{lemrep} \applabel{lem:edgesets}
Let  $L_1, L_2 \subseteq \{ 0, 1 \}^\ast$ be %0-1-symmetric 
languages. Let $w\in V^*$, with 
$V=\alphabet(w)$. \\
%The set operations on the sets of edges of the graphs $G(L_1,w)$ and $G(L_2,w)$ are compatible with the set operations of $L_1$ and $L_2$ in the sense that the following statements hold: 
1. Let $\Box$ be a binary set operation \longversion{in the sense}\shortversion{such} that for a set $X$, subsets $A, B \subseteq X$ and a binary Boolean operation $\Diamond:\{0,1\}^2\to\{0,1\}$, we set $A\mathbin{\Box} B=\{x\in X\mid x\in A \mathbin{\Diamond} x\in B\}$. Then, $E(G(L_1 \mathbin{\Box} L_2,w)) = E(G(L_1,w)) \mathbin{\Box} E(G(L_2,w))$.\\
2. $E(G(\overline{L_2},w)) = \binom{V}{2} \setminus E(G(L_2,w))$ and \longversion{$G(\overline{L_2},w)$ is the complement graph of $G(L_2,w)$, i.e., }$G(\overline{L_2},w)=\overline{G(L_2,w)}$.\\
3. If $L_1 \subseteq L_2$, then $E(G(L_1,w)) \subseteq E(G(L_2,w))$\longversion{, i.e., $G(L_1,w)$ is a subgraph of $G(L_2,w)$}.
%\item $E(G(L_1 \cap L_2,w)) = E(G(L_1,w)) \cap E(G(L_2,w))$.
%\item $E(G(L_1 \setminus L_2,w)) = E(G(L_1,w)) \setminus E(G(L_2,w))$.
%\item $E(G(L_1 \cup L_2,w)) = E(G(L_1,w)) \cup E(G(L_2,w))$.
\end{lemrep}

\shortversion{As an example for the taste of the algebraic proof we argue for item~1: $\{ u, v \} \in E(G(L_1 \mathbin{\Box} L_2,w))  \iff h_{u,v}(w) \in L_1 \mathbin{\Box} L_2  \iff h_{u,v}(w) \in L_1 \mathbin{\Diamond} h_{u,v}(w) \in L_2  \iff \{ u, v \} \in E(G(L_1,w)) \mathbin{\Diamond} \{ u, v \} \in E(G(L_2,w))  \iff \{ u, v \} \in E(G(L_1,w)) \mathbin{\Box} E(G(L_2,w))$. \qed}

\begin{proof}
\begin{enumerate}
\item %Let $\Diamond:\{0,1\}^2\to\{0,1\}$ be the binary Boolean operation corresponding to the binary\todoReviewer{“to the binary” -> “to a binary”} set operation $\Box:2^X\times 2^X\to2^X$ in the sense that $A\mathbin{\Box} B=\{x\in X\mid x\in A \mathbin{\Diamond} x\in B\}$.  
For $v \in V$ and $u \in V \setminus \{ v \}$, the following holds: 
\begin{align*}
\{ u, v \} \in E(G(L_1 \mathbin{\Box} L_2,w)) & \iff h_{u,v}(w) \in L_1 \mathbin{\Box} L_2 \\
                                    & \iff h_{u,v}(w) \in L_1 \mathbin{\Diamond} h_{u,v}(w) \in L_2 \\
                                    & \iff \{ u, v \} \in E(G(L_1,w)) \mathbin{\Diamond} \{ u, v \} \in E(G(L_2,w)) \\
                                    & \iff \{ u, v \} \in E(G(L_1,w)) \mathbin{\Box} E(G(L_2,w))
\end{align*}
%For $v \in V$ and $u \in V \setminus \{ v \}$, the following holds: 
%\begin{align*}
%\{ u, v \} \in E(G(L_1 \cap L_2,w)) & \iff h_{u,v}(w) \in L_1 \cap L_2 \\
%                                    & \iff h_{u,v}(w) \in L_1 \wedge h_{u,v}(w) \in L_2 \\
%                                    & \iff \{ u, v \} \in E(G(L_1,w)) \wedge \{ u, v \} \in E(G(L_2,w)) \\
%                                    & \iff \{ u, v \} \in E(G(L_1,w)) \cap E(G(L_2,w))
%\end{align*}
%\item Analogously to (2). %For $v \in V$ and $u \in V \setminus \{ v \}$, the following holds: 
\item The statement follows from (1): $L_1 = \{ 0, 1 \}^\ast$ and $\Diamond: (b_1, b_2) \mapsto b_1 (1 - b_2)$.
%\item The statement follows from (2) and (4) by De Morgan's laws.

%\item For $v \in V$ and $u \in V \setminus \{ v \}$, the following withholds: 
%$$\{ u, v \} \in E(G(L_1,w)) \Rightarrow h_{u,v}(w) \in L_1 \Rightarrow h_{u,v}(w) \in L_2 \Rightarrow \{ u, v \} \in E(G(L_2,w))$$
\item The statement follows from (1) with $\Diamond: (b_1,b_2) \mapsto b_1 b_2$ as, for $L_1,L_2\subseteq \{0,1\}^*$, $L_1\subseteq L_2$ \iffl $L_1\cap L_2=L_1$.
\qed
\end{enumerate}\renewcommand{\qed}{}
\end{proof}

This lemma has interesting consequences when we turn our attention to\longversion{wards} graph classes instead of single languages and graphs. \longversion{The next statement is a reinterpretation of}\shortversion{We restate} the second statement of \autoref{lem:edgesets}.

\begin{proposition}\label{prop:compl}
For $L \subseteq \{ 0, 1 \}^* $, $\mathcal{G}_{\overline{L}}=\{\overline{G}\mid G\in \mathcal{G}_L\}$. 
%is exactly the set of complement graphs of graphs from~$\mathcal{G}_L$. 
\end{proposition}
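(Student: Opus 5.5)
The plan is to prove the two inclusions separately. Both follow from item~2 of \autoref{lem:edgesets}, which says that for every word $w$ we have $G(\overline{L},w)=\overline{G(L,w)}$.

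First, we check that $\overline{L}$ is again $0$-$1$-symmetric, so that $\mathcal{G}_{\overline{L}}$ is well-defined. The complement morphism $\widetilde{\cdot}$ is an involution, hence a bijection of $\{0,1\}^*$ onto itself. A bijection commutes with set complementation, so $\widetilde{\overline{L}}=\overline{\widetilde{L}}=\overline{L}$.

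For the inclusion ``$\subseteq$'', take $H\in\mathcal{G}_{\overline{L}}$. By definition $H=G(\overline{L},w)$ for some $w\in\Sigma^*$ with $\Sigma=\alphabet(w)$. By \autoref{lem:edgesets}(2), $H=\overline{G(L,w)}$, and $G(L,w)\in\mathcal{G}_L$. Hence $H$ is the complement of a graph in $\mathcal{G}_L$.

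For the inclusion ``$\supseteq$'', take $G\in\mathcal{G}_L$, say $G=G(L,w)$. Then $G(\overline{L},w)=\overline{G(L,w)}=\overline{G}$, again by \autoref{lem:edgesets}(2). Since $G(\overline{L},w)$ lies in $\mathcal{G}_{\overline{L}}$ by definition, so does $\overline{G}$.

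The only point needing a little care is that both classes are indexed by the same words $w$ over the same alphabets $\alphabet(w)$. This holds because the vertex set of $G(L,w)$ is $\alphabet(w)$ regardless of the language. Consequently, graph complementation keeps the vertex set fixed and the two classes correspond word by word.

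There is no real obstacle here. If one works with graphs only up to isomorphism, \autoref{lem:isomorphism} ensures that the identity is preserved under renaming of vertices.
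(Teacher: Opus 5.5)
Your proposal is correct and matches the paper, which obtains this proposition directly as a restatement of item~2 of \autoref{lem:edgesets}, applied word by word in both directions just as you do. Your extra check that $\overline{L}$ is again $0$-$1$-symmetric, so that $\mathcal{G}_{\overline{L}}$ is well-defined, is a welcome detail that the paper leaves implicit.
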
 

\noindent
The following lemma about projective morphisms is helpful in various settings.

\begin{lemma} \label{lem:inducedsubgraph} 
Let $L\subseteq \{0,1\}^*$ be a language, $V$ an alphabet and $w \in V^*$, with $V=\alph(w)$, and $G = G(L,w)$. Let $A\subseteq V$. 
%Let $G[A] = (A, E \cap \binom{A}{2})$ be the subgraph of $G$ induced by $A$. 
Then, $G[A] = G(L,h_A(w))$. 
\end{lemma}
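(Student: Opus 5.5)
The plan is to unfold the definitions and observe that projecting away letters outside $A$ does not change the projection onto any pair inside $A$. First I check that the vertex sets agree. Since $V=\alph(w)$ and $A\subseteq V$, every letter of $A$ occurs in $w$. Because $h_A$ keeps exactly the letters of $A$, we get $\alph(h_A(w))=A$. So $G(L,h_A(w))$ is a graph on vertex set $A$, which is also the vertex set of $G[A]$.

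The key step is the identity $h_{u,v}(h_A(w))=h_{u,v}(w)$ for all distinct $u,v\in A$. Both sides are monoid morphisms from $V^*$ to $\{0,1\}^*$, so it suffices to compare them letter by letter. For $x\in\{u,v\}\subseteq A$ we have $h_A(x)=x$, so both sides send $u\mapsto 0$ and $v\mapsto 1$. For $x\in A\setminus\{u,v\}$, $h_A(x)=x$ and then $h_{u,v}(x)=\emptyword$. For $x\notin A$, $h_A(x)=\emptyword$, whose image is again $\emptyword$. Hence $h_{u,v}\circ h_A$ and $h_{u,v}$ agree on all letters, and therefore on all of $V^*$.

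Finally I compare edges. For distinct $u,v\in A$, the chain
\[
\{u,v\}\in E(G(L,h_A(w)))\iff h_{u,v}(h_A(w))\in L\iff h_{u,v}(w)\in L\iff \{u,v\}\in E(G)
\]
holds. The last edge set restricted to $\binom{A}{2}$ is, by definition, the edge set of $G[A]$. The two graphs therefore have the same vertex set and the same edges, so $G[A]=G(L,h_A(w))$, with equality rather than mere isomorphism because the vertex names are preserved.

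The only step that needs care is the vertex-set bookkeeping: we need $A\subseteq\alph(w)$ so that no vertex is lost. This is guaranteed by the hypothesis $V=\alph(w)$. The case $A=\emptyset$ yields an empty word; it is excluded by the paper's convention that graphs are non-empty, and otherwise it is trivial.
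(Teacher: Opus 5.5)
Your proof is correct and follows essentially the same route as the paper. The paper likewise notes that $V(G(L,h_A(w)))=A=V(G[A])$ and runs the same chain of equivalences through $h_{u,v}(w)\in L \iff h_{u,v}(h_A(w))\in L$. The only difference is that you explicitly verify $h_{u,v}\circ h_A = h_{u,v}$ letter by letter, which the paper uses without comment.
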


\begin{proof} 
Let  $G = (V,E) = G(L,w)$. 
Let $G' = G(L,h_A(w))$. Obviously, $V(G') = A = V(G[A])$. For every $u \in A$ and $v \in A \setminus \{ u \}$, the following holds: 
\shortversion{$\{ u, v \} \in E(G[A]) \iff{}$
%\begin{align*}
\[  \{ u, v \} \in E 
                       \iff{}  h_{u,v}(w) \in L 
                       \iff{}  h_{u,v}(h_A(w)) \in L 
                       \iff{}  \{ u, v \} \in E(G') 
                       \]
%\end{align*}
}
\longversion{
\begin{eqnarray*}
    \{ u, v \} \in E(G[A]) &\iff&\{ u, v \} \in E \\
     &\iff&  h_{u,v}(w) \in L \\&\iff&  h_{u,v}(h_A(w)) \in L 
                       \\&\iff&  \{ u, v \} \in E(G') 
\end{eqnarray*}

}
Therefore, $E(G')=E \cap \binom{A}{2}$.%\qed
\end{proof}

\noindent
This lemma immediately entails the following \longversion{important and interesting }property of any language-representable graph class.
%that is definable in our framework.
%, also mentioned in \autoref{thm:hereditary-intro}. 
\longversion{Notice how we again switch our attention from languages to graph classes.}

\begin{theorem}\label{thm:hereditary}
For any $L\subseteq\{0,1\}^*$, the graph class $\cG_L$ is hereditary\longversion{, i.e., closed under induced subgraphs}.
\end{theorem}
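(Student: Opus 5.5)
The plan is to derive the statement directly from \autoref{lem:inducedsubgraph}, combined with \autoref{lem:isomorphism} so that the argument also covers graphs that are only isomorphic to members of $\cG_L$. Take an arbitrary $G\in\cG_L$ together with a word $w\in V^*$, where $V=\alphabet(w)$, such that $G=G(L,w)$. Consider any induced subgraph $H$ of $G$, say $H=G[A]$ for some non-empty $A\subseteq V$. We need a word over $A$ that $L$-represents $H$.

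The natural candidate is $h_A(w)$. Since every letter of $A$ occurs in $w$, we get $\alphabet(h_A(w))=A$, so $h_A(w)$ is a legitimate representing word for a graph on vertex set $A$. By \autoref{lem:inducedsubgraph}, $G[A]=G(L,h_A(w))$, hence $G[A]\in\cG_L$.

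It remains to deal with induced subgraphs that are only given up to isomorphism. Suppose $H'\simeq G[A]$ via an isomorphism $\varphi$. Applying \autoref{lem:isomorphism} to the word $h_A(w)$ gives $H'\simeq G(L,\varphi(h_A(w)))$, and so $H'$ also belongs to $\cG_L$ when the class is read up to isomorphism.

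None of these steps is really an obstacle. The only point that needs care is the requirement $A\neq\emptyset$: graphs in this paper have non-empty vertex sets, so the empty word never has to be treated as representing anything.
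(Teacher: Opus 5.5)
Your proof is correct and follows the paper's route: the paper derives the theorem directly from \autoref{lem:inducedsubgraph}, i.e., $G[A]=G(L,h_A(w))$, just as you do. Your extra isomorphism step via \autoref{lem:isomorphism} is harmless, and its proof in fact gives exact equality $H'=G(L,\varphi(h_A(w)))$, not just isomorphism.
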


This property was known before for word-representable graphs \cite[Prop. 3.0.8]{KitLoz2015} and is known for  the graph classes that we identify in the following as being $L$-representable for some language~$L$. However, we have a simple algebraic argument that works for all of these graph classes.

Notice that \autoref{thm:hereditary} can be interpreted in a way saying that our graph classes are closed under removing vertices. A natural related question would be if they are also closed under adding vertices. \longversion{Clearly, we cannot expect a positive answer in the sense that we ignore the possible connections of the new vertex with the old ones: this would mean that we always describe all graphs. However, under some conditions, certain types of adding vertices are possible, as we explore in the following.
The next proposition exploits `holes' in the set of frequentnesses.}
\begin{prprep}\applabel{propos:closure-adding-isolates} 
For any $L\subseteq\{0,1\}^*$, if $\N_{\geq 1}\setminus \mathrm{freq}(L)\neq\emptyset$, then $\cG_L$ is closed under adding isolated vertices and  $\cG_{\overline{L}}$ is closed under adding universal vertices.
\end{prprep}

\begin{proof}
Consider some $L\subseteq\{0,1\}^*$ such that there is some `missing frequentness', say, $m\in \N_{\geq 1}\setminus \mathrm{freq}(L)$.
Let $G=(V,E)\in\cG_L$. Hence, there is some $w\in V^*$ with $G=G(L,w)$. Now, let $\ta\notin V$. Define $w'=w\cdot \ta^m$. Then, $G'=G(L,w')$ is isomorphic to $G\cup N_1$.
Consider $H\in \cG_{\overline{L}}$. By \autoref{prop:compl}, $\overline{H}\in \cG_L$. By the previous reasoning, $\overline{H}\cup N_1\in \cG_L$. Now, observe that $\overline{\overline{H}\cup N_1}=H\nabla N_1$ by De Morgan's Law. Inductively, the claims follow.
\end{proof}

\begin{proposition} \label{prop:bipartite}
%For $k, \ell \in \mathbb{N}$, i  
Let $k,\ell \in \mathbb{N}$ such that $0 < k < \ell$.
If $L\subseteq \{0,1\}^{k,\ell\text{-uni}}$, then all graphs in~$\cG_L$ are bipartite.
\end{proposition}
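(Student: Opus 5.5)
Fix $G=G(L,w)$ with $w\in V^*$ and $V=\alphabet(w)$. The idea is to split $V$ according to letter frequencies. Every letter $x\in V$ has a frequency $|w|_x\ge 1$. If $\{u,v\}$ is an edge, then $h_{u,v}(w)\in L\subseteq\{0,1\}^{k,\ell\text{-uni}}$. By the definition of $(k,\ell)$-uniform words, the two letters of $h_{u,v}(w)$ then occur $k$ and $\ell$ times, in some order. Since $h_{u,v}(w)$ contains $|w|_u$ zeros and $|w|_v$ ones, we get $\{|w|_u,|w|_v\}=\{k,\ell\}$. Because $k\neq\ell$, one endpoint has frequency exactly $k$ and the other has frequency exactly $\ell$.

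Define $A=\{x\in V\mid |w|_x=k\}$ and $B=V\setminus A$. This is a partition of $V$. By the previous paragraph, every edge has exactly one endpoint of frequency $k$, so it has one endpoint in $A$. Its other endpoint has frequency $\ell\neq k$, so it lies in $B$.

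It remains to check the two independence conditions. There is no edge inside $A$, since both endpoints would have frequency $k$, which contradicts $\{|w|_u,|w|_v\}=\{k,\ell\}$ with $k\neq\ell$. There is no edge inside $B$, since no vertex of $B$ has frequency $k$, while every edge needs an endpoint of frequency $k$. Hence $\binom{A}{2}\cap E=\binom{B}{2}\cap E=\emptyset$, and $G$ is bipartite.

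The argument is routine. The only point needing care is the degenerate case where $A$ or $B$ is empty. This is harmless: then $G$ has no edges at all, and the definition of bipartite only requires a partition with both parts independent, so empty parts are allowed. If nonempty parts were required, any edgeless graph with at least two vertices can be split arbitrarily, and a single vertex is trivial.
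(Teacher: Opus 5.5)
Your proof is correct and follows essentially the same route as the paper: partition the vertices by their frequency in $w$, and use that a word in $\{0,1\}^{k,\ell\text{-uni}}$ with $k\neq\ell$ cannot have both letters occurring equally often. Taking $B=V\setminus A$ also covers letters whose frequency lies outside $\{k,\ell\}$ (these are isolated), which the paper's partition into $V_k$ and $V_\ell$ leaves implicit.
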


\begin{proof} For $k,\ell \in \mathbb{N}$ such that $0 < k < \ell$,  let $L\subseteq \{0,1\}^{k,\ell\text{-uni}}$. For $G \in \mathcal{G}_L$, choose $w \in V(G)^*$ such that $G(L,w) = G$. Let $V_i = \{ v \in V(G) \mid |w|_v = i\}$ for $i\in\{k, \ell \}$. Let, $u,v \in V_k$.
Hence, $ |h_{u,v}(w)|_0 = |w|_u = k = |w|_v = |h_{u,v}(w)|_1$. As $k\neq \ell$, $h_{\{u,v\}}(w) \notin \{0,1\}^{k,\ell\text{-uni}}\supseteq L$. Hence, $V_k$ is independent;  analogously,  $V_\ell$ is independent. This proves that $G$ is bipartite.
\longversion{For $\ell \in \mathbb{N}$ such that $0 < \ell$, the $(0,\ell)$-uniform, $0$-$1$-symmetric languages are $\{ 0^\ell, 1^\ell \}$ and $\emptyset$. Hence $\mathcal{G}_L$ is the class of null graphs. Those are bipartite.}
%\qed
\end{proof}

%Combining this with the argument of \autoref{lem:clique-representation}, one obtains, generalizing \autoref{exa:represented-graphs-CompleteBipartiteUnionNullGraphs}:
%\begin{lemma}\label{lem:biclique-representation}
%For $0 < k < \ell$, if $L=0^k \shuffle 1^\ell$, then $G(L,w)=K_{n,m}\cup N_r$ for some $n,m,r\in\N$.
%\end{lemma}

\autoref{prop:bipartite} is also one of the building blocks of the following type of decomposition theorem. It means that, given a graph $G=G(L,w)$, we can first classify $\alph(w)$ by its frequencies in~$w$ and then decompose the edge set of~$G$ as either belonging to the graph induced by the letters of frequency~$k$, or as belonging to the bipartite graph between any two different frequencies $k,\ell\in F(w)$. This helps understand the structure of such graphs and allows to build them from more basic graphs by their (not necessarily disjoint) union. 
To formulate it, we introduce some more formalisms. For $w\in V^*$ and $k,\ell\in F(w)$, set $V_{k,\ell}(w)\coloneqq\{x\in V\mid |w|_x\in\{k,\ell\}\}$. We write $g_{w,k,\ell}$ instead of $h_{V_{k,\ell}(w)}$.
%the morphism $g_{w,k,\ell}:V^*\to V^*$ defined for $x\in V$ by $x\mapsto x$ if $|w|_x\in\{k,\ell\}$ and $x\mapsto\emptyword$, otherwise. 
For $L \in \{ 0,1 \}^\ast$ and a word $w$, we define  $E_{k,\ell}(L,w)\coloneqq\{\{u,v\}\in E(G)\mid |w|_u=k\land |w|_v=\ell\}$, giving $G_{k,\ell}(L,w)=(V_{k,\ell}(w),E_{k,\ell}(L,w))$ which is $G[V_{k,\ell}(w)]$ if $k=\ell$ and a bipartite subgraph of $G[V_{k,\ell}(w)]$ if $k\neq\ell$. %, i.e.,  %For $k = \ell$, we define $V_k(w) = V_{k,k}(w)$. If the language $L$ and the word $w$, or the word $w$ respectively, are clear from the context, we might simply write $E_{k,\ell}$, $V_{k,\ell}$ and~$V_k$\longversion{ for the respective sets}.

\begin{theorem}\label{thm:graph-decomposition}
 Let $L\subseteq \{0,1\}^*$. 
 Let $G=(V,E)\in \cG_L$ be $L$-represented by $w\in V^*$.
 Let $P_{L,w}\coloneqq\{(k,\ell)\in\N^2\mid 0 < k\leq \ell\land k,\ell\in F(w)\}$.
 %\exists x,y\in V: |w|_x = k \wedge |w|_y = \ell \wedge h_{x,y}(w) \in L\}$. 
 Then,  $E=\bigcup_{(k,\ell)\in P_{L,w}}E_{k,\ell}(L,w)\,.$
 Moreover,  $G_{k,\ell}(L,w)\in \cG_{\{0,1\}^{k,\ell\text{-uni}}\cap L}$.
\end{theorem}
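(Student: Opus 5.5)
The first claim is a partition argument. Every edge $\{u,v\}\in E$ joins two letters of $w$, so $|w|_u,|w|_v\in F(w)$, and both are positive because $V=\alph(w)$. After possibly swapping $u$ and $v$ we have $k\coloneqq|w|_u\le \ell\coloneqq|w|_v$. Then $(k,\ell)\in P_{L,w}$ and $\{u,v\}\in E_{k,\ell}(L,w)$. The converse inclusion is immediate, since each $E_{k,\ell}(L,w)\subseteq E$ by definition. So the equality follows directly from unfolding the definitions. The only care needed is that $E_{k,\ell}$ is written with an ordered condition $|w|_u=k\land|w|_v=\ell$ on an unordered edge; I read it as symmetric, which is harmless.

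For the second claim I would exhibit an explicit word. My plan is to take $w'\coloneqq h_{X}(w)$, where $X$ is the set $V_{k,\ell}(w)$. By \autoref{lem:inducedsubgraph}, $G(L,w')=G[V_{k,\ell}(w)]$, and the projection preserves the frequencies, so $|w'|_x=|w|_x$ for all $x\in X$. I then consider $L'\coloneqq\{0,1\}^{k,\ell\text{-uni}}\cap L$, which is $0$-$1$-symmetric because both sets in the intersection are. By \autoref{lem:edgesets}(1) with intersection, $E(G(L',w'))=E(G(\{0,1\}^{k,\ell\text{-uni}},w'))\cap E(G[X])$.

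It remains to identify the first edge set. For distinct $u,v\in X$ we have $h_{u,v}(w')\in\{0,1\}^{k,\ell\text{-uni}}$ exactly when $\{|w|_u,|w|_v\}=\{k,\ell\}$. If $k<\ell$, this says $u$ and $v$ have the two different frequencies, so intersecting with the edges of $G[X]$ gives exactly $E_{k,\ell}(L,w)$. If $k=\ell$, it holds for every pair in $X$, so we get all edges of $G[X]$, which again is $E_{k,k}(L,w)$. Thus $G(L',w')=(V_{k,\ell}(w),E_{k,\ell}(L,w))=G_{k,\ell}(L,w)$.

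There is no real obstacle here. The one point to check is the frequency bookkeeping: the projection $h_{u,v}$ of $w'$ has $0$-count $|w|_u$ and $1$-count $|w|_v$. This makes membership in $\{0,1\}^{k,\ell\text{-uni}}$ depend only on the frequencies, and it matches the $F(\cdot)=\{k,\ell\}$ condition, including the degenerate case $k=\ell$.
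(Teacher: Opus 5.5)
Your proof is correct and essentially follows the paper's route: the first claim is settled by unfolding the definitions, and the second uses the same witness word $h_{V_{k,\ell}(w)}(w)$, which is the paper's $g_{w,k,\ell}(w)$. Your case split on $k<\ell$ versus $k=\ell$ is more careful than the paper's argument. The paper asserts $h_{u,v}(g_{w,k,\ell}(w))\in\{0,1\}^{k,\ell\text{-uni}}$ for all distinct $u,v\in V_{k,\ell}(w)$, which fails for two letters of equal frequency when $k<\ell$; this slip does not affect the conclusion.
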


\begin{proof}
%For all $(k,\ell) \in P_{L,w}$, $G[V_{k,\ell}(w)] = G(L,g_{w,k,\ell}(w))= (V_{k,\ell}(w),E_{k,\ell}(L,w)) $, because $g_{w,k,\ell} = h_{V_{k,\ell}(w)}$. This follows from \autoref{lem:inducedsubgraph}. Hence, $G_{k,\ell}$ is well-defined and $E_{k,\ell}(L,w) =E \cap \binom{V_{k,\ell}(w)}{2}$. 
To prove $E=\bigcup_{(k,\ell)\in P_{L,w}}E_{k,\ell}(L,w)$, we only need to show the inclusion $\subseteq$. Let $\{ u, v \} \in E$. W.l.o.g., assume $|w|_u \leq |w|_v$. In this case, $(|w|_u,|w|_v) \in P_{L,w}$ and $u, v \in V_{|w|_u,|w|_v}(w)$. Hence, $\{ u, v \} \in E_{|w|_u,|w|_v}(L,w) \subseteq \bigcup_{(k,\ell)\in P_{L,w}}E_{k,\ell}(L,w)$. 

For all $(k,\ell) \in P_{L,w}$ and for all $u, v \in V_{k,\ell}(w)$ with $u \neq v$, $h_{u,v}(g_{w,k,\ell}(w)) \in\{0,1\}^{k,\ell\text{-uni}}$. Hence, \shortversion{$h_{u,v}(g_{w,k,\ell}(w)) \in L$ \iffl $h_{u,v}(g_{w,k,\ell}(w)) \in \{0,1\}^{k,\ell\text{-uni}} \cap L$.}\longversion{$$h_{u,v}(g_{w,k,\ell}(w)) \in L\iff h_{u,v}(g_{w,k,\ell}(w)) \in \{0,1\}^{k,\ell\text{-uni}} \cap L\,.$$} This implies $G_{k,\ell}(L,w) = G(\{0,1\}^{k,\ell\text{-uni}}\cap L,\shortversion{\linebreak[4]} g_{w,k,\ell}(w))$. Thus, $G_{k,\ell}(L,w)\in \cG_{\{0,1\}^{k,\ell\text{-uni}}\cap L}$. %\qed
\end{proof}

\begin{example}\label{exa:threshold} 
Consider $L=\langle 001,01\rangle$. By \autoref{thm:graph-decomposition} and \autoref{exa:represented-graphs-CompleteUnionNullGraphs}, $\cG_L$ is a family of split graphs. More precisely, if $G=(V,E)\in \cG_L$, represented by some $w\in V^*$, then $E=E_{1,1}\cup E_{1,2}$. In particular, all vertices that occur at least thrice in $w$ have degree zero. Hence, the projective morphism that only preserves vertices that occur once in~$w$, let $V_1\subseteq V$ be this set, allows us to apply \autoref{lem:inducedsubgraph}, concluding that $G[V_1]=G(L,h_{V_1}(w))$ is a complete graph, while (using a projective morphism preserving vertices from $V\setminus V_1$) $G[V\setminus V_1]=G(L,h_{V\setminus V_1}(w))$ is a null graph, so that~$G$ is a split graph. %Conversely, if $G=(V,E)$ is a split graph, then $V$ can be decomposed into $C$ and~$I$ such that $G[C]$ is a complete graph and $I$ is an independent set. 
%More details on this graph class can be found in \autoref{subsec:atmost-two-occurrences}.
We will see in \autoref{thm:bipartite-chain} that $\langle 001\rangle$ describes the bipartite chain graphs. 
As it is known that $G$ is a bipartite chain graph \iffl $G'$, obtained from $G$ by turning any side of the bipartition into a clique, is a threshold graph, $\cG_L$ indeed characterizes the threshold graphs.
\end{example} 

\autoref{thm:graph-decomposition} motivates to focus on (nearly) length-uniform languages, as graphs described with the help\longversion{ of other,} more complex languages can be decomposed in the way described by the preceding theorem.
Also, as (nearly) length-uniform languages are finite, this motivates the study of finite~$L$ and their graph classes.

\paragraph*{Dealing With Syntactic Sugar.}

Lastly, we are formulating a \emph{trash theorem}. \begin{toappendix}
This trash theorem will be useful when trying to apply \autoref{prop:compl} in the context of the study, say, of 2-uniform languages and the corresponding graph classes, because then, we rather need a complementation relative to all 2-uniform words instead of relative to all possible words. To understand the difference between these two types of complements of a binary language~$L$, we now study their difference formally, which gives the set $T_L$ as defined in the following theorem, which can be viewed as a collection of a sort of trash words with only minor influence on the graph classes.
In short, the difference between these versions of complementation is often just some syntactic sugar that can be ignored.
\end{toappendix}

%\begin{comment}
%\todo{The following Theorem is false! Add some sentences in the text saying this.}
%\begin{theorem}
%Let $L\subseteq \{0,1\}^*$ be such that  $\mathcal{G}_{L}$ is closed under adding universal vertices. Let $\mathrm{lengths}(L)=\{n\in\mathbb{N}\mid \exists w\in L: |w|=n\}$. Let $\hat L=L\cup \bigcup_{n\in \mathbb{N}\setminus \mathrm{lengths}(L)}\{0,1\}^n$. Then,     $\mathcal{G}_{L}=\mathcal{G}_{\hat L}$.
%\end{theorem}
%\end{comment}

%For a word $w$ we call the number of occurrences of a letter $v \in \alph(w)$ the \emph{frequentness} of $v$ in $w$. 

\begin{thmrep}[Trash Theorem] \applabel{trash-theorem} 
Let $L\subseteq \{0,1\}^*$. % be a $0$-$1$-symmetric language. Let $\mathrm{freq}(L)=\{n\in\mathbb{N}\mid \exists w\in L: |w|_0 = n\}$. 
Let $\hat L \coloneqq L \cup T_L$, where we define the trash as $T_L \coloneqq  \{ w \in \{ 0, 1 \}^* \mid |w|_0 \notin \mathrm{freq}(L) \vee |w|_1 \notin \mathrm{freq}(L) \} $.  Then, $L$ and $T_L$ are disjoint. 
\begin{enumerate}
\item If $\mathcal{G}_{\hat L}$ is closed under adding isolated vertices, then $\mathcal{G}_{L} = \mathcal{G}_{\hat L}$ and for every $G \in \cG_L$ there exists a $w \in V(G)^\ast$ such that $|w|_v \in \mathrm{freq}(L)$ for every $v \in V(G)$ and $G(L,w) = G(\hat L, w)$. 
\item If $\mathcal{G}_{L}$ is closed under adding universal vertices, then $\mathcal{G}_{\hat L} \subseteq \mathcal{G}_{L}$. 
\end{enumerate}
\end{thmrep}
\begin{proof} 
We assume $L$ to be $0$-$1$-symmetric. Hence for every $w\in L$, $|w|_0 \in \mathrm{freq}(L)$ and $|w|_1 \in \mathrm{freq}(L)$. This shows that $L$ and $T_L$ are disjoint. We can assume, w.l.o.g., that $\mathrm{freq}(L) \neq \mathbb{N}_{\geq 1}$. (Otherwise, $L = \hat L$ and the proof is trivial.) 

\begin{enumerate}
\item We will show the inclusion $\mathcal{G}_{L} \subseteq \mathcal{G}_{\hat L}$ first. Consider a graph $G = (V,E) \in \mathcal{G}_{L}$. Hence, there exists a word $w \in V^*$ such that $G = G(L, w)$. Partition $V$ into the sets  \shortversion{$V_1 = \{ v \in V \mid |w|_v \notin \mathrm{freq}(L) \}$ and $V_2 = \{ v \in V \mid |w|_v \in \mathrm{freq}(L) \}$.}\longversion{$$V_1 = \{ v \in V \mid |w|_v \notin \mathrm{freq}(L) \}\text{ and }V_2 = \{ v \in V \mid |w|_v \in \mathrm{freq}(L) \}\,.$$} $V_1$ is a set of isolated vertices, also confer \autoref{propos:closure-adding-isolates}. $G[V_2] = G(L,h_{V_2}(w))$, because of \autoref{lem:inducedsubgraph}. By definition of $V_2$, $G(L,h_{V_2}(w)) = G(\hat L,h_{V_2}(w))$. Hence, $G[V_2] \in \mathcal{G}_{\hat L}$. Therefore, $G = G[V_2] \cup (V_1, \emptyset) \in \mathcal{G}_{\hat L}$, because $\mathcal{G}_{\hat L}$ is closed closed under adding isolated vertices. 

We will show $\cG_{\hat L} \subseteq \mathcal{G}_{L}$ and the rest of the statement next. For the sake of contradiction, suppose a graph $G \in \mathcal{G}_{\hat L}$ exists such that, for every $w \in V(G)^\ast$ with $G = G(\hat L, w)$, a vertex $t \in V(G)$ exists such that $|w|_t \notin \mathrm{freq}(L)$. We call this statement $(*)$. We want to add a new isolated vertex $v \notin V(G)$ to~$G$. Let $G' = G \cup (\{v \}, \emptyset)$. Since $G' \in \cG_{\hat L}$, a word $w'$ exists such that $G(\hat L, w') = G'$. If $|w'|_v \notin \mathrm{freq}(L)$, $\{ u, v \} \in E(G')$ for every $u \in V(G)$. This is a contradiction, because $V(G)$ is not empty and $v$ is supposed to be isolated in~$G'$. Hence, $|w'|_v \in \mathrm{freq}(L)$. If $|w'|_u \notin \mathrm{freq}(L)$ for a vertex $u \in V(G)$, $\{ u, v \} \in E(G')$. This is a contradiction. Hence, for every $x \in V(G) \cup \{ v \}$, $|w'|_x \in \mathrm{freq}(L)$. Therefore, $G(\hat L, w') = G(L, w')$ and $G(L, h_{V(G)}(w')) = G$. This is a contradiction to $(*)$. Hence, the negation of $(*)$ is true, i.e., for every $G \in G_{\hat L}$, a word $w$ with $G = G(\hat L, w)$ exists such that, for every $t \in V(G)$, $|w|_t \in \mathrm{freq}(L)$. Hence, $G = G(L,w)$ and $G \in \cG_L$.  

\item Consider a graph $G = (V,E) \in \mathcal{G}_{\hat L}$. Hence, there exists a word $w \in V^*$ such that $G = G(\hat L, w)$. Partition $V$ into the sets  $V_1 = \{ v \in V \mid |w|_v \notin \mathrm{freq}(L) \}$ and $V_2 = \{ v \in V \mid |w|_v \in \mathrm{freq}(L) \}$. For each $u\in V$ and $v\in V \setminus \{ u \}$, the following holds: 
\begin{align*}
\{ u, v \} \in E \iff{} & h_{u,v}(w) \in \hat L \\
                 \iff{} & h_{u,v}(w) \in T_L \vee (h_{u,v}(w) \in L \wedge h_{u,v}(w) \notin T_L) \\
                 \iff{} & |w|_u \notin \mathrm{freq}(L) \vee |w|_v \notin \mathrm{freq}(L) \\
                                 & \vee (h_{u,v}(w) \in L \wedge |w|_u \in \mathrm{freq}(L) \wedge |w|_v \in \mathrm{freq}(L) )\\
                 \iff{} &  u \in V_1 \vee v \in V_1 \vee (h_{u,v}(w) \in L \wedge u \in V_2 \wedge v \in V_2)
\end{align*} 
This shows that $V_1$ is a set of universal vertices. For $u,v \in V_2$, $h_{u,v}(w) = h_{u,v}(h_{V_2}(w))$ and hence $G = G(L,h_{V_2}(w)) \nabla V_1$. Since $G(L,h_{V_2}(w)) \in \mathcal{G}_L$ and $\mathcal{G}_L$ is closed under adding universal vertices, $G \in \mathcal{G}_L$. 
\qed
\end{enumerate}\renewcommand{\qed}{}
\end{proof}

\shortversion{\noindent}
Further, observe that $T_L=T_{\overline{L}}$ \iffl $L\cap \{0,1\}^{\ell\text{-uni}}\neq \{0,1\}^{\ell\text{-uni}}$ for all $\ell\in \mathrm{freq}(L)$.
We sketch one application of the trash theorem in the following.

%\todohf{Another possible trash theorem is \cite[Theorem 7]{KitPya2008}. Or more positively formulated, this is a kind of normal form that might also hold in more general terms.}

\begin{toappendix}
\begin{remark}\label{rem:adding-isolates}
The trivial way to add isolated vertices to a word $w$ describing a graph $G(L,w)$ for a finite language $L$ is by choosing the number of occurrences of the new vertex $v$ outside of $\mathrm{freq}(L)$; see \autoref{propos:closure-adding-isolates}. However, this technique cannot be used to guarantee that $\cG_{\hat L}$ is closed under adding isolated vertices. Namely, $\cG_{\hat L}$ is closed under adding isolated vertices \iffl one can add an isolated vertex $v$ in such a way that the number of occurrences of $v$ is in $\mathrm{freq}(L)$. 
\end{remark}
\end{toappendix}

\begin{correp}\applabel{cor:threshold}
$\cG_{\widehat{\langle 01, 001 \rangle}}$ and $\cG_{\langle 010, 011 \rangle \cup \{0,1\}^{2\text{-uni}}}$ are the threshold graphs. 
\end{correp}
\begin{proof}
For $L = \langle 01, 001 \rangle$, $T_L=\{ w \in \{ 0, 1 \}^* \mid |w|_0 \notin \{1,2\} \vee |w|_1 \notin \{1,2\} \}$.
Let $G = (V,E)\in \cG_{\hat L}$ and $w \in V^*$ such that $G = G(\hat L, w)$. If $v \in V$ such that $|w|_v > 2$, define $w' = v \cdot h_{V \setminus \{ v \}}(w)$. Clearly $G = G(L,w')$. By induction, we can show that a word $x \in V^*$ exists such that $|x|_v \in \{ 1, 2\}$ for every $v \in V$ and $G=G(L,x)$. Let $u$ be a vertex such that $u \notin V$. $G(\hat L,x \cdot uu) = G \cup (\{ u \}, \emptyset)$. Hence, $\cG_{\hat L}$ is closed under adding isolated vertices. By \autoref{trash-theorem}, we get $\cG_{\hat L} = \cG_{L}$. 
Clearly, ${\langle 010, 011 \rangle \cup \{0,1\}^{2\text{-uni}}} ={\overline{\hat L}}$. By \autoref{prop:compl}, $\cG_{\langle 010, 011 \rangle \cup \{0,1\}^{2\text{-uni}}}$ is the class of threshold graphs \longversion{because the threshold graphs}\shortversion{as they} are closed under graph complement and because $\cG_{L}$ are the threshold graphs as we know from \autoref{exa:threshold}. %\qed 
\end{proof}
We \longversion{conclude this section with}\shortversion{are next} giving an example how reasoning about operations on words, languages and graphs \longversion{may yield}\shortversion{gives} inclusion\longversion{ result}s for graph classes. The \longversion{proof }idea \longversion{of the following theorem }is to define two specific operations on words of even length \longversion{(that are special cases of (literal) shuffle operations; see \cite{Ber87}) so that they }\shortversion{for} graph union and join.

\begin{theorem}\label{thm:cographs}%\shortversion{$(*)$}
Let $L\subseteq\{0,1\}^*$ contain either both $1001$ and $0110$ or both $1010$ and $0101$ (but not all four words). Then, ${\cal G}_L$ contains all cographs.
\end{theorem}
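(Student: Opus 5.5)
Since cographs are generated from $K_1$ by disjoint union and join, we prove by structural induction that every cograph $G$ has a $2$-uniform word $w$ with $G=G(L,w)$. All words will be $2$-uniform, so only $L\cap\{0,1\}^{2\text{-uni}}$ matters. By the hypothesis this set contains one of the pairs $\langle 0110\rangle$ or $\langle 0101\rangle$, misses the other, and may or may not contain $\langle 0011\rangle$. The base case is $K_1$, represented by $\ta\ta$.

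\textbf{Case $\langle 0110\rangle\subseteq L$, $\langle 0101\rangle\cap L=\emptyset$.} For disjoint alphabets, let $w_1,w_2$ represent $G_1,G_2$. We use two operations.
\begin{itemize}
\item \emph{Join:} take $w_1\,w_2\,w_1^R$ with the occurrences split suitably. Cleaner is to write each $2$-uniform word as $x\,y$ and nest one around the other: $x_1\,w_2\,y_1$, where $w_1=x_1y_1$. Then a letter $u$ of $G_1$ and a letter $v$ of $G_2$ project to $0110$ exactly when $u$ occurs once in $x_1$ and once in $y_1$.
\item \emph{Union:} concatenation $w_1w_2$. Cross pairs then project to $0011$.
\end{itemize}
So the choice of operation depends on whether $0011\in L$.

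\textbf{Normal-form invariant.} To make the nesting work, I would maintain that every representing word has the shape $w=\pi\,\sigma$, where $\pi$ and $\sigma$ are each permutations of $\alph(w)$, i.e.\ each letter appears once in each half. This is a permutation-graph-style representation.
\begin{itemize}
\item \emph{Join:} with $w_1=\pi_1\sigma_1$ and $w_2=\pi_2\sigma_2$, set $w=\pi_1\pi_2\,\sigma_2\sigma_1$. Cross pairs give $0\,1\,1\,0$, hence an edge. Inside each $G_i$ the projection is unchanged, so it still represents $G_i$ by \autoref{lem:inducedsubgraph}.
\item \emph{Union:} set $w=\pi_1\pi_2\,\sigma_1\sigma_2$. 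Cross pairs give $0101$, hence a non-edge, since $\langle0101\rangle\cap L=\emptyset$. The shape is preserved and $0011$ never arises, so membership of $0011$ in $L$ is irrelevant.
\end{itemize}

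\textbf{Case $\langle 0101\rangle\subseteq L$, $\langle 0110\rangle\cap L=\emptyset$.} The same invariant works with the roles exchanged: $\pi_1\pi_2\sigma_1\sigma_2$ gives the join and $\pi_1\pi_2\sigma_2\sigma_1$ gives the union. Alternatively, I would deduce this case from the first. Complementation swaps union and join and cographs are closed under complement. Using \autoref{prop:compl} would require care because of the $0011$ words, so I would rather give the direct construction. In both cases the invariant means that words of the form $0011$ never occur.

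\textbf{Main obstacle.} The key difficulty is choosing a shape invariant that avoids projections onto $0011$, whose membership in $L$ is unknown. The $\pi\sigma$ form handles this. It remains only to check that rearranging the blocks does not change the projections of pairs inside $G_i$; this holds because a relative order within a side is preserved.
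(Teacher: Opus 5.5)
Your proof is correct and follows the paper's. The paper uses the same invariant: each word splits as $w=w_1w_2$ with every letter occurring once in each half, starting from $vv$. It then takes $w_1u_1w_2u_2$ and $w_1u_1u_2w_2$ (your $\pi_1\pi_2\sigma_1\sigma_2$ and $\pi_1\pi_2\sigma_2\sigma_1$) as union and join respectively, swapping these roles in the other case. Your invariant makes the exploratory first bullets redundant (plain concatenation, which produces $0011$), so you can drop them.
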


\begin{proof}
Clearly, a $K_1$ can be described by the word $vv$.
This word, and all words~$w$ that we will construct in the following in order to describe a cograph, has the following property: If $w\in V^*$ describes a graph of order~$n=|V|$, then $|w|=2n$ and $w$ can be \emph{evenly decomposed} into $w=w_1w_2$ such that $|w_1|=|w_2|=n$ and every vertex from~$V$ occurs exactly once in $w_1$ and exactly once in $w_2$.\footnote{In the context of classical word-representability, this property is also known as \emph{permutationally representable}.}

Hence, let $G_1=(V_1,E_1)$ and $G_2=(V_2,E_2)$, with $V_1\cap V_2=\emptyset$, be two cographs that, by induction hypothesis, can be described by words $w$ and $u$ that can be evenly decomposed as $w=w_1w_2$ and $u=u_1u_2$. Define $x=w_1u_1w_2u_2$ and $y=w_1u_1u_2w_2$. Set $G(x)=(V,E_x)$ and $G(y)=(V,E_y)$ with $V=V_1\cup V_2$. Clearly, $E_1\cup E_2\subseteq E_x\cap E_y$. 

Now, assume that $L$ contains both $1001$ and $0110$ but neither $1010$ nor  $0101$.
Then, $G(x)=G_1\cup G_2$ and $G(y)=G_1 \nabla G_2$. 
Conversely, if $L$ contains both $1010$ and $0101$ but neither $1001$ nor $0110$, then
$G(x)=G_1\nabla G_2$ and $G(y)=G_1 \cup G_2$. 

In both cases, by induction we see that all cographs are contained in the graph class~${\cal G}_L$.%\qed
\end{proof}

\begin{corollary}
The classical word-representable graphs contain all cographs. \longversion{More precisely, as cographs can be represented by 2-uniform words, they have representation number at most~2. Moreover, there are permutationally representable by 2-uniform words.}
\end{corollary}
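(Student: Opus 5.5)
The plan is to apply \autoref{thm:cographs} to the language $L_{\classical}$. First, I check that $L_{\classical}=(1\cup\emptyword)(01)^*(0\cup\emptyword)\cup(0\cup\emptyword)(10)^*(1\cup\emptyword)$ contains both $0101$ and $1010$, since these are exactly the alternating words of length four. Second, I check that it contains neither $1001$ nor $0110$, because each has a factor $00$ or $11$ and so is not alternating. The hypothesis of \autoref{thm:cographs} then holds in its second form, "both $1010$ and $0101$, but not all four words". The theorem gives that $\cG_{L_{\classical}}$ contains all cographs. By \autoref{exa:represented-graphs-classical}, $\cG_{L_{\classical}}$ is the class of word-representable graphs, which is the claim.

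For the refinement about representation number, I would reuse the construction inside the proof of \autoref{thm:cographs}. Every cograph is represented there by a word $w$ that is evenly decomposable, so each vertex occurs exactly once in each half and $w$ is $2$-uniform. Every projection $h_{u,v}(w)$ therefore lies in $\{0,1\}^{2\text{-uni}}$. Whether it lies in $L_{\classical}$ or in $L_{\text{cl-2-uni}}=\langle 0101\rangle$ is thus the same condition, so the cograph is $L_{\text{cl-2-uni}}$-represented and has representation number at most~$2$. Since $w=w_1w_2$ where $w_1$ and $w_2$ are both permutations of $V$, the representation is also permutational.

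No step looks like a real obstacle. The only point needing care is that the base case $vv$ and both combinations $x=w_1u_1w_2u_2$ and $y=w_1u_1u_2w_2$ keep the even decomposition. This was already established in the proof of \autoref{thm:cographs}, so only the membership checks for the four length-$4$ words remain, and these are immediate.
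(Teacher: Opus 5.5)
Your proposal is correct and follows the paper's route. The corollary comes from applying \autoref{thm:cographs} to $L_{\classical}$, which contains $0101$ and $1010$ but neither $0110$ nor $1001$. The refinements (representation number at most~$2$ and permutational representability) come, as you say, from the evenly decomposable words built in that theorem's proof, which are $2$-uniform concatenations of two permutations of~$V$.
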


\begin{toappendix}
One can find  properties similar to \autoref{thm:cographs} for $L$ based on words of length $kn$ for any $k\geq 2$ that also guarantee that all cographs can be represented, by giving other decomposition conditions for these words\longversion{, but these conditions grow increasingly complicated without giving too many further insights. Therefore, we refrain from stating them explicitly.} %But this way, one can easily see that}
%\shortversion{. Similarly,} any class  $\mathcal{G}_{L_{k\text{-uni}}}$ (for $k\geq 2$) contains all cographs.
\end{toappendix}

\section{Graph Classes From Small Finite Languages}\label{sec:specialGraphClassesFiniteLanguages}
\label{sec:special-results}

\autoref{tab:graphClasses-intro} shows the results of this section for $1$-, $(1,2)$- and $2$-uniform languages. 
%They have been also explicitly summarized and highlighted as \autoref{thm:small-languages-intro} in the introduction. 
\longversion{Notice that mostly we can characterize graph classes known from the graph theory literature, but sometimes (namely, when this class is not closed under adding isolated vertices), we have to make unions of null graphs explicit due to \autoref{propos:closure-adding-isolates}, which applies to all finite languages and their related graph classes. In a sense, this could be then seen as the smallest superclass of graphs that is hereditary.}

\longversion{\subsection{Interval Models With Respect To $(1,2)$- and $2$-Uniform Languages}}
\shortversion{\paragraph*{Interval Models With Respect To $(1,2)$- and $2$-Uniform Languages.}}
\longversion{In this subsection, w}\shortversion{W}e will show that there is a uniform way to interpret graph classes defined by $(1,2)$- and $2$-uniform languages~$L$ geometrically. This explains why we encounter many well-known graph classes as such classes~$\cG_L$.
Interestingly, not all of these geometric models have been described before to the best of our knowledge.

\begin{definition} \label{def:interval-model}
Let $L$ be a $(1,2)$-uniform or a $2$-uniform language such that $\emptyset \subsetneq L \subsetneq \{0,1\}^{1,2\text{-uni}}$, or $\emptyset \subsetneq L \subsetneq \{0,1\}^{2\text{-uni}}$, respectively. Let $\mathcal{I} = \{ I_v \}_{v \in V}$ be a family of closed intervals on a linearly ordered set $\mathcal{D}$ such that for each interval $I \in \mathcal{I}$ an endpoint of $I$ is never an endpoint of another interval.\longversion{ We allows intervals where left and right endpoint coincide. We say that $\mathcal{I}$ is}\shortversion{ Call $\mathcal{I}$}\emph{ an interval model of $G$ with respect to the language $L$} if, for each $v \in V$ and for each $u \in V \setminus \{ v \}$, $\{u,v\} \in E$ \iffl the pair $(I_u,I_v)$ matches the pattern of a word contained in $L$. \longversion{Here, t}\shortversion{T}he pair $([x_\alpha, x_\beta],[y_\alpha, y_\beta])$ \emph{matches the pattern of \longversion{the word} $w \in L$} if 
\\[.3ex]
\begin{minipage}{.48\textwidth}
\begin{itemize}
\item $ x_\alpha < x_\beta < y_\alpha < y_\beta$ for $w = 0011$,
\item $x_\alpha <y_\alpha < x_\beta < y_\beta$ for $w = 0101$,
\item $x_\alpha < y_\alpha < y_\beta < x_\beta$ for $w = 0110$,
\item $x_\alpha < x_\beta < y_\alpha = y_\beta$ for $w = 001$, \end{itemize} 
\end{minipage}\begin{minipage}{.52\textwidth}
\begin{itemize}
\item $x_\alpha < y_\alpha = y_\beta < x_\beta$ for $w = 010$,
\item $x_\alpha = x_\beta < y_\alpha < y_\beta$ for $w = 011$, or
\item $([y_\alpha, y_\beta],[x_\alpha, x_\beta])$ matches the pattern of the word $\widetilde{w}$. 
\end{itemize} \end{minipage}
\end{definition}
\shortversion{\noindent}
The different kinds of interval models \longversion{yielded by}\shortversion{of} \autoref{def:interval-model} are illustrated in \autoref{fig:interval-model}. 
\begin{figure}[tb] \centering
\begin{tikzpicture}
\newcommand{\dy}{0.11}
\newcommand{\drawinterval}[3]{\draw (#1,#3+\dy) -- (#1,#3-\dy) -- (#1,#3) -- (#2,#3) -- (#2,#3+\dy) -- (#2,#3-\dy);}

\draw (0,1) node[anchor=west,align=left]{0 0 1 1};
\drawinterval{0.1}{0.6}{0.7}
\drawinterval{0.7}{1.2}{0.7}
\draw (0,0) node[anchor=west,align=left]{disjoint\\intervals};

\draw (4,1) node[anchor=west,align=left]{0 1 0 1};
\drawinterval{4.1}{4.9}{0.7}
\drawinterval{4.4}{5.2}{0.5}
\draw (4,0) node[anchor=west,align=left]{intersecting intervals,\\no inclusion (overlap)};

\draw (8,1) node[anchor=west,align=left]{0 1 1 0};
\drawinterval{8.1}{9.2}{0.7}
\drawinterval{8.4}{8.9}{0.5}
\draw (8,0) node[anchor=west,align=left]{one interval\\contains the other};

\draw (0,-1) node[anchor=west,align=left]{0 0 1};
\drawinterval{0.1}{0.6}{-1.3}
\filldraw[black] (0.8,-1.3) circle (2pt);
\draw (0,-2) node[anchor=west,align=left]{point to the right\\of the interval};

\draw (4,-1) node[anchor=west,align=left]{0 1 0};
\drawinterval{4.1}{4.9}{-1.3}
\filldraw[black] (4.5,-1.5) circle (2pt);
\draw (4,-2) node[anchor=west,align=left]{point contained\\in the interval};

\draw (8,-1) node[anchor=west,align=left]{0 1 1};
\drawinterval{8.4}{8.9}{-1.3}
\filldraw[black] (8.2,-1.3) circle (2pt);
\draw (8,-2) node[anchor=west,align=left]{point to the left\\of the interval};
\end{tikzpicture} 
\caption{Interval pairs matching words from $(1,2)$- and $2$-uniform languages (with points $p$ considered intervals of the form $[p,p]$)} \label{fig:interval-model}
\end{figure}

\begin{example} Consider\longversion{ the language} $L = \langle 010 \rangle$. For each graph described by an interval model with respect to $L$, the vertices can be partitioned into the vertices described by intervals that are points and intervals that are not points. There are only edges between those two vertex sets. \longversion{A vertex described by a point is adjacent to a vertex described by an interval that is not a point \iffl the point is contained in the interval.}\shortversion{A ``point vertex'' is adjacent to an ``interval vertex''  \iffl the point is contained in the interval.} This is illustrated in \autoref{fig:interval-model-convex}.
\begin{figure} \begin{center}
\begin{tikzpicture}[scale=0.8]
\newcommand{\dy}{0.1}
\newcommand{\drawinterval}[4]{\draw[#1] (#2,#4+\dy) -- (#2,#4-\dy) -- (#2,#4) -- (#3,#4) -- (#3,#4+\dy) -- (#3,#4-\dy);}

\filldraw (1.50,0) circle (2pt); \draw[dashed] (1.50,0) -- (1.50,1);
\filldraw (3.25,0) circle (2pt); \draw[dashed] (3.25,0) -- (3.25,1);
\filldraw (3.50,0) circle (2pt); \draw[dashed] (3.50,0) -- (3.50,1);
\filldraw (5.00,0) circle (2pt); \draw[dashed] (5.00,0) -- (5.00,1);
\filldraw (6.25,0) circle (2pt); \draw[dashed] (6.25,0) -- (6.25,1);
\filldraw (6.50,0) circle (2pt); \draw[dashed] (6.50,0) -- (6.50,1);

\drawinterval{red}  {0.0}{3.75}{0.7}
\drawinterval{blue} {2.5}{5.50}{0.5}
\drawinterval{black}{4.5}{7.00}{0.7}

\filldraw (1.50,-2) circle (2pt);
\filldraw (3.25,-2) circle (2pt);
\filldraw (3.50,-2) circle (2pt);
\filldraw (5.00,-2) circle (2pt);
\filldraw (6.25,-2) circle (2pt);
\filldraw (6.50,-2) circle (2pt);

\filldraw (2.00,-1) circle (2pt);
\draw[red] (2.00,-1) -- (1.50,-2);
\draw[red] (2.00,-1) -- (3.25,-2);
\draw[red] (2.00,-1) -- (3.50,-2);

\filldraw (4.00,-1) circle (2pt);
\draw[blue] (4.00,-1) -- (3.25,-2);
\draw[blue] (4.00,-1) -- (3.50,-2);
\draw[blue] (4.00,-1) -- (5.00,-2);

\filldraw (5.75,-1) circle (2pt);
\draw[black] (5.75,-1) -- (5.00,-2);
\draw[black] (5.75,-1) -- (6.25,-2);
\draw[black] (5.75,-1) -- (6.50,-2);
\end{tikzpicture}
\end{center}
\caption{An interval model with respect to\longversion{ the language} $\langle 010 \rangle$ and the graph described\longversion{ by the model}.} \label{fig:interval-model-convex}
\end{figure}
\end{example}

The following lemma is not only interesting for understanding when we can build interval models\longversion{, see \autoref{thm:interval-model},}\shortversion{ (\autoref{thm:interval-model})} but it is also crucial for \longversion{the graph class property of }adding isolate\longversion{d vertice}s \longversion{as explained in \autoref{rem:adding-isolates}}\shortversion{by descibing them with letters occurring thrice}.  Also confer \autoref{trash-theorem} and \autoref{prop:2-uniform-plus-vertices}. 

\begin{lemrep} \applabel{lem:uniformword}
1. For every $L \subsetneq \{0,1\}^{1,2\text{-uni}} $ and for every $G \in \cG_L$, there  \longversion{exists a word}\shortversion{is a} $w \in V(G)^\ast$ such that $G = G(L,w)$ and $|w|_v \in \{ 1, 2 \}$ for every $v \in V(G)$. \\
2. For every $L \subsetneq \{0,1\}^{2\text{-uni}} $ such that $0110 \notin L$ or $0011 \notin L$ and for every $G \in \cG_L$, there \longversion{exists a word}\shortversion{is a} $w \in V(G)^\ast$ \longversion{such that}\shortversion{with} $G = G(L,w)$ and $|w|_v = 2$ for every $v \in V(G)$. 
\end{lemrep}

\begin{proof}
We only prove the first statement formally. The second statement can be proven analogously. 

Let $G \in \cG_L$. Hence there exists a word $w' \in V(G)^\ast$ such that $G = G(L,w')$. Define $V' = \{ v \in V(G) \mid |w'|_v \notin \{ 1, 2\} \}$ and let $V' = \{ v_1, \dots, v_k \}$ for $k = |V'|$. All the vertices in $V'$ are isolated. Define $$
w=
\begin{cases}
v_1^2 \cdots v_k^2 \cdot h_{V(G) \setminus V'}(w'), & \text{if } 001 \notin L\\
v_1 \cdots v_k \cdot h_{V(G) \setminus V'}(w') \cdot v_1 \cdots v_k, & \text{if } 010 \notin L\, \land\, 001 \in L\\
h_{V(G) \setminus V'}(w') \cdot v_1^2 \cdots v_k^2, & \text{if } 010 \in L\, \land\,  001 \in L.
\end{cases}
$$
If $001 \in L$ and $010 \in L$, then $011 \notin L$, because $L \neq \{0,1\}^{1,2\text{-uni}}=\langle\{001,010,100\}\rangle$ and $L$ is $0$-$1$-symmetric. Hence $G(L,w) = G(L,w') = G$ and $|w|_v \in \{ 1, 2 \}$ for every $v \in V(G)$.

Concerning the second statement, notice that the case distinction necessary to define a 2-uniform word that is equivalent to a given word is suggested by the cases given as conditions in the statement of the lemma. We need the conditions ``$0110 \notin L$ or $0011 \notin L$'' as the possible case $0101\notin L$ is not helpful for encoding isolates. %\qed
\end{proof}

%\begin{corollary}
%For every $L \subsetneq (0 \shuffle 1^2) \cup (1 \shuffle 0^2) $ or  $L \subsetneq 00 \shuffle 11 $ such that $0110 \notin L$ or $0011 \notin L$, the graph class $\cG_L$ is closed under adding isolated vertices. 
%\end{corollary}

%\todohfInPlace{We can claim that our geometric model is simpler than the one proposed in~\cite{BriLozSta2016}.}

The necessity of the conditions  $L \neq\{0,1\}^{1,2\text{-uni}} $ and  $L \neq \{0,1\}^{2\text{-uni}} $ in \autoref{lem:uniformword} is easy to see if one observes that 
$\{0,1\}^{1,2\text{-uni}}=\langle  001,010,011\rangle $ and\longversion{ that}
$\{0,1\}^{2\text{-uni}}=\langle 0011, 0101,0110\rangle\,.$
Consulting \autoref{tab:graphClasses-intro}, one sees that the\shortversion{se}\longversion{ corresponding} graph classes contain the complete (bipartite) graphs and all null graphs. For $\langle  001,010,\shortversion{\linebreak[4]}011\rangle$ and $\langle 0011, 0101,0110\rangle $, non-trivial null graphs can be only represented by letters whose frequentness is different from 1 and 2 (in the case of $\langle  001,010,011\rangle$) or different from~2 (in the case of $\langle 0011, \shortversion{\linebreak[4]}0101,0110\rangle $). The necessity of the condition ``$0110 \notin L$ or $0011 \notin L$'' is explained in the following\longversion{, less trivial example}.

\begin{prprep}\applabel{exm:verylongexample}
There are graphs $G\in \cG_L$ for $L \coloneqq \langle 0011,0110\rangle$ for which there exists no 2-uniform word $w$ such that $G=G(L,w)$, for example, $C_5\cup K_1$. 
\end{prprep}

%\begin{example}\label{exm:verylongexample}
%    Interestingly, there are graphs $G\in \cG_L$ for $L \coloneqq \langle 0011,0110\rangle$ for which there exists no 2-uniform word $w$ such that $G=G(L,w)$. 
\begin{proof}
One such example is $G=(V \coloneqq \{a,b,c,d,e,f\},E)$ with $$E \coloneqq\{\{a,b\},\{b,c\},\{c,d\},\{d,e\},\{e,a\}\})\,.$$ $G$ is isomorphic to $C_5\cup K_1$. %\todoscsInPlace{Kevin, I think we should say that $G$ is isomorphic to $C_5$.}\todokmInPlace{Silas, $G$ is isomorphic to $C_5\cup K_1$.} 
    First of all, $G=G(L,eacdabdebcf)$. Assume, for the sake of contradiction, that there is a 2-uniform word $w$ with $G=G(L,w)$. %For $A \subseteq V$, let $h_A: V \rightarrow A$ be the homomorphism with $h_A(v_1) = v_1$ for each $v_1 \in A$ and $h_A(v_2) = \emptyword$ for each $v_2 \in V \setminus A$.

    As $I_1\coloneqq \{a,c,f\}$ is a independent set, $$h_{I_1}(w)\in \{acfacf, afcafc, cafcaf, cfacfa, facfac, fcafca\}\,.$$ The function $h: V \to V$ with $h(x)=x$ for $x\in \{ b, f\}$, $h(c)=a$, $h(a)=c$, $h(d)=e$ and $h(e)=d$, is a graph automorphism, i.e., an isomorphism from $G$ to~$G$. Further, $L$ is $0$-$1$-symmetric. By \autoref{lem:isomorphism} and \autoref{cor:L-symmetry}, we only need to consider $h_{I_1}(w) \in \{acfacf, afcafc\}$. This gives two main cases with a number of subcases.
        
    \noindent \textbf{Case 1.} $h_{I_1}(w) = acfacf$. Consider $I_2 \coloneqq \{a,d,f\}$.
    
    \noindent \textbf{Case 1.a} $h_{I_2}(w) = dafdaf$.  Then $h_{I_1 \cup I_2}(w)= dacfdacf$ and $h_{d,c}(w)=0101$.
     
    \noindent \textbf{Case 1.b} $h_{I_2}(w) = adfadf$.  Then $h_{I_1 \cup I_2}(w) \in \{adcfacdf, acdfadcf\}$ as otherwise $h_{d,c}(w)=0101$. The function $h': V \to V$ with $h'(x)=x$ for $x\in \{ a, f\}$, $h'(c)=d$, $h'(d)=c$, $h'(b)=e$ and $h'(e)=b$, is a graph automorphism. Since $h(adcfacdf) = acdfadcf$, we only need to consider $h_{I_1 \cup I_2}(w) = adcfacdf$. Consider $I_3 \coloneqq \{ b,d,f\}$
     
    \noindent \textbf{Case 1.b.I} $h_{I_3}(w) = bdfbdf$. $h_{I_1 \cup I_3}(w) = badcfacbdf$ , as otherwise $h_{a,b}(w)=0101$ or $h_{b,c}(w)=0101$. Consider $I_4 \coloneqq \{e,b,f\}$.
     
    \noindent \textbf{Case 1.b.I.i} $h_{I_4}(w) = ebfebf$. Thus, $h_{e,d}(w) = 0101$.
     
    \noindent \textbf{Case 1.b.I.ii} $h_{I_4}(w) = befbef$. Hence, $h_{a,e}(w) = 0101$ or $h_{c,e}(w) = 0110$. 
     
    \noindent \textbf{Case 1.b.I.iii} $h_{I_4}(w) = bfebfe$. Then, $h_{d,e}(w) = 0101$. 
     
    \noindent \textbf{Case 1.b.II} $h_{I_3}(w) = dbfdbf$. This implies $h_{a,b}(w)=0101$.
     
    \noindent \textbf{Case 1.b.III} $h_{I_3}(w) = dfbdfb$. $h_{I_1 \cup I_3}(w) = adcfacbdfb$ , as otherwise $h_{a,b}(w)=0101$ or $h_{b,c}(w)=0101$. Consider $I_4 \coloneqq \{e,b,f\}$.

    \noindent \textbf{Case 1.b.III.i} $h_{I_4}(w) = efbefb$. Thus, $h_{a,e}(w)=0101$ or $h_{c,e}(w)=0110$.   
     
    \noindent \textbf{Case 1.b.III.ii} $h_{I_4}(w) = febfeb$. Therefore,  $h_{d,e}(w)=0101$.  
     
    \noindent \textbf{Case 1.b.III.iii} $h_{I_4}(w) = fbefbe$. This implies $h_{c,e}(w)=0011$. 
     
    \noindent \textbf{Case 1.c} $h_{I_2}(w) = afdafd$. So, $h_{c,d}(w)=0101$.
    
    \noindent \textbf{Case 2} $h_{I_1}(w) = afcafc$. 
    
    \noindent \textbf{Case 2.a} $h_{I_2}(w) = dafdaf$. Hence, $dafdcafc$. Otherwise, $h_{d,c}(w)=0101$. 
    
    \noindent \textbf{Case 2.a.I} $h_{I_3}(w) = bdfbdf$. Thus $h_{I_1\cup I_3}=bdafbdcafc$ and $h_{a,b}(w) = 1010$. 
    
    \noindent \textbf{Case 2.a.II} $h_{I_3}(w) = dbfdbf$. Therefore,  $h_{I_1\cup I_3}=dabfdbcafc$. Otherwise, $h_{b,c}(w) = 0101 $ or $ h_{a,b}(w) = 1010$. 
    
    \noindent \textbf{Case 2.a.II.i} $h_{I_4}(w) = ebfebf$. This implies,  $h_{c,e}(w)=1100$.

    \noindent \textbf{Case 2.a.II.ii} $h_{I_4}(w) = befbef$. So  $h_{d,e}(w)=0101$.
    
    \noindent \textbf{Case 2.a.II.iii} $h_{I_4}(w) = bfebfe$. Hence  $h_{a,e}(w)=0101$.
    
    \noindent \textbf{Case 2.a.III} $h_{I_3}(w) = dfbdfb$. Therefore,   $h_{a,b}(w) = 0101$. 
    
    \noindent \textbf{Case 2.b} $h_{I_2}(w) = adfadf$. Hence, $adfcadfc$ and $h_{d,c}(w)=0101$.
        
    \noindent \textbf{Case 2.c} $h_{I_2}(w) = afdafd$. Then there are two cases for $h_{I_1 \cup I_2}(w)$: $afdcafcd$ and $afcdafdc$. We can use again $h'$. Therefore, we only have to consider $afdcafcd$.

    \noindent \textbf{Case 2.c.I} $h_{I_3}(w) = bfdbfd$. Hence,  $h_{I_1\cup I_3}(w) = abfdbcafcd$. Otherwise, $h_{b,c}(w) = 0101 $ or $ h_{a,b}(w) = 1010$. 

    \noindent \textbf{Case 2.c.I.i} $h_{I_4}(w) = ebfebf$. Thus,  $h_{c,e}(w) = 1100$.

    \noindent \textbf{Case 2.c.I.ii} $h_{I_4}(w) = befbef$. Thus,  $h_{d,e}(w) = 1010$.

    \noindent \textbf{Case 2.c.I.iii} $h_{I_4}(w) = bfebfe$. Thus,  $h_{a,e}(w) = 0101$.

    \noindent \textbf{Case 2.c.II} $h_{I_3}(w) = fbdfbd$. Then $h_{a,b}(w)=0101$.
    
    \noindent \textbf{Case 2.c.III} $h_{I_3}(w) = fdbfdb$. Then $h_{a,b}(w)=0101$ or $h_{b,c}(w)=1010$.

    Since we discussed all of the cases, there is no 2-uniform word $w$ with $G=G(L,w)$.%\qed
\end{proof} 

\begin{theorem} \label{thm:interval-model}
Let $L$ be a $(1,2)$-uniform language such that $L \subsetneq \{0,1\}^{1,2\text{-uni}}$ or let $L$ be a $2$-uniform language such that $0110 \notin L$ or $0011 \notin L$. Then, $G \in \mathcal{G}_L$ \iffl an interval model of $G$ with respect to $L$ exists.
\end{theorem}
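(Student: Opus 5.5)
The idea is to identify words in which every letter occurs once or twice with families of intervals on the set of positions of the word. The bridge in the direction from $\mathcal{G}_L$ to models is \autoref{lem:uniformword}.

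\textbf{From representations to models.} Let $G\in\mathcal{G}_L$. By \autoref{lem:uniformword} there is a word $w\in V(G)^*$ with $G=G(L,w)$ and $|w|_v\in\{1,2\}$ for all $v$; in the $2$-uniform case we get $|w|_v=2$. The side conditions of \autoref{thm:interval-model} are exactly the hypotheses of \autoref{lem:uniformword}. For the $2$-uniform case we also need $L\neq\emptyset$; if $L=\emptyset$, both sides describe only null graphs and the claim is trivial.

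Take $\mathcal{D}=[|w|]$ with the usual order. For each $v$:
\begin{itemize}
\item if $v$ occurs twice, at positions $i<j$, set $I_v=[i,j]$;
\item if $v$ occurs once, at position $i$, set $I_v=[i,i]$.
\end{itemize}
Distinct intervals never share an endpoint, since distinct letters occupy distinct positions.

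Now fix $u\neq v$. The word $h_{u,v}(w)$ is read off directly from the relative order of the endpoints of $I_u$ and $I_v$. A case check against \autoref{def:interval-model} shows that $(I_u,I_v)$ matches the pattern of $h_{u,v}(w)$, and of no other word. For example, if $u$ occurs twice and $v$ once, inside the span of $u$, then $h_{u,v}(w)=010$ and the pattern is ``point contained in interval''. The swapped clause of the definition handles the case where the roles of $0$ and $1$ are exchanged, and $L$ is $0$-$1$-symmetric. Hence $\{u,v\}\in E$ iff $h_{u,v}(w)\in L$ iff $(I_u,I_v)$ matches a pattern of a word in $L$.

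\textbf{From models to representations.} Conversely, let an interval model $\{I_v\}$ of $G$ with respect to $L$ be given. Since endpoints of different intervals are distinct, we can list all endpoints in increasing order. Write the letter $v$ at each endpoint of $I_v$; a degenerate interval $[p,p]$ contributes a single occurrence of $v$. The result is a word $w$ with $|w|_v\in\{1,2\}$.

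As before, for each pair $u\neq v$ the endpoint order of $(I_u,I_v)$ determines $h_{u,v}(w)$, and the pattern it matches is exactly that word. So $G(L,w)=G$.

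\textbf{Consistency of the two settings.} In the $2$-uniform case the model might contain degenerate intervals, i.e.\ points. The definition only assigns patterns to pairs of nondegenerate intervals when $L$ is $2$-uniform. A pair involving a point therefore matches no pattern of a word in $L$, so it yields no edge, and the constructed word still works. The edge relation computed from $w$ agrees: a letter occurring once together with a letter occurring twice gives a projection that is not $2$-uniform, hence not in $L$.

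The main obstacle is this bookkeeping. We must check that the clauses of \autoref{def:interval-model} are exhaustive and mutually exclusive for all endpoint configurations, including the reversed clause. Once that is done, \emph{matches a pattern of some word in $L$} coincides with $h_{u,v}(w)\in L$. The genuinely nontrivial input, namely that the frequencies can be normalized, is supplied by \autoref{lem:uniformword}; \autoref{exm:verylongexample} explains why the hypothesis ``$0110\notin L$ or $0011\notin L$'' is needed there.
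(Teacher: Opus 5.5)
Your proposal is correct and follows the paper's proof essentially step for step. You normalize frequencies with \autoref{lem:uniformword}, take the first and last occurrences as interval endpoints, and conversely read the word off the sorted endpoints, checking that each pair of intervals matches exactly the pattern of $h_{u,v}(w)$. Your extra paragraph on degenerate intervals in a $2$-uniform model makes explicit what the paper leaves to ``by construction''. The pair of two points, where $h_{u,v}(w)\in\{01,10\}$ and no pattern applies, is equally harmless in both the $2$-uniform and the $(1,2)$-uniform setting.
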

\begin{proof}
We are going to show both directions of the logical equivalence. First, let $G =(V,E) \in \mathcal{G}_L$. \longversion{According to}\shortversion{By} \autoref{lem:uniformword}, \longversion{the following holds.}\shortversion{we know:} If $L$ is $(1,2)$-uniform, then a word $w \in V^\ast$ exists \longversion{such that}\shortversion{with} $G = G(L,w)$ and $|w|_v \in \{ 1, 2 \}$ for every $v \in V$. If $L$ is $2$-uniform, then a word $w \in V^\ast$ exists such that $G = G(L,w)$ and $|w|_v = 2$ for every $v \in V$. For every $v \in V$, let $v_\alpha$ be the index of the first and $v_\beta$ be the index of the last occurrence of $v$ in $w$ and define $I_v\coloneqq[v_\alpha, v_\beta]$\shortversion{, where $I_v=[v_\alpha, v_\alpha]$ if $|w|_v=1$}\longversion{. Note that $I_v$ has the form $[v_\alpha, v_\alpha]$ if $v$ only occurs once in $w$}. Let $\mathcal{I} = \{ I_v \}_{v \in V} $. For every $v \in V$ and $u \in V \setminus \{ u \}$, $(I_v,I_u)$ matches the pattern of $h_{v,u}(w)$ by construction. Hence, $\mathcal{I}$ is an interval model of $G$ \longversion{with respect to}\shortversion{and}~$L$. 

\longversion{For the other direction}\shortversion{Conversely}, let $G =(V,E)$ and $\mathcal{I} = \{ I_v \}_{v \in V}$ be an interval model of $G$ with respect to $L$. Let $i_1 < i_2 < \dots < i_m$ be  the endpoints of the intervals \longversion{contained in}\shortversion{from}~$\mathcal{I}$, such that for an interval \longversion{$I$ of the form }$I=[i_j, i_j]$ the endpoint $i_j$ only occurs once in the list. For each $j \in [m]$ and\longversion{ each} $v \in V$, choose $w_j = v$ if $i_j$ is an endpoint of the interval~$I_v$. Define $w \coloneqq w_1 \cdots w_m$. For every $v \in V$ and $u \in V \setminus \{ u \}$, $(I_v,I_u)$ matches the pattern of $h_{v,u}(w)$ by construction.
Hence, $G(L,w) = G$ and $G \in \mathcal{G}_L$.\qed
\renewcommand{\qed}{}
\end{proof}

\begin{remark} 
For intervals $[x_\alpha, x_\beta]$ with $x_\alpha < x_\beta$, the intervals do not have to be closed \longversion{because of our model assumption}\shortversion{as we assume} that endpoints of different intervals are distinct. 
\end{remark}

\longversion{\subsection{$(1,2)$-Uniform Languages}}\shortversion{\paragraph*{$(1,2)$-Uniform Languages.}} 
\longversion{In the following}\shortversion{Next}, we consider all $(1,2)$-uniform  languages~$L\subseteq\{0,1\}^*$\longversion{ where each letter occurs either once or twice.}. This leads us to consider the following cases:
%\begin{enumerate}
%    \item 
$L_1=\langle 001\rangle$,
%    \item 
$L_2=\langle 010\rangle$,
%    \item 
$L_3=\langle 011\rangle$,
%    \item 
$L_4=\langle 001,010\rangle$,
%    \item 
$L_5=\langle 001,011\rangle$,
%    \item 
$L_6=\langle 010,011\rangle$, and
%    \item 
\longversion{$L_7=\langle 001,010,011\rangle=\{0,1\}^{1,2\text{-uni}}$}\shortversion{$L_7=\{0,1\}^{1,2\text{-uni}}$}.
%\end{enumerate
By \autoref{cor:L-symmetry}, $\cG_{L_1}=\cG_{L_3}$ as 
\longversion{$L_1^R=\{001,110\}^R=\{100,011\}=L_3$}\shortversion{$L_1^R=\{001,110\}^R=L_3$}. $\cG_{L_4}=\cG_{L_6}$. This leaves us with five languages to consider\longversion{ in this subsection}. All \longversion{graph }classes contain only bipartite graphs due to \autoref{prop:bipartite}\longversion{, but we will make this bipartition explicit sometimes in the following}. Disregarding one trivial case, we will see bipartite chain graphs, convex graphs and their respective complements\longversion{ characterized by these five languages. This is also summarized in}\shortversion{; see} \autoref{tab:graphClasses-intro}.
\autoref{thm:interval-model} implies the following; also confer \autoref{fig:interval-model-convex}: 
\begin{corollary} \label{thm:convex} 
    A graph $G=(V,E)$ is %$\langle 1011\rangle$
    $\langle 010\rangle$-representable \iffl $G$ is a convex graph.
\end{corollary}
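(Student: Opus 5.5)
We will derive the statement from \autoref{thm:interval-model}, applied to $L=\langle 010\rangle=\{010,101\}$. This language is $(1,2)$-uniform and a proper subset of $\{0,1\}^{1,2\text{-uni}}$. Hence $G\in\cG_{\langle 010\rangle}$ \iffl $G$ has an interval model with respect to $\langle 010\rangle$. It therefore suffices to show that a graph has such a model \iffl it is convex. Here we use the convention that the convex side $A$ carries the ordering and each $b\in B$ has a consecutive neighborhood; null graphs and graphs with one side empty count as convex with a trivial partition.

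\textbf{From a model to convexity.} Suppose $\mathcal{I}=\{I_v\}_{v\in V}$ is an interval model of $G$ with respect to $\langle 010\rangle$. Partition $V$ into $A$, the vertices whose interval is a point $[p,p]$, and $B$, the vertices whose interval is non-degenerate. Two points never match $010$ or $101$, since both words require one interval to be a point and the other not. Likewise, two non-degenerate intervals never match. So $A$ and $B$ are independent sets. A point vertex $a$ and an interval vertex $b$ are adjacent exactly when the point $p_a$ lies strictly inside $I_b$; this follows because endpoints are pairwise distinct. Order $A$ by the position of its points, $<_A$. Then $N(b)=\{a\in A\mid p_a\in I_b\}$ is a set of points lying in an interval, and hence is consecutive with respect to $<_A$. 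So $G$ is convex.

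\textbf{From convexity to a model.} Let $G$ be convex with bipartition $A,B$ and ordering $a_1<_A\dots<_A a_n$ such that each $N(b)$ is consecutive. Place $a_i$ at the point $2i$, giving $I_{a_i}=[2i,2i]$. For $b\in B$ with $N(b)=\{a_i,\dots,a_j\}\neq\emptyset$, choose $I_b=[2i-1+\varepsilon_b,\,2j+1-\varepsilon_b]$, where the $\varepsilon_b\in(0,1)$ are pairwise distinct so that all endpoints are distinct. If $N(b)=\emptyset$, choose $I_b$ as a short non-degenerate interval at a fresh position far to the right, for example beyond $2n+2$ with distinct offsets. Then:
\begin{itemize}
\item $I_b$ contains exactly the points of $N(b)$, so point–interval pairs match $010$ or $101$ exactly on the edges;
\item point–point pairs and interval–interval pairs never match, which is correct because $A$ and $B$ are independent.
\end{itemize}
This gives an interval model.

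The main obstacle is bookkeeping rather than mathematics. We must enforce the distinct-endpoint requirement of \autoref{def:interval-model} through the $\varepsilon_b$ perturbation. We must also check that the trivial cases (isolated vertices on either side, and graphs with $A$ or $B$ empty) fit the construction, which they do as described.
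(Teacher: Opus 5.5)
Your proof is correct and follows the paper's route. The paper obtains the corollary directly from \autoref{thm:interval-model} applied to $\langle 010\rangle$: point-intervals form one side of the bipartition, non-degenerate intervals form the other, and a point vertex is adjacent to an interval vertex exactly when its point lies in the interval (cf.\ \autoref{fig:interval-model-convex}). Your explicit $\varepsilon_b$-perturbation and your treatment of isolated vertices fill in details the paper leaves implicit.
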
 
%\todoscsInPlace{Everyone, is it obvious how this follows from \autoref{thm:interval-model} or do we need to say something about it?}
%\todoscsInPlace{add a reference to \autoref{fig:interval-model-convex} here}

\begin{theorem} \label{thm:bipartite-chain}
%    A graph $G=(V,E)$ is %$\langle 0111\rangle$
%    $\langle 001\rangle$-representable \iffl $G$ is a bipartite chain graph.
$\mathcal{G}_{\langle 001 \rangle}$ is the class of bipartite chain graphs.
\end{theorem}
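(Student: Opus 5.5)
We prove both inclusions, working through the interval model of \autoref{thm:interval-model}. Since $\langle 001\rangle=\{001,110\}\subsetneq\{0,1\}^{1,2\text{-uni}}$, that theorem shows that $G\in\cG_{\langle 001\rangle}$ \iffl $G$ has an interval model with respect to $\langle 001\rangle$. In such a model, a vertex $u$ represented by a point $p_u$ and a vertex $v$ represented by a proper interval $[x_v,y_v]$ are adjacent \iffl $y_v<p_u$. By \autoref{prop:bipartite}, two points are never adjacent, and neither are two proper intervals.

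\textbf{Models give chain graphs.} Let $B$ be the point vertices and $A$ the interval vertices. Then $N(a)=\{b\in B\mid p_b>y_a\}$ for each $a\in A$. Order $A$ by decreasing right endpoint $y_a$, i.e.\ $a_1\le_A a_2$ \iffl $y_{a_1}\ge y_{a_2}$. This order satisfies $N(a_1)\subseteq N(a_2)$. Hence $G$ is a bipartite chain graph. Alternatively, we can argue directly with words. By \autoref{lem:uniformword} there is a representing word $w$ in which every letter occurs once or twice. For $a$ with $|w|_a=2$, $N(a)$ is the set of singly occurring letters placed after the last occurrence of $a$, and these sets are nested.

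\textbf{Chain graphs give models.} Let $G$ be a bipartite chain graph with bipartition $A,B$ and chain order $a_1\le_A\dots\le_A a_n$, so that $N(a_1)\subseteq\dots\subseteq N(a_n)$. Isolated vertices are harmless, since they can be placed so that they fit the pattern; we may also invoke \autoref{lem:uniformword} or \autoref{propos:closure-adding-isolates}, using letters of frequentness~$3$. We build the word explicitly. Set $D_n=N(a_n)$ and $D_i=N(a_{i})\setminus N(a_{i+1})$ for $i<n$; the latter sets are empty by nestedness, so we instead use the differences $D_i=N(a_i)\setminus N(a_{i-1})$, with $N(a_0)=\emptyset$. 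Take
\[ w \coloneqq a_n a_n\, D_n\, a_{n-1}a_{n-1}\, D_{n-1}\cdots a_1a_1\, D_1 ,\]
where each $D_i$ lists its vertices once in arbitrary order, and the vertices of $B$ without neighbours are appended as $bb b$ or placed before everything as single letters. A letter $b\in D_j$ then follows $a_ia_i$ exactly when $i\ge j$, that is, exactly when $b\in N(a_i)$. So $h_{a_i,b}(w)=001$ for neighbours and $010$ or $100$ otherwise. Pairs inside $A$ give the $2$-uniform word $0011\notin L$, and pairs inside $B$ give $01\notin L$. Isolated vertices of $B$ are placed as single letters at the very front, which produces $100\notin\langle 001\rangle$.

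\textbf{Main obstacle.} The main difficulty is the bookkeeping for vertices of degree zero. Such a vertex may lie in $B$, in which case it goes at the front as a point. It may also lie in $A$ as an interval; then it is placed last as $aa$, with no points to its right. In each case we must confirm that it produces only non-edges, i.e.\ that none of the resulting projections equals $001$ or $110$. Beyond that, we must check that pairs of distinct frequentness never accidentally match $110$. A point placed before an interval yields $100$, which is not in $L$, so this holds.
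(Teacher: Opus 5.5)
Your proof is correct. The forward direction is essentially the paper's argument. The paper orders the once-occurring letters by their position and shows that their neighbourhoods among the twice-occurring letters are nested. You order the twice-occurring letters by last occurrence, i.e.\ right endpoint. These are two symmetric forms of the same observation, and going through \autoref{thm:interval-model} instead of directly through \autoref{lem:uniformword} changes nothing.

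The converse is where you genuinely differ. The paper writes down only the words $a_1a_1b_1a_2a_2b_2\cdots a_na_nb_n$ and observes that they represent the prime bipartite chain graph $H_{2,n}$. It then cites the result of Lozin and Rudolf that every bipartite chain graph is an induced subgraph of some $H_{2,n}$, and concludes by heredity (\autoref{thm:hereditary}). You instead build a representing word for an arbitrary chain graph directly, grouping the $B$-vertices by the first $a_i$ in chain order to which they are adjacent. Your argument is self-contained, gives an explicit word for the given graph, and does not need the external embedding theorem. When every $D_i$ is a singleton, your word is exactly the paper's word with the indices reversed. The paper's route is shorter, and it absorbs the isolated-vertex bookkeeping into the cited result; you have to do that bookkeeping by hand, which you do correctly.

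When writing it up, fix three small points:
\begin{itemize}
\item State the final definition $D_i=N(a_i)\setminus N(a_{i-1})$ for all $i\in[n]$ once, and drop the leftover ``$D_n=N(a_n)$'', which would list the vertices of $N(a_{n-1})$ twice.
\item Note that only $100$, never $010$, can arise for non-neighbours, since $a_ia_i$ is a factor of $w$.
\item Commit to one placement of the isolated $B$-vertices; putting them as single letters at the front works.
\end{itemize}
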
 
%\todoscs{Reviewer 1: The proof of Theorem 5.8 is too long; an acceptable proof would be pointing out that $2K_2$ is not representable (easy to see), so all graphs from the class are bipartite chain (the class is equivalent to $2K_2$-free bipartite graphs); conversely, the word $a_1a_1b_1a_2a_2b_2...a_na_nb_n$ very clearly produces the prime bipartite chain graph on $2n$ vertices (and every bipartite chain graph is induced in such a prime one).}
%\todoscs{Reviewer 1: In general, many of the following theorems do not make use of standard results from the literature and of the symmetries between the classes studied.}
\begin{proof}
%To simplify the notation, define $L \coloneqq  \langle 001\rangle$.
Let $G=(V,E)\in \mathcal{G}_{\langle 001 \rangle}$. Th\longversion{erefore}\shortversion{us}, there exists a word $w$ (of length $\ell$) \longversion{such that}\shortversion{with} $G=G( \langle 001\rangle,w)$. By \autoref{lem:uniformword}, \longversion{$w$ contains only letters  which appear exactly once or twice}\shortversion{$w\in V^{1,2\text{-uni}}$}. Set $A$ to the set of vertices which appear exactly once in~$w$ and set $B$ to the set the vertices which appears exactly twice, which define the classes of a bipartite graph. %It is possible to observe  that $2K_2$ is not representable, so all graphs from the class are bipartite chain (the class is equivalent to $2K_2$-free bipartite graphs). Alternatively, a proof building on the definition of bipartite chain graphs is not much longer:
For $v\in V$, define $\max_{v}\coloneqq \max\{i\in [\ell]\mid w_i=v\}$. For all $v,u\in A$, define the ordering relation $v \leq_A u$ \iffl $\max_{v}\leq \max_u$. Let $v,u\in A$ with $v \leq_A u$ and $x\in B\cap N(v)$. Hence, $h_{x,v}(w)=001$. As $\max_{v} \leq \max_{u}$, $h_{x,u}(w)=001$. Thus, $v \leq_A u$ implies $N(v)\subseteq N(u)$.\longversion{

} Conversely, consider the word $w_n=a_1a_1b_1a_2a_2b_2\cdots a_na_nb_n$. In $G( \langle 001\rangle,w_n)$, $N(a_i)=\{b_i,b_{i+1},\dots,b_n\}$. Hence, this graph is the prime bipartite chain graph on $2n$ vertices, called $H_{2,n}$ in~\cite{LozRud2007}, where it is shown that every bipartite chain graph is an induced subgraph of some prime bipartite chain graph. By \autoref{thm:hereditary}, this proves that every bipartite chain graph is\longversion{ contained} in $\mathcal{G}_{\langle 001 \rangle}$.%\qed
\end{proof}
%\\
%comAlso, we can try to generalize the following theorem. BTW: If my claim is correct, then this also implies that all bipartite chain graphs are convex, which follows only a bit indirectly on graphclasses.org by looking at bp chain included in probe bipartite chain included in convex.}
%\todokm{To "BTW": By definition, every bipartite chain graph is convex. You can even use the same ordering.}
\noindent
By \autoref{lem:representability-reversal}, we get the next result.
\begin{corollary}\label{cor:bipartite-chain2}
$\mathcal{G}_{\langle 011 \rangle}$ is the class of bipartite chain graphs.
\end{corollary}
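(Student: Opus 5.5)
The corollary follows from \autoref{thm:bipartite-chain} via the reversal symmetry of \autoref{lem:representability-reversal}. The plan is to show that $\langle 001\rangle^R=\langle 011\rangle$ and that this identity turns representations for one language into representations for the other.

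First I compute the reversal directly: $\langle 001\rangle=\{001,110\}$, so $\langle 001\rangle^R=\{100,011\}$. Since $\widetilde{011}=100$, this set is exactly $\langle 011\rangle$. It is also $0$-$1$-symmetric, so $\mathcal{G}_{\langle 011\rangle}$ is well defined.

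Next I transfer representations with the identity $h_{u,v}(w^R)=(h_{u,v}(w))^R$. This holds because $h_{u,v}$ is a morphism and projection commutes with reversal. Suppose $G=G(\langle 001\rangle,w)$. Then for any $u\neq v$,
\[
h_{u,v}(w^R)\in\langle 011\rangle \iff (h_{u,v}(w))^R\in\langle 001\rangle^R \iff h_{u,v}(w)\in\langle 001\rangle ,
\]
so $G=G(\langle 011\rangle,w^R)$. The same argument with $w^R$ in place of $w$ gives the converse direction. Hence $\mathcal{G}_{\langle 011\rangle}=\mathcal{G}_{\langle 001\rangle}$, which is precisely the content of \autoref{lem:representability-reversal} applied to $L=\langle 001\rangle$.

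Finally, \autoref{thm:bipartite-chain} states that $\mathcal{G}_{\langle 001\rangle}$ is the class of bipartite chain graphs, so $\mathcal{G}_{\langle 011\rangle}$ is that class as well.

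The only point needing care is the identity $\langle 001\rangle^R=\langle 011\rangle$, that is, that reversal preserves $0$-$1$-symmetry here. This is immediate because reversal commutes with the complement morphism.
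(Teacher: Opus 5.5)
Your proposal is correct and takes the paper's route: the corollary is obtained from \autoref{thm:bipartite-chain} via \autoref{lem:representability-reversal}, using $\langle 001\rangle^R=\langle 011\rangle$ (which the paper also records as $L_1^R=L_3$). Your verification through $h_{u,v}(w^R)=(h_{u,v}(w))^R$ just spells out that lemma in this special case.
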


\begin{thmrep} \applabel{thm:bico-complement}
For every\longversion{ language}~$L$ such that $\emptyset \neq L \subsetneq \{0,1\}^{1,2\text{-uni}} $, we find that $\cG_{\overline{L} \cap \{0,1\}^{1,2\text{-uni}}} = \bico\cG_L$. 
\end{thmrep}

\begin{proof}
\begin{enumerate}
\item Let $G \in \cG_{\overline{L} \cap \{0,1\}^{1,2\text{-uni}}}$. According to \autoref{lem:uniformword}, there exists a word $w \in V(G)$ such that $G = G(\overline{L} \cap \{0,1\}^{1,2\text{-uni}},w)$ and $|w|_v \in \{ 1, 2 \}$ for every $v \in V(G)$. Clearly, $\overline{L} \cap \{0,1\}^{1,2\text{-uni}} = \{0,1\}^{1,2\text{-uni}} \setminus L$. Hence according to \autoref{lem:edgesets}, $E(G) = E(G(\{0,1\}^{1,2\text{-uni}}, w)) \setminus E(G(L, w))$. Let $V_1 = \{ v \in V(G) \mid |w|_v = 1 \}$ and $V_2 = \{ v \in V(G) \mid |w|_v = 2 \}$. Then $V_1$ and $V_2$ are two disjoint sets that partition $V(G)$ and both sets are independent in $G(L, w)$. Clearly, $E(G(\{0,1\}^{1,2\text{-uni}}, w)) = \{ \{ a, b \} \mid a \in V_1, b \in V_2 \}$. Hence, $G(L, w))$ is the bipartite complement graph of $G$ with respect to the partition of $V(G)$ into $V_1$ and $V_2$. 
\item For the other direction, consider a graph $G \in \bico\cG_L$. Hence, $G$ is the bipartite complement of a graph $G' \in \cG_L$. According to \autoref{lem:uniformword}, there exists a word $w \in V(G')^\ast$ such that $G' = G(L,w)$ and $|w|_v \in \{ 1, 2 \}$ for every $v \in V(G')$. Let $V_1 = \{ v \in V(G') \mid |w|_v = 1 \}$ and $V_2 = \{ v \in V(G') \mid |w|_v = 2 \}$. 
Let $A, B$ be two independent sets such hat $A \cup B$ is a partition of $V(G')$. Let $V(G') = K_1 \cup \dots \cup K_k$ be the partition of $V(G')$ into the connected components of $G'$. Assume there exists an $i \in [k]$ such that $K_i \cap V_1 \cap A \neq \emptyset$ and $K_i \cap V_2 \cap A \neq \emptyset$. Choose $v_1 \in K_i \cap V_1 \cap A$ and $v_2 \in K_i \cap V_2 \cap A$. Because $v_1$ and $v_2$ are in the same connected component, there must be a path $(p_0 = v_1, p_1, \dots, p_m = v_2)$. Because of $L \subseteq \{0,1\}^{1,2\text{-uni}}$, $E(G') \subseteq \{ \{ a, b \} \mid a \in V_1, b \in V_2 \}$. Hence for every $1 \leq j \leq m$, $j$ is even \iffl $p_j \in V_1$. Because of the choice of $A$ and $B$ for every $1 \leq j \leq m$, it also holds that $j$ is even \iffl $p_j \in A$. But this means that $m$ must be odd (because $p_m \in V_2$) and $j$ must be even (because $p_m \in A$). This is a contradiction. Analogously, we show that no $i \in [k]$ exists such that $K_i \cap V_1 \cap B \neq \emptyset$ and $K_i \cap V_2 \cap B \neq \emptyset$. Therefore, we can conclude the following: 
For each partition that could have been used to form the bipartite complement of the graph $G' \in \cG_L$, $E(G) = \{ \{ a, b \} \mid a \in V_1, b \in V_2 \} \setminus E(G')$. 

We use \autoref{lem:edgesets} again to prove \shortversion{$\{ \{ a, b \} \mid a \in V_1, b \in V_2 \} \setminus E(G') = E(G(\overline{L} \cap \{0,1\}^{1,2\text{-uni}}, w))$.}\longversion{$$\{ \{ a, b \} \mid a \in V_1, b \in V_2 \} \setminus E(G') = E(G(\overline{L} \cap \{0,1\}^{1,2\text{-uni}}, w))\,.$$} Hence, $G = G(\overline{L} \cap \{0,1\}^{1,2\text{-uni}},w)$. 
\qed
\end{enumerate}\renewcommand{\qed}{}
\end{proof}

\begin{correp} \applabel{cor:bica-classes}
1. $\cG_{\langle 001, 011\rangle}$ is the class of bico-convex graphs.\\
2. $\cG_{\langle 010,011 \rangle}$ and $\cG_{\langle 010,001 \rangle}$ are equal to the class of bipartite chain graphs.
\end{correp}
\begin{proof}
\begin{enumerate}
\item The statement follows from \autoref{thm:convex} and \autoref{thm:bico-complement}.
\item The statement follows from \autoref{lem:representability-reversal}, \autoref{thm:bipartite-chain}, \autoref{thm:bico-complement} and the fact that the class of bipartite chain graphs is closed under bipartite complements, which can be obtained by defining a linear ordering $\leq_{A'}$ on $A$\longversion{ such that $a_1 \leq_{A'} a_2 \Leftrightarrow a_1 \geq_A a_2$, for each pair $a_1,a_2\in A$}. 
\longversion{Alternatively, this can be also understood by the characterization of bipartite chain graphs as bisplit graphs.}
\qed
\end{enumerate}\renewcommand{\qed}{}
\end{proof}

\begin{remark}
With \autoref{thm:interval-model} we can define interval models for\longversion{ the following classes}:\\
1. bipartite chain graph (interval models with respect to the languages $\langle 001 \rangle$, $\langle 011 \rangle$, $\langle 010,011 \rangle$ and $\langle 010,001 \rangle$), as well as\\
2. bico-convex graphs (interval model with respect to the language $\langle 001, 011\rangle$).\\
We are unaware of any previous\longversion{ly published geometric} intersection models for these graph classes.
\end{remark}

\begin{prprep}\applabel{prp:convex-nonclosure-bibcomplement}
    Different from bipartite chain graphs, convex graphs are not closed under bipartite complement.
\end{prprep}
\begin{proof}
    To see this, we consider a concrete example in the following. Let $G=(A\cup B, E)$ with $A\coloneqq [6]$, $B\coloneqq \{a,b,c,d,e\}$ and 
    $$E\coloneqq \{\,\{1,a\},\{2,a\},\{2,b\},\{3,b\},\{3,c\},\{4,c\},\{4,d\},\{5,d\},\{5,e\},\{6,e\}\,\}.$$
    It is easy to see that $G=G(L,w)$ with $w=a1b2ac3bd4ce5d6e$. Hence, $G$ is convex. Let $H=(A \cup B, E')$ be the bipartite complement of $G$. We want to show that $H$ is no convex graph. Clearly, $A$ and $B$ are the partition classes of $H$ and there are no other ways to partition these vertices. By \autoref{lem:uniformword}, there exists a $w' \in (A\cup B)^*$ with $H=G(L,w')$ and $\vert w'\vert_{v}\in \{1,2\} $ for each $v \in A \cup B$. Since $A,B$ is a unique partition for $H$, for all $v\in A$ and $u \in B$, $\vert w'\vert_v \neq \vert v \vert_u$.
    
    We first assume  $\vert w'\vert_v=1$ for each $v \in A$. We start by considering $h_{a,1}(w')=001= h_{a,1}(w')$. By \autoref{cor:L-symmetry}, $h_{a,1}(w')=100= h_{a,1}(w')$ works analogously. Since $N_H(a)=\{3,4,5,6\}$, for each $v \in \{ 3,4,5,6\}$ and $j \in \{1,2\}$, $h_{v,j}(w')=01$. $3\notin \{1,4,5,6\} =N_H(b)$ implies $h_{\{a,b,1,2,3,5\}}(w')=a3b5a1b2$. By $3,5,2\in N_H(e)$, $h_{\{b,e,1,2,3,5\}}(w')=e3b5a1be$ holds, which contradicts $\{5,e\}\in E'$.

    So consider $h_{\{a,1,2\}}(w')=1aa2$ (note that $h_{\{a,1,2\}}(w')=2aa1$ works analogously). Since $\{a,5\}\in E'$ and $1,2\in N_H(e)$, $h_{\{a,e,1,2,5\}}(w')=e2a5a1e$, which contradicts $\{e,5\}\notin E'$. 

    Now we assume that for each $v \in A$, $\vert w'\vert_v=2$. Consider $h_{\{a,b,2\}}(w')=a22b$  ($h_{\{a,b,2\}}(w')=b22a$ is analogously shown). By $a,b\in N_H(6)$, $h_{\{a,b,2,6\}}(w')=6a22b6$. This contradicts $\{e,6\}\notin E'$, as $\{2,e\}\in E'$ and so $h_{\{a,b,e,2,6\}}(w')=6a2e2b6$. 

    Therefore, assume $h_{2,a}(w') = 001 = h_{2,b}(w')$ ($h_{2,a}(w')=100= h_{2,b}(w')$ is analogously proven). $\{b,3\},\{c,3\} \notin E'$ and $\{a,3\},\{e,3\},\{e,2\}\in E'$ together imply $h_{\{a,b,c,e,2,3\}}(w') = 2c3e2a3b$. Since $a,b\in N_H(6)$, $h_{\{b,c,e,6\}}(w') = 6ceb6$, which contradicts $\{6,e\}\notin E'$. 

    As there is no case remaining, $G$ is convex but its bi-complement is not.
%\qed
\end{proof}
\longversion{\subsection{2-Uniform Languages}}\shortversion{\paragraph*{2-Uniform Languages.}}
We will now study all 2-uniform languages. As before this covers all (seven) cases, which can be verified by looking at \autoref{tab:graphClasses-intro}.
\begin{toappendix}
The following Lemma shows (in view of \autoref{exa:represented-graphs-CompleteUnionNullGraphs}) that there are many ways to describe the same graph class, and the reader should find it easy to find many more representations of this class.

\begin{lemma}\label{lem:00shuffle11}
For $L = \{0,1\}^{2\text{-uni}}$, %\{ w \in \{ 0, 1 \}^* \; | \; |w|_0 = |w|_1 = 2 \}$, 
$\mathcal{G}_{L}=\{K_n\cup N_m\mid n,m\in\N\}$.%\todohfInPlace{Similar to \autoref{exa:represented-graphs}?}
%is exactly the set of graphs $G = (V,E)$ such that there exists a subset $V' \subseteq V$ with $E = \{ \{u,v\} \; | \; u,v \in V' \} $. Hence, 
\end{lemma}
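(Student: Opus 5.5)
We prove the two inclusions separately, in the same style as \autoref{exa:represented-graphs-CompleteUnionNullGraphs}.

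\textbf{Every $K_n\cup N_m$ lies in $\mathcal{G}_L$.} Take the vertex set $[n+m]$ and the word $w\coloneqq 1\,1\,2\,2\cdots n\,n\cdot (n+1)^3\cdots (n+m)^3$. Each letter $i\le n$ occurs exactly twice, so for $i<j\le n$ we get $h_{i,j}(w)=0011\in L$, an edge. Any pair involving a letter $j>n$ projects to a word with at least three occurrences of one symbol, which is not $2$-uniform, so it is a non-edge. Hence $G(L,w)\simeq K_n\cup N_m$, and the degenerate cases $n=0$ or $m=0$ are covered by the same word. Any other choice of a frequency different from $2$ for the isolated letters works just as well.

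\textbf{Every graph in $\mathcal{G}_L$ has this form.} Let $G=G(L,w)$ and split $V$ into $V_2=\{v\mid |w|_v=2\}$ and its complement. For $u,v\in V_2$ with $u\ne v$, the word $h_{u,v}(w)$ has exactly two $0$s and two $1$s, so it lies in $\{0,1\}^{2\text{-uni}}=L$, and $\{u,v\}$ is an edge. Thus $V_2$ induces a clique. If $u\notin V_2$, then $h_{u,v}(w)$ is not $2$-uniform for any $v$, so $u$ is isolated. Therefore $G=K_{|V_2|}\cup N_{|V\setminus V_2|}$. Alternatively, this direction follows from \autoref{thm:graph-decomposition}: only the pair $(2,2)$ contributes edges, and $G[V_2]$ is complete.

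Neither direction involves a real obstacle. The only points needing care are that $L$ is exactly the set of all $2$-uniform words, which gives completeness on $V_2$, and that we must choose frequencies outside $\{2\}$ to realize the isolated vertices.
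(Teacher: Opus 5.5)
Your proof is correct and follows the paper's argument. The forward direction splits $V$ by whether $|w|_v=2$, exactly as the paper does, and the converse uses an explicit word in which clique vertices occur twice and isolated vertices occur with a frequency other than~$2$. The only cosmetic difference is the choice of word: the paper uses $v_1\cdots v_m\,v_1\cdots v_m\,v_{m+1}\cdots v_n$, where isolated letters occur once and clique pairs project to $0101$, while you use $1\,1\,2\,2\cdots n\,n$ followed by cubes, which changes nothing.
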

\begin{proof}
\begin{enumerate}
    \item Let $G = (V,E) \in \cG_L$. Hence, there exists a word $w \in V^*$ such that $G = G(L,w)$. Let $V' = \{ v \in V \mid |w|_v = 2 \}$ and $E' = \{ \{u,v\} \mid u,v \in V' \}$.
\begin{align*}
\{ u, v \} \in E  \iff h_{u,v}(w) \in L 
                  \iff |w|_u = |w|_v = 2  \iff u, v \in V' \longversion{\\} \iff \{ u, v \} \in E'\,.\longversion{\phantom{XXXXXXXXXXXXXXXXXiii}}
\end{align*}
This shows $E = E'$.
\item Let $G = (V,E)$ be such that there exists a subset $V' \subseteq V$ with $E = \{ \{u,v\} \mid u,v \in V' \} $. Let $V' = \{ v_1, \dots, v_m \}$ and $V \setminus V' = \{ v_{m+1}, \dots, v_n \}$. 
Choose $w = v_1 \dots v_m \cdot v_1 \dots v_m \cdot v_{m+1} \dots v_n $. Let $G_2 = G(L,w)$. We prove that $G_2 = G$. Obviously, $V(G_2) = V$. 
\begin{align*}
\{ u, v \} \in E(G_2)   \iff h_{u,v}(w) \in L \iff |w|_u = |w|_v = 2 \iff u, v \in V' 
                   \longversion{\\}  \iff \{ u, v \} \in E\,.\longversion{\phantom{XXXXXXXXXXXXXXXXXXi}}
\end{align*}
This shows that $G(L,w) = G_2 = G$.
\end{enumerate}
Hence, $\mathcal{G}_{L}=\{K_n\cup N_m\mid n,m\in\N\}$.%\qed
\end{proof}
    
\end{toappendix}

We next provide a collection of useful closure properties that allow us to apply \longversion{the Trash }\autoref{trash-theorem} \longversion{in a number of circumstances}\shortversion{frequently}. Recall the definition of $T_L$ from that theorem.

\begin{prprep}\applabel{prop:2-uniform-plus-vertices}
Let $L\subseteq\{0,1\}^*$ be 2-uniform. Assume that for each $G=(V,E)\in \cG_L$, there exists a 2-uniform word~$w\in V^*$ such that $G=G(L,w)$.\\
1. If $L\cap \{0110,0011\}\neq\emptyset$, then 
$\cG_L$ is closed under adding universal vertices. \\ 
2. If ${L}\cap \{0110,0011\}\neq\emptyset$, then  $\cG_{\overline{L}}$  is closed under adding isolated vertices.
\end{prprep}
\begin{proof}
Let $L\subseteq\{0,1\}^*$ be 2-uniform.\\
1. Let $G=(V,E)\in \mathcal{G}_{L}$. Hence, there is a $2$-uniform $w\in V^*$ with $G=G(L,w)$. Let $\ta\notin V$. First, assume $0110\in L$.  Consider $w'=\ta w\ta$. Then, $\ta$ is a universal vertex in $G'=G(L,w')$ and $G=G'[V]$, because $w$ is $2$-uniform. Otherwise, by our assumption $0011\in L$.   Consider $w''=w\cdot \ta^2$. Then, $\ta$ is a universal vertex in $G''=G(L,w'')$ and $G=G''[V]$.\\
2. By the previous item,  $\cG_L$ is closed under adding universal vertices. Hence, its corresponding class of complement graphs ${\cG_L'}$ is closed under adding isolated vertices. By \autoref{prop:compl}, $\cG_{\overline{L}}={\cG_L'}$. 
\qed 
\renewcommand{\qed}{}
\end{proof}

\begin{toappendix}
\noindent
We can add some more properties to the list of \autoref{prop:2-uniform-plus-vertices}.
\begin{remark}
Now assume that for each $G=(V,E)\in \cG_{\overline{L}\cap \{0,1\}^{2\text{-uni}}}$, there exists a 2-uniform word~$w\in V^*$ such that $G=G(\overline{L}\cap \{0,1\}^{2\text{-uni}},w)$.
\begin{enumerate} \addtocounter{enumi}{2}
    \item If $\overline{L}\cap \{0110,0011\}\neq\emptyset$, then  $\cG_{\overline{L}\cap \{0,1\}^{2\text{-uni}}}$ is closed  under adding universal vertices.
    \item If $\overline{L}\cap \{0110,0011\}\neq\emptyset$, then 
    $\cG_{{L}\cup T_L}$ is closed under adding isolated vertices.
\end{enumerate} 
\end{remark}
\begin{proof}
\begin{enumerate} \addtocounter{enumi}{2}
   \item As $\widehat L\coloneqq {\overline{L}\cap \{0,1\}^{2\text{-uni}}}$ is 2-uniform, the claim follows with the first item, where $\widehat L$ takes on the role of~$L$, as $\overline{L}\cap \{0110,0011\}=\widehat L\cap \{0110,0011\}$.
    \item As $T_L=\overline{\{0,1\}^{2\text{-uni}}}$, this follows by the third item by the reasoning of the second item (using \autoref{prop:compl}) with De Morgan's Law as $\overline{{\overline{L}\cap \{0,1\}^{2\text{-uni}}}} = L \cup T_L$. %\todoscsInPlace{I don't understand, what $T_{\hat L}$ has to do with this. I would write that $\overline{{\overline{L}\cap (0^2\shuffle 1^2)}} = L \cup T_L$ and reference \autoref{prop:compl} again.}
%    \item  Let $G=(V,E)\in \mathcal{G}_{L\cup T_L}$.  Hence, there is a $w\in V^*$ with $G=G(L\cup T_L,w)$. Let $\ta\notin V$. First, assume $0110\notin L$. Hence, $0110\notin L\cup T_L$.  Consider $w'=\ta w\ta$. Then, $\ta$ is an isolated vertex in $G'=G(L\cup T_L,w')$ and $G=G'[V]$.  Otherwise, by our assumption $0011\notin L$.   Consider $w''=w\cdot \ta^2$. Then, $\ta$ is an isolated vertex in $G''=G(L\cup T_L,w'')$ and $G=G''[V]$.
\qed 
\end{enumerate}
\renewcommand{\qed}{}
\end{proof}
\end{toappendix}

The discussions in the paragraph preceding \autoref{lem:uniformword} are also related in particular to the last statement of the previous proposition.
After this type of prelude, well-known non-trivial graph classes are characterized in the following. We start with the classes of (co-)interval graphs.

\begin{theorem}\label{thm:intervalgraphs}
$\mathcal{G}_{\langle 0101,0110\rangle}$ and 
$\mathcal{G}_{\overline{ \langle 0011 \rangle }}$ is the class of interval graphs. 
\end{theorem}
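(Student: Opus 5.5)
First I treat $L=\langle 0101,0110\rangle$. Since $0011\notin L$, \autoref{thm:interval-model} applies, so $G\in\mathcal{G}_L$ \iffl $G$ has an interval model with respect to $L$. In such a model every interval $I_v=[v_\alpha,v_\beta]$ is non-degenerate: $2$-uniform words have no point patterns, and all endpoints are distinct. Two vertices are adjacent \iffl their interval pair matches $0101$, $0110$, or their complements. These patterns are overlap or containment, and together they say exactly that the two intervals intersect. The only remaining pattern, $0011$ and its complement $1100$, means the intervals are disjoint. Hence an interval model with respect to $L$ is the same thing as an intersection model of closed intervals with pairwise distinct endpoints.

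It remains to check that every interval graph has such a model. Start from an arbitrary interval representation with closed intervals. First make each interval non-degenerate by replacing $[p,p]$ with $[p-\varepsilon,p+\varepsilon]$. Then perturb all endpoints so that they become distinct without changing the intersection relation: enlarge each interval by a distinct small amount $\varepsilon_v$, with all $\varepsilon_v$ smaller than the minimum positive gap between disjoint intervals. Touching intervals still intersect, disjoint ones stay disjoint, and generic choices of the $\varepsilon_v$ make all endpoints distinct. This perturbation is the only step that needs care, and it is routine. With it, $\mathcal{G}_{\langle 0101,0110\rangle}$ is exactly the class of interval graphs.

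Next I treat $\overline{\langle 0011\rangle}$. Write $\overline{\langle 0011\rangle}=\langle 0101,0110\rangle\cup T$, where $T=T_{\langle 0101,0110\rangle}$ is the set of words that are not $2$-uniform. (Indeed, $\mathrm{freq}=\{2\}$, so the trash consists precisely of the non-$2$-uniform words.) So $\overline{\langle 0011\rangle}=\hat L$ for $L=\langle 0101,0110\rangle$, and I will apply \autoref{trash-theorem}.

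For $\mathcal{G}_L\subseteq\mathcal{G}_{\hat L}$ via item~1, I need $\mathcal{G}_{\hat L}$ closed under adding isolated vertices. By the remark following \autoref{prop:2-uniform-plus-vertices}, item~4, applied to $L$, whose complement contains $0011$, this holds provided the $2$-uniform word assumption is satisfied; that assumption is given by \autoref{lem:uniformword}, part~2. Concretely, if $w$ is $2$-uniform, then appending $\ta\ta$ for a new letter $\ta$ adds an isolated vertex. For the reverse inclusion I use item~2 of \autoref{trash-theorem}: $\mathcal{G}_L$ is closed under adding universal vertices by \autoref{prop:2-uniform-plus-vertices}, part~1, since $0110\in L$. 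Hence $\mathcal{G}_{\hat L}=\mathcal{G}_L$, which is the class of interval graphs.

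The main obstacle I expect is making sure that the hypotheses of the trash theorem match exactly, in particular the closure of $\mathcal{G}_{\hat L}$ (rather than of $\mathcal{G}_L$) under adding isolated vertices. As an alternative route for that step, I can argue via complements: by \autoref{prop:compl}, $\mathcal{G}_{\overline{\langle0011\rangle}}$ is co-$\mathcal{G}_{\langle 0011\rangle}$, and $\langle 0011\rangle$-models are exactly disjointness models of intervals, that is, co-interval graphs.
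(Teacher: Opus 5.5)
Your proof is correct and takes the same route as the paper. \autoref{thm:interval-model} identifies $\mathcal{G}_{\langle 0101,0110\rangle}$ with the interval graphs. The Trash Theorem (\autoref{trash-theorem}) then transfers this to $\hat L=\overline{\langle 0011\rangle}$, using that $\mathcal{G}_{\overline{\langle 0011\rangle}}$ is closed under adding isolated vertices because $\langle 0011\rangle$-graphs have 2-uniform representations (\autoref{lem:uniformword}). You also spell out the endpoint perturbation the paper leaves implicit; each $\varepsilon_v$ should be below half the minimum gap, since both intervals grow. Item~1 of the Trash Theorem already gives equality, so your appeal to item~2 is redundant.
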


\begin{proof}
Let $L = \langle 0101, 0110 \rangle$. Using the notation from \autoref{trash-theorem},\longversion{ we get} $\hat L = \overline{ \langle 0011 \rangle }$. \longversion{According to}\shortversion{By} \autoref{thm:interval-model}, $\mathcal{G}_{L}$ is the class of interval graphs and hence $\mathcal{G}_{L}$ is closed under adding universal vertices. 
Due to \autoref{lem:uniformword}, we can apply\longversion{ the second part of} \autoref{prop:2-uniform-plus-vertices}\shortversion{.2} to\longversion{ the language} $\langle 0011 \rangle$. 
%We will prove that $\mathcal{G}_{\hat L}$ is closed under adding isolated vertices. For this consider a graph $G \in \mathcal{G}_{\hat L}$. There is a word $w \in V(G)^\ast$ such that $G(\hat L,w) = G$. Let $u$ be a vertex such that $u \notin V(G)$. The graph $G(\hat L,w \cdot uu) $ is the graph $G$ with $u$ added as an isolated vertex. 
Thus, $\mathcal{G}_{\hat L}$ is closed under adding isolate\longversion{d vertice}s. Hence, \longversion{we get the equality }$G_L = G_{\hat L}$ from \autoref{trash-theorem}. %\qed 
\end{proof}
%\begin{toappendix}
\begin{remark}
The characterization result of interval graphs by  $L = \langle 0101,0110\rangle$ also follows from the characterization of interval graphs as 1-11-representable graphs that can be represented by uniform words containing two copies of each letter, see~\cite{CheKKKP2019}. However, this is also \longversion{a direct consequence of}\shortversion{follows directly from} our interval representation model with respect to $\langle 0101,0110\rangle$ that also supports the intuition that the pattern $0101$ covers the case of proper interval overlap, while $0110$ covers the case of containment. From a graph-theoretic point of view, looking at their intersection graph models, a natural generalization of interval graphs are chordal graphs. \autoref{thm:intervalgraphs} shows (in particular) that interval graphs can be represented by a finite language. However, \cite{FenFFKS2026} shows that chordal graphs cannot be represented by any finite language.\longversion{ The related information-theoretic argument also proves that there must be chordal graphs which are not interval graphs. This fact is of course well-known in the graph-theory community, but this argument is clearly different from the standard proofs, see \cite{Gol2004i,Gol2004l}.}
\end{remark}

\noindent
By \autoref{prop:compl}, we obtain:
\begin{corollary} \label{cor:co-intervalgraphs}
For $L = \langle 0011 \rangle $, $\mathcal{G}_{L}$ is the class of co-interval graphs. 
\end{corollary}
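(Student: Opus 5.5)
The statement follows by complementation from \autoref{thm:intervalgraphs} together with \autoref{prop:compl}. Apply \autoref{thm:intervalgraphs} in its second form: $\mathcal{G}_{\overline{\langle 0011\rangle}}$ is the class of interval graphs. \autoref{prop:compl} with $L=\langle 0011\rangle$ gives
\[
\mathcal{G}_{\langle 0011\rangle}=\mathcal{G}_{\overline{\overline{\langle 0011\rangle}}}=\{\overline{G}\mid G\in\mathcal{G}_{\overline{\langle 0011\rangle}}\}.
\]
By \autoref{thm:intervalgraphs}, the right-hand side is $\{\overline{G}\mid G \text{ interval}\}$, which is by definition the class of co-interval graphs. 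Complementation of a language is an involution, so $\overline{\overline{\langle 0011\rangle}}=\langle 0011\rangle$, and the language stays $0$-$1$-symmetric under complementation, so \autoref{prop:compl} applies.

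The only subtle point is the $\{0,1\}^*$-complement versus the $2$-uniform complement. Here \autoref{thm:intervalgraphs} is stated for the full complement $\overline{\langle 0011\rangle}$, not for $\langle 0101,0110\rangle$. That choice is exactly what lets \autoref{prop:compl} be used verbatim; the gap between the two versions was already handled in the proof of \autoref{thm:intervalgraphs} via \autoref{trash-theorem}. No further obstacle is expected.

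As a sanity check, the same result can be read off directly from \autoref{thm:interval-model}, which applies because $0110\notin\langle 0011\rangle$. A graph lies in $\mathcal{G}_{\langle 0011\rangle}$ iff it has a model by closed intervals with distinct endpoints in which edges correspond to disjoint intervals. This is precisely the disjointness model of co-interval graphs, up to the routine perturbation that makes endpoints distinct.
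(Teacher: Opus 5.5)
Your proof is correct and is exactly the paper's argument: the corollary is obtained by applying \autoref{prop:compl} to the second form of \autoref{thm:intervalgraphs}, namely that $\mathcal{G}_{\overline{\langle 0011\rangle}}$ is the class of interval graphs. Strictly, your displayed identity is \autoref{prop:compl} instantiated with $\overline{\langle 0011\rangle}$ rather than $\langle 0011\rangle$, which is harmless; your sanity check via \autoref{thm:interval-model} also agrees with the paper's remark that the $\langle 0011\rangle$ interval model is a geometric model of co-interval graphs.
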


\begin{remark}
According to \autoref{thm:interval-model} the interval model with respect to the language $\langle 0011\rangle$ is a geometric model for the class of co-interval graphs.
\end{remark}

\begin{comment}
\begin{lemma} %\label{lem:0011appear2} removed because \autoref{lem:uniformword} is more general
    Let $L$ be a 2-uniform \todoscsInPlace{Does 2-uniform imply that $L \neq \emptyset$?} language with $\langle 1001\rangle \cap L = \emptyset$ \todoscsInPlace{So $L \subseteq \langle 0011, 0101 \rangle$? Wouldn't that description be easier?} and $G\in \mathcal{G}_{L}$. Then there exists a $w\in V^{\ast}$ such that $G=G(L,w)$ and $\vert w\vert_v=2$ for each $v\in V$.
\end{lemma}
\begin{proof}
    Let $G=(V,E)\in \mathcal{G}_{L}$. Then there exists $w\in V^{\ast}$ with $G=G(L,w)$. Assume there exists a $v\in V$ with $\vert w\vert_v\neq 2$. Hence, $v$ is independent in $G$. We want to construct a $w'\in V^{\ast}$ with $G=G(L, w')$, $\vert w'\vert_v = 2$ and $\vert w'\vert_u = \vert w\vert_u$ for each $u\in V\setminus\{v\}$. Let $w''$ be $w$ after deleting each appearance of $v$. Define $w'=vw''v$ and $G'=(V,E')=G(L, w')$. 
    
    Clearly, for $s,t \in V\setminus \{v\}$, $\{s,t\}\in E$ \iffl $\{s,t\}\in E'$, as $h_{s,t}(w)=h_{s,t}(w')$.
    Recall, $v$ in independent in $G$. Since $v$ is only at the beginning and end of $w'$, $h_{v,u}(w')\in L(0(1)\ast0)$ \todoscsInPlace{How do we want to write regular expressions? I believe $0(1)^\ast 0$ would be more consistent with the literature.} for all $s\in V\setminus \{v\}$. Hence $v$ is independent in $G'$ and $G=G'$.    
\end{proof}
\end{comment}

\longversion{The next rather famous class of graphs that we study are permutation graphs.
According to graphclasses.org,}\shortversion{\noindent As} permutation graphs are\longversion{ exactly} the interval containment graphs, \autoref{thm:interval-model} \longversion{gives an interpretation of our intersection model and hence }yields: 
%As we could not trace back a proof for this statement in the literature, we are providing a short one next, using our general geometric interpretations of 2-uniform languages and the corresponding graph classes.

\begin{theorem} \label{thm:permGraphs}
For $L = \langle 0110 \rangle$, $\mathcal{G}_L$ is the class of permutation graphs.
\end{theorem}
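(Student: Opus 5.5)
The plan is to invoke \autoref{thm:interval-model} for $L=\langle 0110\rangle$ and then show that interval models with respect to $\langle 0110\rangle$ are exactly containment models of permutation graphs.

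\textbf{Applicability.} $L=\langle 0110\rangle$ is $2$-uniform and $0011\notin L$, so the hypothesis of \autoref{thm:interval-model} holds. Hence $G\in\cG_L$ \iffl $G$ has a family of closed intervals $\{I_v\}_{v\in V}$ with pairwise distinct endpoints such that $\{u,v\}\in E$ \iffl one interval strictly contains the other. Both endpoint patterns $x_\alpha<y_\alpha<y_\beta<x_\beta$ (for $0110$) and its reversed-role counterpart (for $1001$) are covered.

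\textbf{Matching the permutation-graph characterization.} Next I would identify this with the containment characterization of permutation graphs. I would not just cite it, but give a short self-contained argument.

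\emph{From a permutation to intervals.} Let $G$ be a permutation graph given by a permutation $\pi$ of $[n]$, with $\{i,j\}\in E$ \iffl $(i-j)(\pi(i)-\pi(j))<0$. Assign to $i$ the interval $I_i=[\,i,\;2n+1-\pi(i)\,]$. All endpoints are distinct, since left endpoints lie in $[n]$ and right endpoints in $[n+1,2n]$. For $i<j$, the interval $I_i$ contains $I_j$ \iffl $2n+1-\pi(i)>2n+1-\pi(j)$, i.e.\ \iffl $\pi(i)<\pi(j)$.

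That is the complementary relation, so the construction must be adjusted. I would use $I_i=[\,i,\;n+\pi(i)\,]$ instead. Then for $i<j$ containment holds \iffl $n+\pi(j)<n+\pi(i)$, i.e.\ \iffl $\pi(i)>\pi(j)$, which is exactly an inversion. Also, intervals with $i<j$ never nest in the other direction, because $I_j$ starts after $I_i$.

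\emph{From intervals to a permutation.} Conversely, take an interval containment model with distinct endpoints. Order the vertices by left endpoint to get positions, and separately by right endpoint. Define $\pi(v)$ as the rank of $v$'s right endpoint. For $u,v$ with $u$ starting first, containment of $I_v$ in $I_u$ holds \iffl $v$ also ends first, i.e.\ \iffl the pair is an inversion. Hence $G$ is the permutation graph of $\pi$.

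\textbf{Expected obstacle.} The main care is bookkeeping of the two orientations (the swap of $0110$ with $1001$) and the distinct-endpoint requirement. The latter is the reason I take left endpoints in $[n]$ and right endpoints in $[n+1,2n]$. Alternatively, I could cite the known equality of permutation graphs and interval containment graphs (graphclasses.org / \cite{Gol2004e}) and conclude immediately, noting that any containment model can be perturbed to have distinct endpoints without changing strict containments.
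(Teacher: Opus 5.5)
Your proposal is correct and follows the paper's route. The paper also applies \autoref{thm:interval-model}, which is valid since $0011\notin\langle 0110\rangle$, and then cites the fact that permutation graphs are exactly the interval containment graphs; your constructions $I_i=[\,i,\,n+\pi(i)\,]$ and the right-endpoint-rank permutation prove that cited equivalence instead (after you rightly discarded your first attempt, which produced the complement relation).
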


\begin{toappendix}
The following lemma should be known for permutation graphs, but we have made this explicit for two reasons: first, such lemmas (or observations) are hard to find in the literature and secondly, it is an example how to reason within our framework in a very concise way, using \autoref{prop:2-uniform-plus-vertices} and \autoref{lem:uniformword}.

\begin{lemma}\label{lem:0110-plus-universal-vertices}
$\cG_{\langle 0110 \rangle}$ is closed under adding universal vertices.
\end{lemma}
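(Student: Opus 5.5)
The plan is to combine \autoref{lem:uniformword} with the first item of \autoref{prop:2-uniform-plus-vertices}, applied to $L=\langle 0110\rangle$. First I check the hypotheses. The language $L=\langle 0110\rangle=\{0110,1001\}$ is $2$-uniform and a proper subset of $\{0,1\}^{2\text{-uni}}$. It does not contain $0011$, so the condition ``$0110\notin L$ or $0011\notin L$'' of the second part of \autoref{lem:uniformword} holds. Hence every $G\in\cG_L$ has a $2$-uniform word $w\in V(G)^*$ with $G=G(L,w)$, which is exactly the standing assumption of \autoref{prop:2-uniform-plus-vertices}. Moreover, $L\cap\{0110,0011\}=\{0110\}\neq\emptyset$, so item~1 of that proposition yields closure under adding universal vertices.

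To keep the argument self-contained, I would also make the construction explicit. Given $G=G(L,w)$ with $w$ $2$-uniform and a fresh letter $\ta\notin V$, set $w'=\ta w\ta$. For every $v\in V$ we have $h_{\ta,v}(w')=0110\in L$, so $\ta$ is adjacent to all of $V$. For $u,v\in V$ we have $h_{u,v}(w')=h_{u,v}(w)$, so by \autoref{lem:inducedsubgraph} the graph induced on $V$ is still $G$. Iterating this adds any number of universal vertices.

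The only real obstacle is ensuring that a $2$-uniform representation exists. Without it, a vertex occurring a different number of times would not produce the pattern $0110$ with $\ta$. That issue is settled by \autoref{lem:uniformword}, whose case analysis for this language puts the isolated, non-$2$-occurring letters as $vv$ blocks at one end of the word (their pattern with other letters becomes $0011$, which is not in $L$). Alternatively, one could argue geometrically via \autoref{thm:permGraphs}: in the containment model, adding an interval that contains all others gives a universal vertex.
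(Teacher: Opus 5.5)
Your proposal is correct and follows the paper's route: the paper's proof is exactly the construction $w'=\ta w\ta$, and the paper frames this lemma as an application of \autoref{prop:2-uniform-plus-vertices} and \autoref{lem:uniformword}. Your explicit appeal to the second part of \autoref{lem:uniformword} is a useful addition, because the paper's written proof starts from an arbitrary representing word $w$. For such a $w$, a letter $v$ occurring once or thrice would give $h_{\ta,v}(\ta w\ta)=010$ or $01110$, and neither lies in $\langle 0110\rangle$.
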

    
\begin{proof}
Let $G\in \mathcal{G}_{\langle 0110 \rangle}$. Hence, there is a $w\in V(G)^*$ with $G=G(\langle 0110 \rangle,w)$. Consider $w'=\ta w\ta$ for some $\ta\notin V(G)$. Then, $\ta$ is a universal vertex in $G'=G(\langle 0110 \rangle,w')$ and $G=G'[V(G)]$. %\qed
\end{proof}
\end{toappendix}

%\todohfInPlace{I have put some small things under comment as I deem them superfluous or wrong.}
%\begin{remark}
%According to \autoref{thm:interval-model} the interval model with respect to the language $\langle 0110\rangle$ is an models for the class of permutation graphs.
%\end{remark} \todo{Ref?}

\begin{comment}
\begin{theorem}$\langle 0110\rangle$-representable graphs are 
$C_5$-free.\todohfInPlace{Maybe, not needed by previous arguments, or a nice example of a `simplified argument'.}
\end{theorem}

\begin{proof}
By \autoref{thm:hereditary}, it suffices to show that $C_5$ is not $\langle 0110\rangle$-representable.\todo{details ...}
\end{proof}
\end{comment}

\begin{lemrep} \applabel{lem:permGraphs2} For $\hat L = \langle 0110 \rangle  \cup  \overline{\{0,1\}^{2\text{-uni}}}$,
$\cG_{\hat L}$ are the permutation graphs.
\end{lemrep}

\begin{proof} 
Let $L=\langle 0110 \rangle$.
Observe that $T_L=\overline{\{0,1\}^{2\text{-uni}}}$ in the sense of \autoref{trash-theorem} as $L$ is 2-uniform, so that  $\hat L =L\cup T_L$.
By \autoref{trash-theorem}, combined with the closure of permutation graphs under adding universal vertices (see \autoref{lem:0110-plus-universal-vertices}),   $ \cG_{\hat L}= \cG_L$ follows.%\qed
\end{proof}

The class of permutation graphs are closed under taking graph complements.  Therefore, \autoref{prop:compl} and \autoref{lem:permGraphs2} imply the following:
\begin{corollary}\label{cor:co-permutation}
$\mathcal{G}_{\langle 0011, 0101 \rangle}$ is the class of permutation graphs.
\end{corollary}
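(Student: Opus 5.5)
The corollary follows by complementing the class from \autoref{lem:permGraphs2} with \autoref{prop:compl}; the only real work is to check that the complement of $\hat L$ is the language in the statement, which is just set algebra.

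Let $\hat L = \langle 0110 \rangle \cup \overline{\{0,1\}^{2\text{-uni}}}$. By \autoref{lem:permGraphs2}, $\cG_{\hat L}$ is exactly the class of permutation graphs. Applying De Morgan's law,
\[
\overline{\hat L} = \overline{\langle 0110\rangle} \cap \{0,1\}^{2\text{-uni}} = \{0,1\}^{2\text{-uni}} \setminus \langle 0110\rangle .
\]
Since $\{0,1\}^{2\text{-uni}} = \langle 0011, 0101, 0110\rangle$ and the three symmetric pairs are disjoint, this set equals $\langle 0011, 0101\rangle$.

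Note that $\hat L$ is $0$-$1$-symmetric, as a union of symmetric languages, so its complement is $0$-$1$-symmetric as well and \autoref{prop:compl} applies. It gives
\[
\cG_{\langle 0011,0101\rangle} = \cG_{\overline{\hat L}} = \{\overline{G} \mid G \in \cG_{\hat L}\},
\]
which is the class of complements of permutation graphs.

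Finally, permutation graphs are closed under complementation: reverse the second permutation, or equivalently replace containment of intervals by disjointness-or-overlap in the usual two-line diagram. Hence co-permutation graphs are exactly the permutation graphs.

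The only potential obstacle is making sure we complement $\hat L$ rather than $\langle 0110\rangle$ itself. Complementing $\langle 0110\rangle$ directly would bring in the trash words of other frequencies, and \autoref{lem:permGraphs2} is precisely what resolves that issue.
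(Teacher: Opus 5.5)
Your proposal is correct and follows the paper's argument: complement $\hat L$ from \autoref{lem:permGraphs2} via \autoref{prop:compl}, observe that $\overline{\hat L} = \langle 0011,0101\rangle$, and use that permutation graphs are closed under complementation. Your explicit set-algebra check, and your remark on why one must complement $\hat L$ rather than $\langle 0110\rangle$, spell out what the paper leaves implicit.
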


%\todohfInPlace{Again, I have put some stuff under comment.}
\begin{comment}
\begin{theorem}$\langle 0110\rangle$-representable graphs are 
AT-free. \todoscs{How does that fit into what was already known about permutation graphs? I think this theorem should have some explanation why it is here.} \todohf{It might be an example of a `simplified argument'.}
\end{theorem}
As every cycle of length at least six contains an asteriodal triple, the last two theorems imply:

\begin{corollary} \label{cor:permGraphs3}
Any induced cycle in a $\langle 0110\rangle$-representable graph has length at most four.
In other words, $\langle 0110\rangle$-representable graphs are hole-free.
\end{corollary}
    
\end{comment}

%\todoscs{This was previously known under the name Alternance graphs, see for example Reducing prime graphs and recognizing circle graphs by Bouchet. Try to find the first paper with this idea! HF: Do you still want to find some French paper? Recall that at the very end, we have a subsection on circle graphs; there, you can also put some history.}

\begin{toappendix}
We now turn our attention to yet another well-studied graph class, namely circle graphs.
Kitaev and Lozin showed in~\cite[Theorem 5.1.7]{KitLoz2015} the next result that, reformulated in our framework, reads as follows:
\begin{proposition} \label{prop:circlegraphs}
$\mathcal{G}_{\langle 0101\rangle}$ is the class of circle graphs. 
\end{proposition}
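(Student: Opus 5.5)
The plan is to derive \autoref{prop:circlegraphs} from \autoref{thm:interval-model}. The language $L=\langle 0101\rangle$ is 2-uniform and contains neither $0110$ nor $0011$, so the theorem applies. It says that $G\in\cG_L$ if and only if $G$ has an interval model with respect to $L$. In such a model the pairs that match $0101$ or $1010$ are exactly the pairs of intervals that overlap, meaning they intersect but neither contains the other. The statement therefore reduces to a purely geometric one: a graph is a circle graph if and only if it is an overlap graph of intervals with pairwise distinct endpoints. This is the classical equivalence between circle graphs and overlap graphs, and I will prove it directly.

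\textbf{From chords to overlapping intervals.} Take a chord model of a circle graph $G$. Perturbing slightly, I may assume all $2|V|$ endpoints of the chords are distinct. Cut the circle at a point $p$ that is not an endpoint, and unroll it to a line segment. Each chord $c_v$ then becomes the interval $I_v$ spanned by its two endpoints. Two chords cross if and only if exactly one endpoint of $c_u$ lies on each of the two arcs determined by $c_v$. After cutting at $p$, one of those arcs is exactly the interior of $I_v$ on the line. So the chords cross if and only if exactly one endpoint of $I_u$ lies inside $I_v$, which is the pattern $0101$ or $1010$. Hence $\{I_v\}$ is an interval model of $G$ with respect to $L$.

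\textbf{From overlapping intervals to chords.} Given an interval model $\{I_v\}$ with distinct endpoints, place all endpoints in order on a circle and join the two endpoints of each $I_v$ by a chord. The same arc argument, read in reverse, shows that $c_u$ and $c_v$ cross exactly when $I_u$ and $I_v$ overlap. Combined with the previous direction and \autoref{thm:interval-model}, this gives $\cG_{\langle 0101\rangle}$ equal to the class of circle graphs.

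Alternatively, one can skip the interval model and read a $2$-uniform word $w$ with $G=G(L,w)$, which exists by \autoref{lem:uniformword}, cyclically. Placing the letters of $w$ in order on a circle and drawing a chord between the two occurrences of each letter yields chord $c_u$ crossing $c_v$ exactly when $h_{u,v}(w)\in\{0101,1010\}$. The only delicate step in either route is the arc argument: showing that ``exactly one endpoint of $c_u$ on each side of $c_v$'' corresponds to the alternation pattern regardless of where the circle is cut. I would handle it by noting that a rotation of the circle is a conjugation of the word, and conjugation preserves membership of $h_{u,v}(w)$ in $\langle 0101\rangle$. This holds because within $\{0,1\}^{2\text{-uni}}$ the conjugacy class of $0101$ is exactly $\{0101,1010\}$.
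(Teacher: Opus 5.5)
Your proposal is correct and follows the paper's own route. Besides citing Kitaev and Lozin, the paper obtains this proposition as a corollary of \autoref{thm:interval-model} together with the known fact that circle graphs are exactly the interval overlap graphs; you prove that fact directly by the arc/alternation argument, and your cyclic-reading variant is the classical Kitaev--Lozin proof. Two points you gloss over are harmless: chords that share an endpoint must be separated in a controlled way, so that chords meeting only at that endpoint still cross, rather than by an arbitrary small perturbation; and a degenerate interval $[p,p]$, which \autoref{def:interval-model} allows, corresponds to an isolated vertex and can be drawn as a short chord crossing nothing.
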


We again also get this proposition as a corollary of \autoref{thm:interval-model}, because of the fact that circle graphs are exactly the (interval) overlap graphs, see \cite{Gol2004f}.
    
\end{toappendix}

\begin{thmrep} \applabel{thm:co-circle}
$\cG_{\langle 0011,0110 \rangle} = \{ G \cup N_m \mid G \text{ is a co-circle graph}, m \in \mathbb{N} \} $.
\end{thmrep}

\begin{proof}
Let  $L = \langle 0011,0110 \rangle$. We will show that $$\cG_{\overline{L}} = \{ G \nabla K_m \mid G \text{ is a circle graph}, m \in \mathbb{N} \}\,. $$ This is sufficient because of \autoref{prop:compl}. 
\begin{enumerate}
\item Let $G \in \cG_{\overline{L}}$. Hence, a word $w \in V(G)^\ast$ exists such that $G = G(\overline{L},w)$. Because of \autoref{lem:edgesets}, the following holds:
$E(G) = E(G(\langle 0101 \rangle),w) \cup \binom{V(G)}{2} \setminus E(G(\{0,1\}^{2\text{-uni}},w))$.
The graph $G(\langle 0101 \rangle),w)$ is a circle graph according to \autoref{prop:circlegraphs}. In the graph $G(\{0,1\}^{2\text{-uni}},w)$, the set $V_2$ is a clique and $V \setminus V_2$ is a set of isolated vertices. Hence, $G \in \{ G \nabla K_m \mid G \text{ is a circle graph}, m \in \mathbb{N} \}$.
\item Let $G' = G \nabla K$ such that $K$ is isomorphic to $K_m$ with $m \in \mathbb{N}$ and $G = (V,E)$ is a circle graph with $n$ vertices. According to \autoref{prop:circlegraphs}, a word $w$ exists such that $G = G(\langle 0101 \rangle,w)$. Let $V(K) = \{ u_1, \dots u_m \}$ and $w' = w \cdot u_1 \cdots u_m$. According to \autoref{thm:graph-decomposition}, $$E(G(\overline{L}, w')) = E_{1,1}(\overline{L}, w') \cup E_{1,2}(\overline{L}, w') \cup E_{2,2}(\overline{L}, w')\,.$$ Clearly, $E_{1,1}(\overline{L}, w') = \binom{V(K)}{2}$, $E_{1,2}(\overline{L}, w') = \{ \{ v, u \} \mid v \in V, u \in V(K) \}$ and $(V_{2,2}(\overline{L}, w'), E_{2,2}(\overline{L}, w')) = G $. Hence, $G(\overline{L}, w') = G'$ and $G' \in \cG_{\overline{L}}$.\qed
\end{enumerate}\renewcommand{\qed}{}
\end{proof}

Note that we cannot prove \autoref{thm:co-circle} using \autoref{prop:2-uniform-plus-vertices}, because not every graph in $\cG_{\langle 0011,0110 \rangle}$ can be represented by a 2-uniform word (see \autoref{exm:verylongexample}). 

\begin{comment}
So, bipartite $\langle 0110\rangle$-representable graphs are chordal bipartite.
This raises two questions:\todohfInPlace{I suppose this is a relic from times where we did not know the permutation graph characterization?}
\begin{enumerate}
    \item Are there any graphs that are AT-free and $C_5$-free (hole-free) and that are not $\langle 0110\rangle$-representable?
    \item Are there chordal bipartite graphs that are not $\langle 0110\rangle$-representable?
\end{enumerate}
\end{comment}

\begin{toappendix}

\subsection{Languages With At Most Two Occurrences of $0$ or $1$}
\label{subsec:atmost-two-occurrences} 

As we have seen in the preceding subsections, already the (nearly) length-uniform languages themselves offer quite a rich spectrum of graph classes, many of them well-known from other contexts. In this subsection, we only study some of the graph classes that can be obtained from languages with at most two occurrences of $0$ or~$1$. We will encounter two different graph classes this way.
Our first example was already studied in \autoref{exa:threshold}. %Now, we give an exact classification of the class.

In \cite{KlaPet87}, Klavžar and Petkovšek discussed intersection graphs of halflines, i.e., \longversion{one-side bounded}\shortversion{unbounded} intervals. These are a special class of cobipartite graphs. The following theorem might look a bit awkward in its formulation, but it gives a nice characterization ``up to isolates'', and similar exceptions can also be found in \autoref{tab:graphClasses-intro}. Again, this is related to the fact that the class of halfline intersection graphs is not closed under adding isolates, see \autoref{propos:closure-adding-isolates}.

\begin{theorem} \label{thm:intersection-graph-halflines} Let $L_{\text{HL}}\coloneqq\langle 01,011,0101,0011,0110\rangle$.
    A graph $G=(V,E)$ is $L_{\text{HL}}$-representable \iffl $G$ is the disjoint union of $G_1$ and $G_2$ where $G_1$ is an intersection graph of halflines and $G_2$ is a null graph.
\end{theorem}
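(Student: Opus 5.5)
The plan is to split the vertices by how often they occur in a representing word and treat each resulting part with the decomposition of \autoref{thm:graph-decomposition}. Throughout, a \emph{halfline graph} means a graph whose vertices are left-unbounded halflines $(-\infty,b]$ or right-unbounded halflines $[a,\infty)$, with adjacency meaning intersection.

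\textbf{Step 1: frequencies and the decomposition.} First I determine $\mathrm{freq}(L_{\text{HL}})=\{1,2\}$. Next I check that $L_{\text{HL}}\cap\{0,1\}^{2\text{-uni}}=\langle 0011,0101,0110\rangle=\{0,1\}^{2\text{-uni}}$ and $L_{\text{HL}}\cap\{0,1\}^{1\text{-uni}}=\{01,10\}$. Finally, $L_{\text{HL}}\cap\{0,1\}^{1,2\text{-uni}}=\langle 011\rangle=\{011,100\}$.

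Let $G=G(L_{\text{HL}},w)$ and partition $V$ into $V_1$, $V_2$ and $V_{\geq 3}$ according to $|w|_v$. By \autoref{thm:graph-decomposition}, the letters in $V_{\geq 3}$ are isolated, since no word of $L_{\text{HL}}$ contains a letter three times. The set $V_1$ is a clique by \autoref{exa:represented-graphs-CompleteUnionNullGraphs}, and $V_2$ is a clique by \autoref{lem:00shuffle11}. For $x\in V_1$ and $y\in V_2$, only the patterns $011$ and $100$ create edges. Hence $\{x,y\}\in E$ \iffl the single occurrence of $x$ precedes the first occurrence of $y$.

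\textbf{Step 2: from words to halflines.} Let $p_x$ be the position of $x\in V_1$ and $f_y$ the first position of $y\in V_2$. I assign $x\mapsto[p_x,\infty)$ and $y\mapsto(-\infty,f_y]$. Two right halflines always meet, and so do two left halflines, which matches the two cliques. The halflines of $x$ and $y$ meet \iffl $p_x\le f_y$. Since all positions are distinct, this is the same as $p_x<f_y$, which is exactly the adjacency found in Step 1. So $G[V_1\cup V_2]$ is a halfline graph, possibly empty or one-sided, and $G[V_{\geq 3}]$ is a null graph attached disjointly. This proves the forward direction.

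\textbf{Step 3: from halflines to words.} Let $G_1$ be given by halflines $[a_x,\infty)$ for $x\in X$ and $(-\infty,b_y]$ for $y\in Y$. I first perturb the endpoints so that they are pairwise distinct. Touching pairs $a_x=b_y$ are intersecting, so in that case I shift $a_x$ slightly left to keep the intersection; otherwise I shift by a small $\varepsilon$ that preserves every strict inequality. Then I scan the endpoints from left to right, writing $x$ at $a_x$ and $yy$ at $b_y$, and finally append $z^3$ for every vertex $z$ of the null graph $G_2$.

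By Step 1, the pair $x,y$ projects to $011$ or its reverse counterpart $100$ (an edge) exactly when $a_x<b_y$. Two letters of $X$ project to $01$ or $10$ (an edge), two letters of $Y$ project to a $2$-uniform word (an edge), and every pair involving a cubed letter lies outside $L_{\text{HL}}$ (a non-edge). Hence the constructed word represents $G_1\cup G_2$.

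The only delicate points are the boundary cases. One is the tie-breaking when endpoints coincide. The other is getting the correct orientation, namely which side receives the single letter and which the doubled letter, so that the edge pattern is $011$ and not $001$. Both are routine once the orientation of Step 2 is fixed.
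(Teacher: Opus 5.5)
Your proposal is correct and follows the paper's proof. The forward direction uses the same assignment, $[p_x,\infty)$ for once-occurring letters and $(-\infty,f_y]$ for twice-occurring letters via first occurrences, with letters of frequency at least $3$ forming the null part. The converse uses the same endpoint scan, resolving ties in favour of the single letter, and gives isolated vertices frequency $3$.

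The only difference is cosmetic. You write $yy$ consecutively at $b_y$, whereas the paper writes $y$ once at its endpoint and appends a second copy of every doubled letter at the end. Both give $011$ exactly for intersecting pairs. Note also that $100$ is the complement $\widetilde{011}$, not a reversal.
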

%\todoscs{Reviewer 1: Theorem 5.28 ignores the fact that intersection graphs of halflines are to cobipartite graphs as chain graphs are to bipartite graphs (that is, they are the complements of chain graphs -- see e.g. the references mentioned in https://www.sciencedirect.com/science/article/pii/S0012365X01000942).}

\begin{proof}
%
%\begin{example}
%    Define $L\coloneqq  \langle011\rangle \cup \bigcup_{(i,j)\in {\mathbb{N}^+}^2\setminus \{(1,2),(2,1)\}}1^i \shuffle 1^j$. We now will show that $\mathcal{G}_L$ is the set of intersection graph of halflines.
%
Let $G=(V,E)$ be an intersection graph of half-lines of order $n$ with the interval representation $\mathcal{I}=\{I_v\mid v\in V\}$. Furthermore, $V_1\subseteq V$ denotes the set such that for each $v\in V_1$ there exists an $a_v \in \mathbb{R}$ with $I_v = [a_v , \infty)$, i.e., $I_v$ is a left-bounded half-line. $V_2\subseteq V$ is the set of vertices $v\in V$  such that there is an $a_v \in \mathbb{R}$ with $I_v = ( -\infty, a_v ]$, i.e., $I_v$ is a right-bounded half-line. Thus $V_1,V_2$ are cliques in~$G$. Define the an order  on~$V$ by $$\mbox{$\preceq'$}\coloneqq \{ (v,u)\in V^2 \mid (a_v<a_u)\vee (a_v=a_u \wedge v\in V_1 \wedge u\in V_2) \}\,.$$ Extend $\preceq'$ to a linear ordering that we write as~$\preceq$. Let $w'$ be the word that enumerates~$V$ in the ordering~$\preceq$, i.e., $|w'|=n$. Let $w$ denote the word $w'w''$ where $w''$ is an enumeration of~$V_2$. Define $G'=(V,E')=G(L_{\text{HL}},w)$. Clearly, for each $i\in \{1,2\}$ and $v\in V_i$, $v$ appears~$i$ times in~$w$. As $L_{\text{HL}}$ contains all 1-uniform and 2-uniform words,  $V_1$ and $V_2$ are cliques in~$G'$. This leaves to show that for all  $v\in V_1, u\in V_2$, $\{v,u\}\in E$ \iffl $h_{v,u}(w)=011$. By definition of~$w$, the last position of $h_{v,u}(w)$ is~1. Thus, $\{v,u\}\in E$ \iffl $a_v \leq a_u$ \iffl $h_{v,u}(w)=011$. Thus, $G=G'$. As in \autoref{propos:closure-adding-isolates}, we can easily add isolated vertices to~$G'$.

Now let $G=(V,E)\in \mathcal{G}_{L_{\text{HL}}}$. Then there exists a word $w=w_1\cdots w_{\ell}\in V^*$ with $w_i\in V$ for $i\in [\ell]$ such that $G=G(L_{\text{HL}},w)$. Let $V_j\coloneqq \{v\in V \mid \vert w\vert_v = j\}$ for $j \in \{1,2\}$. Define 
    $$a_v\coloneqq \begin{cases}
        \min \{i\mid w_i=v\}, & v\in V_1 \cup V_2,\\
        0, & \text{otherwise,}
    \end{cases} \text{ and }I_v\coloneqq \begin{cases}
        (-\infty, a_v], & v\in V_2,\\
        [a_v, \infty), & v\in V_1\,.
    \end{cases}$$
    Furthermore, $G'=(V,E')$ denotes the intersection graph of $\mathcal{I}=\{I_v\mid v\in V\}$.  Now we want to show $E=E'$. Let $v,u\in V$. If $v \notin V_1 \cup V_2$, then $v$ is isolated in $G$. %If $u \in V_2$\todohfInPlace{and $v\in V_2$?}, then $0\in I_v\cap I_u$. Otherwise\todohfInPlace{If $u,v\in V_1$?}, $\ell \in I_v\cap I_u$. 
    Thus, we can assume $v,u \in V_1 \cup V_2$. If $v,u\in V_1$ (respectively $v,u\in V_2$) then $\{v,u\}\in E$ and $\ell \in I_v\cap I_u$ (respectively $0\in I_v \cap I_u$). 
    So, we can assume $v\in V_1$ and $u\in V_2$. $\{v,u\}\in E$ if an only if $a_v\leq a_u$ \iffl $ I_v \cap I_u \neq \emptyset$.
%\end{example}
\end{proof} 

We can characterize the class of half-line intersection graphs by the infinite language $L\coloneqq \langle011\rangle \cup \bigcup_{(i,j)\in \mathbb{N}_{\geq 1}^2\setminus \{(1,2),(2,1)\}}\{0,1\}^{i,j\text{-uni}}$. The proof is nearly identical to the previous one, apart from the discussion of isolated vertices. If a vertex (letter) has frequentness different from $1$ or~$2$, then this corresponds to the interval $[0,\infty)$.

Our last theorem of this subsection does not really fit here, as we also see vertices of frequentness~3. However, it shows that one can still expect new graph classes to show up when increasing frequentnesses further. At least, we saw no way to describe this graph class with vertices of frequentness at most~2. Moreover, this theorem is needed later to disprove a conjecture that one might tend to believe.

We also need to introduce a further graph class first. 
A bipartite graph $G=(V,E)$ with the partition classes $A$ and $B$ is called an \emph{interval bigraph} if there is family $\mathcal{I} = \{ I_v =[l_v,r_v] \}_{v\in V}$ of closed intervals such that, for every $u \in A$ and $v\in B$, $I_u \cap I_v \neq \emptyset$ \iffl $\{ u,v \} \in E$.  Keep in mind that $I_u \cap I_v \neq \emptyset$ holds \iffl $\max\{l_v,l_u\}\leq \min\{r_v,r_u\}$.
This graph class was introduced in~\cite{HarKabMcM82}.
There are very close connections to interval digraphs~\cite{SenDRW89}, as explained in~\cite{Mul97}.
We also refer to \cite{DasSah2024}.  

\begin{theorem} \label{thm:intervalBigraphs}
    $\mathcal{G}_{\langle 01110,01101,01011,01100,01010,01001\rangle}$ is the set of interval bigraphs.
\end{theorem}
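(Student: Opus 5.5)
The plan is to decode the language first. Every word in $L\coloneqq\langle 01110,01101,01011,01100,01010,01001\rangle$ has two letters of one kind and three of the other. Rename so that $0$ is the letter occurring twice and $1$ the letter occurring three times. In this normalization $L$ consists of $01110,01101,01011$ together with the complements of $01100,01010,01001$, namely $10011,10101,10110$. Among the ten words with two $0$s and three $1$s, the excluded ones are $00111,11001,11010,11100$. Write the positions of $0$ as $x_1<x_2$ and of $1$ as $y_1<y_2<y_3$. Checking the ten words one by one gives the key observation
\[
h_{x,y}(w)\in L \iff [x_1,x_2]\cap[y_1,y_2]\neq\emptyset .
\]
In words, the third occurrence of the thrice-occurring letter is irrelevant. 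That third occurrence only serves to separate the two sides by frequentness.

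\emph{$\cG_L$ is contained in the interval bigraphs.} Take $w$ with $G=G(L,w)$. Since $L\subseteq\{0,1\}^{2,3\text{-uni}}$, the argument of \autoref{prop:bipartite} applies. Every vertex whose frequentness is not in $\{2,3\}$ is isolated. The sets $V_2$ and $V_3$ (letters occurring twice, resp.\ thrice) are independent. Edges only run between $V_2$ and $V_3$.

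Set $A\coloneqq V_2\cup(V\setminus(V_2\cup V_3))$ and $B\coloneqq V_3$. Define intervals as follows:
\begin{itemize}
\item for $v\in V_2$, let $I_v$ be the interval spanned by its two occurrences;
\item for $v\in V_3$, let $I_v$ be the interval spanned by its first two occurrences;
\item for each remaining (isolated) vertex, place a tiny interval far to the right of $|w|$.
\end{itemize}
By the key observation, for $u\in A$ and $v\in B$ we get $I_u\cap I_v\neq\emptyset$ \iffl $\{u,v\}\in E$. Hence $G$ is an interval bigraph.

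\emph{Every interval bigraph lies in $\cG_L$.} Let $G$ be an interval bigraph with sides $A,B$ and intervals $I_v=[l_v,r_v]$. First I would make all $2|V|$ endpoints pairwise distinct without changing any intersection between an $A$-interval and a $B$-interval. Sort the endpoints by coordinate. At equal coordinates, place all left endpoints before all right endpoints, and break the remaining ties arbitrarily. Then two closed intervals intersect \iffl $\max(l_u,l_v)$ precedes $\min(r_u,r_v)$ in this linear order. Degenerate intervals with $l_v=r_v$ are handled the same way: $l_v$ precedes $r_v$, so each vertex still gets two distinct endpoints.

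Next, let $w'$ be the word listing the owners of the endpoints in this order; each vertex then occurs exactly twice. Put $w\coloneqq w'\cdot b_1\cdots b_k$, where $b_1,\dots,b_k$ enumerate $B$. Now every vertex of $A$ occurs twice and every vertex of $B$ occurs three times, with the first two occurrences of $b\in B$ marking exactly $l_b$ and $r_b$. Consequently:
\begin{itemize}
\item pairs inside $A$ project to $2$-uniform words and pairs inside $B$ to $3$-uniform words; neither kind lies in $L$, so both sides are independent;
\item for $a\in A$, $b\in B$, the key observation yields $\{a,b\}\in E(G(L,w))$ \iffl $I_a\cap I_b\neq\emptyset$.
\end{itemize}
Thus $G=G(L,w)$.

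The main obstacle is the case-by-case verification of the key observation over the ten patterns, together with the tie-breaking step. The tie-breaking must respect closed-interval touching, where a left endpoint meets a right endpoint at the same coordinate. My ordering rule is exactly what preserves intersections in that case.
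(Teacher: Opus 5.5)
Your proof is correct and follows essentially the same route as the paper. In one direction the intervals come from the first two occurrences of each letter. In the other you list the interval endpoints and append one side once more; you append $B$ where the paper appends $A$, which is symmetric. Your explicit left-before-right tie-breaking rule is a more careful version of the paper's ``standard perturbation argument''. Likewise, placing isolated vertices on far-away intervals replaces the paper's rewriting to $vv\cdot h_{V\setminus\{v\}}(w)$ without changing anything essential.
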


\begin{proof}
    To simplify the notation, define $$L \coloneqq \langle 01110,01101, 01011,01100,01010,01001\rangle\,.$$ Note that the words $w$ with $\{ |w|_0, |w|_1 \} = \{ 2, 3 \}$ missing in $L$ are exactly the words that start with two identical letters. Let $G = (V,E)\in \mathcal{G}_L$. Then there exists a $w\in V^*$ with $G=G(L,w)$. Assume there exists a $v\in V$ with $\vert w\vert_v \notin \{2,3\}$. Then define $w'\coloneqq vv\cdot h_{V\setminus\{v\}}(w)$. % where $w''$ is $w$ after deleting each appearance of $v$. 
    Note that $v$ is an isolated vertex in~$G$  and $G(L,w')$. Since $h_{x,y}(w)=h_{x,y}(w')$ for all $x,y\in V\setminus \{v\}$, $G=G(L,w')$. Thus, we can assume that each $v\in V$ appears exactly twice or thrice in~$w$.

    %Define $A\coloneqq \{a\in V \mid \vert w\vert_a=2\}$ and $B\coloneqq \{b\in V \mid \vert w\vert_b=3\}$.\todoscsInPlace{I changed the notation from $A$ and $B$ to $V_2$ and $V_3$ as this is already defined.} 
    Clearly, $G$ is bipartite with the classes $V_2$ and $V_3$. Let $w=w_1\cdots w_{\ell}$ with $w_1,\dots, w_{\ell}\in V$. We denote by $i_{v,j}\in[\ell]$ for $v\in V$ and $j\in \{1,2\}$ the number such that $w_{i_{v,j}}=v$ and $\vert w_1 \cdots w_{i_{v,j}}\vert_v = j$, i.e., $i_{v,j}$ is the index of the $j^{\text{th}}$ occurrence of~$v$ in~$w$.

    For $v\in V$, define $I_v=[i_{v,1},i_{v,2}]$. Let $\ta\in V_2$ and $\tb \in V_3$. Consider $\{\ta,\tb\}\notin E$. This holds \iffl $h_{\ta,\tb}(w) \in \{00111,11001,11010,11100\}$. This are exactly the cases where $I_\ta \cap I_\tb = \emptyset$. Therefore, $G$ is an interval bigraph. %\todoscsInPlace{This doesn't hold if we use the proof for the second corollary below. (See comment there!)}

    Now let $G=(V,E)$ be an interval bigraph of order~$n$ with the classes $A,B \subseteq V$ and the family of intervals $\mathcal{I}=\{I_v\coloneqq [l_v,r_v]\}_{v \in V}$. %In the proof of Lemma 8 of \cite{DasSah2024}\todoscsInPlace{Reviewer 1: There is no need for the reference to the proof of Lemma 8 of [16]; the fact we can assume the endpoints are different is a standard perturbation argument, and if there is to be a citation, it should be to a textbook or one of the old papers on interval graphs.} it is proven that
    By a standard perturbation argument, we can assume the endpoints of the intervals are different. Define $x_1,\dots,x_{2n}\in \{l_v,r_v\mid v\in V\}$ such that for $i,j\in [2n]$, $x_i<x_j$ \iffl $i<j$. Furthermore define $w'\coloneqq w_1\cdots w_{2n} \in V^{2n}$ with $w_i=v$ if $x_i\in \{l_v,r_v\}$ for $v\in V$. By adding each vertex from~$A$ once in an arbitrary order at the end of~$w'$, we obtain $w$. Clearly, each vertex from~$A$ appears three times and each from~$B$ two times in~$w$. Define $G'=(V,E')=G(V,w)$. Let $v\in A$ and $u\in B$. Consider $\{v,u\}\in E$. Hence $\max\{l_v,l_u\}\leq \min\{r_v,r_u\}$. This implies $h_{v,u}(w)  \in \{01010, 01100, 10010, 10100\}\subseteq L$ and $\{v,u\}\in E'$.

    Assume $\{v,u\}\in E'$. Since at the end of~$w$, $A$ is enumerated, there the last symbol of $h_{v,u}(w)$ is a~$0$. Hence, $h_{v,u}(w) \in \{01010,01100,10010,10100\}$. In this case  $\max\{l_v,l_u\}\leq \min\{r_v,r_u\}$ and $\{u,v\}\in E$. Thus, $G=G'$.%\qed
\end{proof} 

If we use the first and the last occurrence of a letter to describe the bounds of an interval (instead of the first two occurrences), we get the following corollary: 

\begin{corollary}
   $\mathcal{G}_{\langle \{0,1\}^{2,3\text{-uni}} \setminus \{ 00011, 11000\} \rangle}$ is the set of interval bigraphs. 
\end{corollary}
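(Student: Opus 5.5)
The plan is to reuse the proof of \autoref{thm:intervalBigraphs}, with one change: the interval of a vertex $v$ now runs from its first to its last occurrence in the representing word. Write $L'$ for the language in the statement. I read the excluded words as the whole $0$-$1$-symmetric family of ``separated'' words. These are the $(2,3)$-uniform words in which every occurrence of one letter comes before every occurrence of the other, namely $00011,11100,00111,11000$. Literally, the symmetric hull $\langle\cdot\rangle$ would re-add $00011$ and $11000$ as complements of $11100$ and $00111$, so I take this as the intended meaning. The one fact everything rests on is: for a $(2,3)$-uniform word $h_{u,v}(w)$, the spans $[\text{first},\text{last}]$ of $u$ and $v$ are disjoint \iffl $h_{u,v}(w)$ is separated, i.e.\ \iffl $h_{u,v}(w)\notin L'$.

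\textbf{Step 1: normalization.} Let $G=G(L',w)$. Every letter $v$ with $|w|_v\notin\{2,3\}$ is isolated. Replace $w$ by $vv\cdot h_{V\setminus\{v\}}(w)$.
\begin{itemize}
\item For any $u$ with $|w|_u=3$, the projection $h_{v,u}$ is $00111$, which is separated and hence a non-edge.
\item For $|w|_u=2$ the projection is $0011$, which is not $(2,3)$-uniform and hence also a non-edge.
\item Pairs not involving $v$ are unchanged.
\end{itemize}
Iterating this, we may assume that every letter occurs two or three times. By \autoref{prop:bipartite}, $G$ is then bipartite with classes $V_2$ and $V_3$.

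\textbf{Step 2: from a word to a model.} Set $I_v\coloneqq[\text{first}_v,\text{last}_v]$, the positions of the first and last occurrences of $v$ in $w$. For $a\in V_2$ and $b\in V_3$, the key fact gives $\{a,b\}\notin E$ \iffl $I_a\cap I_b=\emptyset$. Hence $G$ is an interval bigraph with classes $V_2,V_3$.

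\textbf{Step 3: from a model to a word.} Let $G$ be an interval bigraph with classes $A,B$. By the usual perturbation argument, all $2n$ endpoints can be taken distinct. Build the $2$-uniform endpoint word $w'$ exactly as in the proof of \autoref{thm:intervalBigraphs}. For each $a\in A$, insert a third copy of $a$ immediately after its first occurrence, rather than appending it at the end as before. This insertion does not move the first or last occurrence of any letter, so every span is still the original interval. In the resulting word $w$, letters from $A$ occur three times and letters from $B$ twice. Pairs inside $A$ or inside $B$ are not $(2,3)$-uniform and so give no edges. For $a\in A$ and $b\in B$, the key fact gives $\{a,b\}\in E(G(L',w))$ \iffl $I_a\cap I_b\neq\emptyset$ \iffl $\{a,b\}\in E$.

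\textbf{Main obstacle.} The only delicate point is placing the extra occurrences so that spans are preserved. Appending them at the end, as in \autoref{thm:intervalBigraphs}, would stretch every interval in $A$ to the right end, which is why the insertion goes next to the first occurrence instead. Beyond that, the ambiguity in the symmetric hull has to be settled as above.
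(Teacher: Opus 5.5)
Your proof is correct and takes the same route as the paper: it reruns the proof of the preceding interval-bigraph theorem with $I_v$ spanned by the first and last occurrence of $v$, so that an $A$--$B$ pair is a non-edge exactly when its projection is a separated word. You also make explicit two points the paper's one-line justification passes over: the third copy of each $a\in A$ must be placed strictly between its two endpoints rather than appended at the end, and the excluded set must be read as $\langle 00011,11000\rangle$, because taken literally the symmetric hull re-adds $00011$ and $11000$ and yields all of $\{0,1\}^{2,3\text{-uni}}$.
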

    
\end{toappendix}
\begin{comment}
\noindent
By the proof of the previous theorem, we also get:\todohfInPlace{check this}
\todoscsInPlace{I believe this is not true for the third paragraph.}
\begin{corollary}
   $\mathcal{G}_{\langle 01101,01011,01100,01010\rangle}$ is the set of interval bigraphs.
\end{corollary}

This is remarkable, as $\langle 01101,01011,01100,01010\rangle=\langle 01010,01100,10010,10100\rangle$, i.e., we have the word interpretation of the interval model, concatenated by a single~$0$. Of course, we can consider similar modifications to all 2-uniform languages that we looked at so far and we would obtain some bipartite analogues.
For instance, we could take $\langle 01010,10100\rangle=\langle 01010,01011\rangle$ as a bipartite overlap graph model. We are not aware of any studies of such a model (possibly under a different name).
    
\end{comment}

\section{Limitations of Language-Representability}

Next, we derive another nice closure property that all language-representable graph classes enjoy.
%; see \autoref{fig:twin} for an illustration.  

\begin{definition}\label{deftwinning}
We say that a graph class $\cG$ is %\emph{closed under adding some twins} or 
\emph{closed under twinning}\longversion{\footnote{This operation might look weird at first glance, but a very similar one appeared in the study of word-representability of split graphs, see, e.g., \cite{CheKitSai2022}.}} if for every $G=(V,E)\in \cG$ and for every $v\in V$ and some $v'\notin V$, at least one of the following two statements is true:
\begin{itemize}
    \item The graph $G'_v$ obtained from $G$ by adding $v'$ as a true twin of~$v$ belongs to $\cG$.
    \item The graph $G''_v$ obtained from $G$ by adding $v'$ as a false twin of~$v$ belongs to~$\cG$.
\end{itemize}
\end{definition}

\noindent
Notice that $\{G'_v,G''_v\}\subseteq\cG$ is possible in \autoref{deftwinning}.

\begin{theorem}\label{thm:twins}
Let $L\subseteq\{0,1\}^*$. Then, $\cG_L$ is closed under twinning.
\end{theorem}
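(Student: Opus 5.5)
Given $G=(V,E)=G(L,w)$ and $v\in V$, the plan is to build a word $w'$ over $V\cup\{v'\}$ in which $v'$ behaves, towards every other letter, exactly like~$v$. The natural choice is to replace every occurrence of $v$ in $w$ by the factor $vv'$. Formally, let $\sigma$ be the monoid morphism with $\sigma(v)=vv'$ and $\sigma(x)=x$ for $x\neq v$, and set $w'=\sigma(w)$.

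First I compute the projections onto pairs. For $x,y\in V\setminus\{v\}$ we have $h_{x,y}(w')=h_{x,y}(w)$. For $x\in V\setminus\{v\}$, deleting all $v'$ from $w'$ gives back $w$, so $h_{x,v}(w')=h_{x,v}(w)$. Symmetrically, deleting all $v$ from $w'$ gives $w$ with every $v$ renamed to $v'$, so $h_{x,v'}(w')=h_{x,v}(w)$. Hence, with $G'=G(L,w')$, we get $G'[V]=G$ (this also follows from \autoref{lem:inducedsubgraph}, since $h_V(w')=w$), and $N_{G'}(v')\setminus\{v\}=N_G(v)$.

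It remains to settle the pair $\{v,v'\}$. Here $h_{v,v'}(w')=(01)^k$ with $k=|w|_v\geq 1$. If $(01)^k\in L$, then $v$ and $v'$ are adjacent, so $v'$ is a true twin of $v$ and $G'=G'_v$. Otherwise they are non-adjacent, $v'$ is a false twin, and $G'=G''_v$. In either case one of the two graphs from \autoref{deftwinning} lies in $\cG_L$. The $0$-$1$-symmetry of $L$ guarantees that the orientation of $v,v'$ in $h_{v,v'}$ is irrelevant.

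No step is a real obstacle. The only point needing care is the observation that erasing $v'$, or erasing $v$ and renaming, recovers exactly $w$. That holds because $v$ and $v'$ always appear as the adjacent block $vv'$, so no other letter is interleaved between them. The definition only asks for \emph{at least one} twin type, and that is precisely why we cannot control whether $(01)^k\in L$.
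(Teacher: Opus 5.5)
Your proof is correct and is the paper's argument: the paper also builds $w'$ by inserting $v'$ immediately after every occurrence of $v$ (exactly your morphism $\sigma$), checks that projections onto pairs other than $\{v,v'\}$ are unchanged, and lets membership of $h_{v,v'}(w')$ in $L$ decide whether the new vertex is a true or a false twin. Your explicit computation $h_{v,v'}(w')=(01)^{|w|_v}$ is a small extra detail that the paper leaves implicit.
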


\begin{proof}
Fix $L\subseteq\{0,1\}^*$. Let $G=(V,E)\in \cG_L$. Hence, there is some word~$w$ over the alphabet~$V$ such that $G=G(L,w)$. Let $v\in V$.
Then, we can decompose $w=w_0vw_1v\cdots vw_k$ such that, for each $i\in\N$, $i\leq k$, $v\notin\alph{(w_i)}$. Let $v'\notin V$. Define
%\begin{equation}
$w'\coloneqq w_0vv'w_1vv'\cdots vv'w_k\,.$
%\label{eq:twin-def}
%\end{equation}
Let $G'=(V',E')=G(L,w')$ such that $V'=V\cup\{v'\}$. 
By construction, for any $x,y\in V$, $\{x,y\}\in E\iff \{x,y\}\in E'$.
Moreover, for any $x\in V\setminus\{v\}$, $\{x,v\}\in E\iff \{x,v'\}\in E'$. This means that $v,v'$ are twins. Whether or not they are true or false twins depends on whether or not $h_{v,v'}(w')\in L$.
In any case, $G'\in\cG_L$.%\qed 
\end{proof}

\begin{figure}[bt]

\begin{subfigure}[t]{.51\textwidth}
        \centering
        \begin{tikzpicture}
            \node (a) {$\ta$};
            \node[below right= 0.6cm and 0.1cm of a] (e) {$\te$};
            \node[right of=a] (f) {$\tf$};
            \node[above of=f] (b) {$\tb$};
            \node[right of=f] (c) {$\tc$};
            \node[above right= 0.1cm and 0.3cm of c,color=red] (c') {$\tc'$};
            
            \node[below left=0.6cm and 0.1cm of c] (d) {$\td$};

            \draw (a) -- (b);
            \draw (a) -- (e);
            \draw (a) -- (f);
            \draw (b) -- (c);
            \draw (b)[color=red] -- (c');
            \draw (c')[color=cyan] -- (c);
            \draw (b) -- (f);
            \draw (c) -- (d);
            \draw (c) -- (f);
            \draw (c')[color=red] -- (d);
            \draw (c')[color=red] -- (f);
            \draw (d) -- (e);
            \draw (d) -- (f);
            \draw (e) -- (f);
        \end{tikzpicture}
%        \caption{\mbox{$G(L,w)$}}
\end{subfigure}\qquad
\begin{subfigure}[t]{.43\textwidth}
            \centering
        \begin{tikzpicture}
            \node (a) {$\ta$};
            \node[below right= 0.6cm and 0.1cm of a] (e) {$\te$};
            \node[right of=a] (f) {$\tf$};
            \node[above of=f] (b) {$\tb$};
            \node[right of=f] (c) {$\tc$};
            \node[above right= 0.15cm and 0.3cm of c,color=red] (c') {$\tc'$};
            
            \node[below left=0.6cm and 0.1cm of c] (d) {$\td$};

            \draw (a) -- (b);
            \draw (a) -- (e);
            \draw (a) -- (f);
            \draw (b) -- (c);
            \draw (b)[color=red] -- (c');
            \draw (b) -- (f);
            \draw (c) -- (d);
            \draw (c) -- (f);
            \draw (c')[color=red] -- (d);
            \draw (c')[color=red] -- (f);
            \draw (d) -- (e);
            \draw (d) -- (f);
            \draw (e) -- (f);
        \end{tikzpicture}
    \end{subfigure}
    \caption{$L=\{ 0101, 1010, 001, 110 \}$, $w=\mathtt{aebac{\color{red}c'}bdc{\color{red}c'}edf}$\shortversion{,}\longversion{ and} $w'=\mathtt{aebac{\color{red}c'}bd{\color{red}c'}cedf}$.}\label{fig:twin}
\end{figure}

\begin{toappendix}
\begin{remark}
It is worth noting how both adding true and false twins can result in a graph from our graph class. This is due to the fact that there is a certain arbitrariness in the definition of~$w'$ in the proof of \autoref{thm:twins}. More generally speaking, we could pick any $w'$ from $$\{w_0\}\cdot (v\shuffle v')\cdot \{w_1\}\cdot (v\shuffle v')\cdots (v\shuffle v')\cdot \{w_k\}\,,$$
where $\shuffle$ denotes the shuffle operation.
For some choice of $w'$, $h_{v,v'}(w')\in L$ might hold, while for other choices of $w'$, $h_{v,v'}(w')\notin L$.
\end{remark}    
\end{toappendix}
\noindent
We will derive a number of interesting consequences from twinning.

\longversion{
\begin{proposition}\label{prop:false-twins}
Let $L\subseteq\{0,1\}^*$.  If $\cG_L$ is closed under adding false twins, then $\cG_L$ contains all null graphs.
\end{proposition}

\begin{proof}
First, as $\cG_L$ is closed under taking induced subgraphs, $N_1\in \cG_L$. 
%As $\cG_L$ is closed under adding false twins, there must be a word $w$
By adding more and more false twins, clearly arbitrarily large null graphs can be created.
\end{proof}

\paragraph*{Further Properties Of Language-Representable Graph Classes.}

Our previous study on `adding vertices' has some possibly even surprising consequences for \emph{all} (but trivial) graph classes that can be represented by languages.
In the following, we assume that the reader is acquainted with the notion of treewidth. This notion is also explained in the graph-theoretic appendix.

\begin{prprep}\label{prop:true-twins}Let $L\subseteq\{0,1\}^*$. If $\cG_L$ is closed under adding true twins, then $\cG_L$ contains all complete graphs and hence graphs of unbounded treewidth\longversion{ and non-bipartite graphs}. %\todoscsInPlace{Reviewer 1 thinks that this should be combined with \autoref{thm:bounded-treewidth}.}
\end{prprep}

\begin{proof}
First, as $\cG_L$ is closed under taking induced subgraphs, $N_1=K_1\in \cG_L$. By adding more and more true twins, clearly arbitrarily large complete graphs can be created. Hence,  $\cG_L$ contains\longversion{ non-bipartite graphs (like $K_3$) and} graphs of unbounded treewidth.
\end{proof}

%Before turning our attention to uniform and nearly uniform graph classes, we want to stress that we can also obtain some rather general inclusion results for graph families based on simple formal language properties.
}

\begin{theorem}\label{thm:includingbigbiclique}
If $L\not\subseteq 0^*\cup 1^*$, then $\cG_L$ contains arbitrarily large complete bipartite graphs as subgraphs of graphs in $\cG_L$.
\end{theorem}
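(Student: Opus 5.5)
Since $L\not\subseteq 0^*\cup 1^*$ and $L$ is $0$-$1$-symmetric, there is a word $x\in L$ with $|x|_0\geq 1$ and $|x|_1\geq 1$. I would start from a single edge and then blow it up with \autoref{thm:twins}. Take $u,v$ with the pattern of $x$: set $w\coloneqq\iota^{-1}(x)$, i.e., replace each $0$ by $u$ and each $1$ by $v$. Then $h_{u,v}(w)=x\in L$, so $G(L,w)$ is the graph $K_2$ on $\{u,v\}$, and $K_2\in\cG_L$.

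Next I would repeatedly apply the twinning construction from the proof of \autoref{thm:twins}. Adding a twin $u'$ of $u$ makes $u'$ adjacent to $v$, since $u'$ has the same neighbourhood as $u$ outside $\{u,u'\}$. The edge $\{u,u'\}$ may or may not be present, depending on whether $h_{u,u'}(w')\in L$; I do not need to control this. After adding $n-1$ twins of $u$ and $m-1$ twins of $v$, the resulting vertex sets are $U=\{u=u_1,\dots,u_n\}$ and $W=\{v=v_1,\dots,v_m\}$. I would check inductively that every $u_i$ is adjacent to every $v_j$. Twinning preserves all existing edges and copies the neighbourhood of the twinned vertex outside the twin pair. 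Hence each new $u_i$ inherits all of $W$ built so far, and each new $v_j$ inherits all of $U$. The graph obtained lies in $\cG_L$ by \autoref{thm:twins}, has vertex set $U\cup W$, and contains all $nm$ edges between $U$ and $W$. So it contains $K_{n,m}$ as a (not necessarily induced) subgraph, which is exactly what the statement claims.

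The only subtle step is the bookkeeping in the induction. Twinning a vertex twice, or twinning a vertex of $U$ after several vertices of $W$ have been added, must not lose the cross edges. This holds because \autoref{thm:twins} leaves $G[V]$ unchanged and gives the new vertex exactly the old vertex's neighbours in $V\setminus\{v\}$. Concretely, the word becomes $w$ with each occurrence of $u$ replaced by $u u_2\cdots u_n$ and each occurrence of $v$ by $v v_2\cdots v_m$. From this description one can also verify directly that $h_{u_i,v_j}$ of the word equals $x$.
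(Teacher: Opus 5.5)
Your proof is correct and follows essentially the paper's route: you obtain $K_2\in\cG_L$ and then repeatedly twin via \autoref{thm:twins}, observing that the cross edges $\{u_i,v_j\}$ survive whether true or false twins arise. The paper gets $K_2$ by heredity where you build it directly from a word of $L$ containing both letters. Your closing remark already gives a one-step proof with no induction: substituting the blocks $u_1\cdots u_n$ for $0$ and $v_1\cdots v_m$ for $1$ in $x$ yields a word $W$ with $h_{u_i,v_j}(W)=x\in L$ for all $i,j$.
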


\begin{proof}
As $L\not\subseteq 0^*\cup 1^*$, there is some $G\in\cG_L$ that contains an edge. As $\cG_L$ is hereditary, $K_{1,1}\in \cG_L$. Let $x,y$ be its two vertices. 
By alternatingly adding a twin to $x$, then to $y$, then to $x$ again, etc., we obtain, after $2n$ twinnings, a graph $G_n$ that contains $K_{n+1,n+1}$ as a subgraph, irrespectively of whether false or true twins have been added. According to \autoref{thm:twins}, $G_n\in \cG_L$.
%\qed 
\end{proof}

\begin{remark}
This theorem implies that no sparse graph class can be represented by any language. This excludes, for instance, graphs of bounded degeneracy (in particular planar graphs or forests), or of bounded graph genus, or of bounded treewidth, or of bounded treedepth, just to mention some examples.
\end{remark}

\begin{toappendix}
Finally, we re-consider the parameter treewidth explicitly; similar considerations are possible with other sparseness parameters.

\begin{theorem}\label{thm:bounded-treewidth}
%\shortversion{$(*)$}
Let $L\subseteq\{0,1\}^*$. The following four conditions are equivalent.
\shortversion{(1) There is some $t\geq 0$ such that all graphs in $\cG_L$ have treewidth bounded by~$t$. (2) All graphs in $\cG_L$ have treewidth~$0$. (3) $\cG_L=\{N_n\mid n\in\N_{\geq1}\}$. (4) $L\subseteq 0^*\cup 1^*$.}
\longversion{\begin{enumerate}
    \item There is some $t\geq 0$ such that all graphs in $\cG_L$ have treewidth bounded by~$t$. \item  All graphs in $\cG_L$ have treewidth~$0$. \item $\cG_L=\{N_n\mid n\in\N_{\geq1}\}$. \item  $L\subseteq 0^*\cup 1^*$.
\end{enumerate}} 
\end{theorem}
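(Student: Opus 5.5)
I would prove the four conditions equivalent by the cycle (4)$\Rightarrow$(3)$\Rightarrow$(2)$\Rightarrow$(1)$\Rightarrow$(4). Almost all of the work has already been done in the paper. Only the last implication uses a genuine argument, and that argument is \autoref{thm:includingbigbiclique}.

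For (4)$\Rightarrow$(3), I would invoke \autoref{exa:represented-graphs-NullGraphs} directly. If $L\subseteq 0^*\cup 1^*$, then for distinct $u,v\in\alph(w)$ the word $h_{u,v}(w)$ contains both a $0$ and a $1$, so it lies outside $L$. Hence every represented graph is edgeless, and each $N_n$ is represented by $12\cdots n$. Then (3)$\Rightarrow$(2) holds because null graphs have treewidth~$0$: a tree decomposition with singleton bags suffices, since there are no edges to cover. Finally, (2)$\Rightarrow$(1) is trivial with $t=0$.

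For (1)$\Rightarrow$(4), I would argue by contraposition. Suppose $L\not\subseteq 0^*\cup 1^*$. By \autoref{thm:includingbigbiclique}, for every $n$ there is a graph $G_n\in\cG_L$ that contains $K_{n+1,n+1}$ as a subgraph. Treewidth is monotone under taking subgraphs, and $\mathrm{tw}(K_{m,m})=m$; even the weaker bound $\mathrm{tw}(K_{m,m})\geq m-1$, which follows from minimum degree or bramble arguments, is enough. So $\mathrm{tw}(G_n)\geq n$, and no uniform bound $t$ can exist.

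The only step that goes beyond bookkeeping is the lower bound on the treewidth of complete bipartite graphs. I would cite it as standard. If a self-contained argument is wanted, I would use that treewidth is at least the degeneracy, and $K_{m,m}$ has minimum degree $m$, so its degeneracy is $m$ and $\mathrm{tw}(K_{m,m})\geq m$. One point needs care: the graphs produced in \autoref{thm:includingbigbiclique} contain the biclique only as a subgraph, not necessarily as an induced one. This is harmless, because treewidth is monotone under arbitrary subgraphs.
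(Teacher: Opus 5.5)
Your proof is correct and rests on the same key idea as the paper: whenever $L\not\subseteq 0^*\cup 1^*$, twinning yields graphs in $\cG_L$ that contain arbitrarily large bicliques (\autoref{thm:includingbigbiclique}), and this rules out bounded treewidth. Organizing the argument as the single cycle (4)$\Rightarrow$(3)$\Rightarrow$(2)$\Rightarrow$(1)$\Rightarrow$(4) is slightly leaner bookkeeping than the paper's pairwise equivalences, because it makes the paper's separate false-twin argument for (2)$\Rightarrow$(3) and its $K_2$ argument for (3)$\Rightarrow$(4) unnecessary.
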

%\todoscsInPlace{Reviewer 1 thinks that this should be combined with \autoref{prop:true-twins}.}
%\todoscs{Reviewer 1: The fact that treewidth is ill-suited for this goal is unsurprising, since treewidth is by design used for studying sparse classes. The natural parameter to try and use is clique width, the dense analogue of treewidth (or any of a number of parameters functionally equivalent to clique with, e.g. rank width, NLC-width, ... or generalising it, e.g. the newly defined flip-width). There is also twin width (presumably of particular interest here given that these classes are closed under adding some twins). The author(s) should investigate these parameters.}

\begin{proof}
$(1)\implies (2)$:  If all graphs in $\cG_L$ have treewidth at most~$t>0$ and some graph~$G$ in $\cG_L$ has treewidth at least one, then this graph contains an edge $\{u,v\}$. Now, as $\cG_L$ is closed under adding twins, we can first add $t$ twins~$u_1,\dots, u_{t}$ of~$u$, i.e., in the resulting graph $G'$, $V(G')=V(G)\cup \{u_1,\dots,u_{t}\}$ and $N_{G'}[u]=N_{G'}[u_i]$ for $i\in [t]$. Then, we add $t$ twins~$v_1,\dots, v_{t}$ of~$v$, i.e., in the resulting graph $G''$,  $V(G'')=V(G')\cup \{v_1,\dots,v_{t}\}$ and $N_{G''}[v]=N_{G''}[v_i]$ for $i\in [t]$. Now, $G''[\{u_i,v_i\mid i\in[t]\}\cup\{u,v\}]$ contains $K_{t+1,t+1}$ as a subgraph, a graph of treewidth $t+1$. By construction, $G''\in \cG_L$ according to \autoref{thm:twins}, contradicting our assumption that all graphs in~$\cG_L$ have treewidth at most~$t>0$. Therefore, if all graphs in $\cG_L$ have bounded treewidth, then all these graphs are null graphs. 
\\
$(2)\implies (1)$: trivial.
\\
$(2)\implies (3)$: If all graphs in $\cG_L$ have treewidth~$0$, then these graphs must all be null graphs. As $\cG_L\neq\emptyset$ and $\cG_L$ are hereditary, $N_1\in\cG_L$. As $\cG_L$ is closed under adding some twins and as adding true twins would introduce edges into graphs of $\cG_L$, we know that $\cG_L$ is in fact closed under adding false twins. \longversion{By \autoref{prop:false-twins}}\shortversion{Hence}, $\cG_L=\{N_n\mid n\in\N_{\geq1}\}$. 
\\
$(3)\implies (2)$:\shortversion{ trivial.}\longversion{ Clearly, $\{N_n\mid n\in\N_{\geq1}\}$ has treewidth~$0$.} 
\\
$(4)\implies (3)$: Extending \autoref{exa:represented-graphs-NullGraphs}, it is clear that $\cG_L=\{N_n\mid n\in\N_{\geq1}\}$ if $L\subseteq 0^*\cup 1^*$.
\\
$(3)\implies (4)$ by contraposition: If $L\not\subseteq 0^*\cup 1^*$, then there exists some pattern~$p\in L$ with $|p|_0>0$ and $|p|_1>0$. Hence, considering $V=\{0,1\}$, $G(L,p)=(V,\{\{0,1\}\})\simeq K_2\in \cG_L$, so that $\cG_L\neq \{N_n\mid n\in\N_{\geq1}\}$.%\qed 
\end{proof}
\end{toappendix}

\shortversion{\noindent}
The results of this section also have some interesting algorithmic consequences for decision problems\longversion{ that one might like to solve with our framework}.
\begin{theorem}\label{thm:deciding-bounded-treewidth-and-degeneracy}
%\shortversion{$(*)$}
There is an algorithm that, given a context-free grammar~$G$, decides if the language $L$ generated by~$G$ represents a graph class $\cG_L$ that has bounded treewidth or that has bounded degeneracy. %\todoReviewer{Reviewer 3: Is Theorem 5.8 true for any language closed under intersection with regular languages and with decidable non-emptiness? If yes, it is probably better to state it this way.}
\end{theorem}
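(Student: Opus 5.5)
The plan is to reduce the question to an emptiness test for a context-free language intersected with a regular language. By \autoref{thm:bounded-treewidth}, $\cG_L$ has bounded treewidth \iffl $L\subseteq 0^*\cup 1^*$. I first want the same characterization for bounded degeneracy.

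For degeneracy, suppose $L\subseteq 0^*\cup 1^*$. Then by \autoref{exa:represented-graphs-NullGraphs} (extended as in \autoref{thm:bounded-treewidth}), $\cG_L$ consists only of null graphs, so every graph has degeneracy~$0$. Conversely, suppose $L\not\subseteq 0^*\cup 1^*$. Then \autoref{thm:includingbigbiclique} gives graphs in $\cG_L$ containing $K_{n,n}$ as a subgraph for every~$n$. Since $K_{n,n}$ has minimum degree~$n$, such graphs have degeneracy at least~$n$, so degeneracy is unbounded. Hence both properties hold \iffl $L\subseteq 0^*\cup 1^*$.

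It therefore suffices to decide whether $L\subseteq 0^*\cup 1^*$ for $L=L(G)$ with $G$ a context-free grammar. This is equivalent to emptiness of $L\cap R$, where $R\coloneqq\{0,1\}^*\setminus(0^*\cup 1^*)=\{0,1\}^*0\{0,1\}^*1\{0,1\}^*\cup\{0,1\}^*1\{0,1\}^*0\{0,1\}^*$ is regular. Context-free languages are effectively closed under intersection with regular languages: the standard product construction of a PDA (or grammar) with a DFA for $R$ produces a grammar for $L\cap R$. Emptiness of context-free languages is decidable by computing the set of productive nonterminals. The algorithm answers ``bounded'' \iffl $L\cap R=\emptyset$.

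I also need $L$ to be $0$-$1$-symmetric, as the paper assumes throughout. If the input grammar were not symmetric, one would use $\langle L\rangle$, which is again effectively context-free via the complement morphism; this does not change the test, since $R$ is closed under $\widetilde{\cdot}$.

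The only step needing care is the degeneracy direction. The theorem only guarantees $K_{n,n}$ as a not necessarily induced subgraph, but degeneracy is monotone under subgraphs, so that is sufficient. All remaining steps are standard closure and decidability facts. The argument works for any language family with effective closure under intersection with regular sets and decidable emptiness.
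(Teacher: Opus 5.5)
Your proposal is correct and follows the paper's proof: both properties are reduced, via the twinning-based \autoref{thm:includingbigbiclique} and \autoref{exa:represented-graphs-NullGraphs}, to the test $L\subseteq 0^*\cup 1^*$. This test is then decided as emptiness of $L\cap\overline{0^*\cup 1^*}$, using effective closure of context-free languages under intersection with regular languages. Your explicit argument for the degeneracy case (degeneracy is monotone under subgraphs, which the paper only asserts) and your remark on symmetrizing a non-symmetric grammar are sound but minor additions.
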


%In a slightly different formulation, this theorem has been highlighted in the introduction as \autoref{thm:deciding-bounded-treewidth-and-degeneracy-intro}.

\begin{proof}
Consider first the treewidth question that is (more formally) the following one:
Given  a context-free grammar~$G$ (generating~$L$), does there exist some integer $k>0$ such that, for all graphs $H\in\cG_L$, $\text{tw}(H)\leq k$?
According to \autoref{thm:includingbigbiclique} and \autoref{exa:represented-graphs-NullGraphs}, this is equivalent to the question if $L\subseteq 0^*\cup 1^*$. This in turn is equivalent to asking if $L\cap \overline{0^*\cup 1^*}=\emptyset$. As $\overline{0^*\cup 1^*}$ is a \longversion{(even star-free)} regular language and context-free languages are effectively closed under intersection with regular languages, one can construct, from $G$, another context-free grammar $G'$ that generates the language $L\cap \overline{0^*\cup 1^*}$. Now, recall that the emptiness problem for context-free grammars is decidable (see \cite[Thm. 6.6]{HopUll79}), i.e., there exists an algorithm that decides, given~$G'$, if its language is empty. This algorithm clearly solves our original problem. 

Concerning degeneracy, observe that %the arguments given in \autoref{thm:d-degenerate} lead to the conclusion that 
a language-representable graph class of bounded degeneracy must contain null graphs only, so that we can take the same decision procedure as described for the bounded treewidth question also in this case for the question of bounded degeneracy.% \qed 
\end{proof}

\begin{remark}
\autoref{thm:deciding-bounded-treewidth-and-degeneracy} also holds for any language class that is effectively closed under intersection with regular languages and that has a decidable non-emptiness problem.
\end{remark} 

What about graph parameters\longversion{ (and related graph classes)} that are possibly small for dense graphs?
The most innocently looking of such parameters is probably \emph{neighborhood diversity} $\nd(G)$. This is the number of equivalence classes of the relation $u\sim_{\nd}v$ iff $N(u)\backslash\{v\}=N(v)\backslash\{u\}$. Obviously, $\nd(K_n)=1$ and $\nd(K_{n,n})=2$ if $n>1$.

\begin{theorem}\label{thm:no-bounded-nb-diversity-characterization}
    Let $k\in \mathbb{N}\setminus \{0,1\}$. Then there exists no $L \subseteq \{0,1\}^*$ such that $\mathcal{G}_L=\{G=(V,E)\mid \nd(G)=k\}$ or  $\mathcal{G}_L=\{G=(V,E)\mid \nd(G)\leq k\}$.
 \end{theorem}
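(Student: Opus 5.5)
The plan is to show that neither class satisfies the structural properties every $\cG_L$ must have. The two relevant properties are \autoref{thm:hereditary}, that $\cG_L$ is hereditary, and \autoref{thm:twins}, that $\cG_L$ is closed under twinning.

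For the class $\{G\mid \nd(G)=k\}$ with $k\geq 2$, I would use heredity. Take a graph with neighborhood diversity exactly $k$, for instance $N_1\cup K_{k-1}$. Its isolated vertex forms one class and the clique forms another, and for $k-1\geq 2$ we could instead take a disjoint union of non-equivalent pieces; in any case a graph with $\nd=k$ exists. Its induced subgraph $K_1$ has $\nd(K_1)=1\neq k$. So the class is not hereditary, and by \autoref{thm:hereditary} it cannot equal any $\cG_L$.

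For the class $\{G\mid \nd(G)\leq k\}$, heredity holds, so I would use \autoref{thm:twins} instead. Adding a twin never increases $\nd$, because the new vertex $v'$ lies in the class of $v$: $N(v')\setminus\{v\}=N(v)\setminus\{v'\}$ holds for both true and false twins. So closure under twinning does not help directly, and we need a different obstruction. One might try to show that any $L$ whose class contains all graphs with $\nd\leq k$ must also contain a graph with $\nd>k$.

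The idea is to exploit that the twin of $v$ produced in the proof of \autoref{thm:twins} is either true or false, and we do not control which. Suppose $\cG_L$ contains all graphs of $\nd\leq k$. In particular it contains $K_2$ and $N_2$, so $L$ contains a mixed word and also misses some mixed pattern. Build a representation of a graph with $\nd=k$ and then insert a new vertex whose occurrences are placed so that its adjacency pattern to the existing classes is a new combination. The vertex is then in no existing class, and $\nd$ rises to $k+1$.

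Concretely, I would look for a word with $k+1$ letters representing a graph with $\nd=k+1$, obtained as an induced subgraph of something built from a representation in $\cG_L$. A cleaner route would be this: since $P_4$-free, $\nd\leq k$ graphs with many vertices are available, take a path-like graph $P_{m}$ (which has $\nd$ about $m$) and show that it lies in $\cG_L$. I am not sure the general claim is even true for arbitrary $L$, and this is the main obstacle.

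A fallback is to find a fixed graph $H$ with $\nd(H)=k+1$ that is forced into $\cG_L$. Take a representing word $w$ of $K_{1}\cup K_{k-1}$ or of a similar graph with $\nd=k$. Then, using the freedom in the shuffle $(v\shuffle v')$, produce both a true twin and a false twin of the same vertex, which gives two words. Finally, merge two copies with different twin types via a projection argument, so that one vertex class splits into two, yielding $\nd=k+1$. I would first try this with $k=2$, where the graphs involved are small enough to analyse patterns in $L$ case by case.
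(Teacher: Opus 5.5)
The case $\{G\mid \nd(G)\le k\}$ is the real content of the theorem, and here your proposal has a genuine gap: none of your three ideas is an actual argument. Inserting a vertex with a prescribed adjacency pattern, or showing that long paths lie in $\cG_L$, needs to know which patterns $L$ contains, and you do not. You also concede that you cannot control whether the twin produced in \autoref{thm:twins} is true or false.

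The missing idea is to \emph{concatenate} representations over disjoint alphabets. If $G=G(L,w)$ and $G'=G(L,w')$ with $\alph(w)\cap\alph(w')=\emptyset$, then by \autoref{lem:inducedsubgraph} the graph $G''=G(L,ww')\in\cG_L$ contains $G$ and $G'$ as induced subgraphs, whatever its cross edges are. It therefore suffices to find two graphs of neighborhood diversity $k$ that cannot both occur as disjoint induced subgraphs of a graph with $\nd\le k$. The paper uses:
\begin{itemize}
\item $G=\overline{kK_2}$, whose classes are the non-adjacent pairs $\{v_{i,1},v_{i,2}\}$;
\item $G'=kK_2$, whose classes are the edges $\{v'_{i,1},v'_{i,2}\}$.
\end{itemize}
In $G''$ the $k$ classes of each copy remain distinct, because they are separated by vertices of that same copy. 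So if no class of $G''$ meets both copies, $\nd(G'')\ge 2k$.

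Now suppose $v_{i,j}\sim_{\nd}v'_{p,q}$. Then $v_{i,3-j}\notin N(v'_{p,q})$ and $v'_{p,3-q}\in N(v_{i,j})$. If $\{v_{i,j},v'_{p,q}\}$ is an edge, $v'_{p,q}$ separates $v_{i,1}$ from $v_{i,2}$. Otherwise, $v_{i,j}$ separates $v'_{p,1}$ from $v'_{p,2}$. Either way a $(k+1)$-st class appears, so $\nd(G'')>k$ although $G''\in\cG_L$. This handles the $\le k$ and the $=k$ versions at once.

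Your fallback of ``merging two copies with different twin types'' gestures in this direction. Without the concatenation step, and without a concrete pair of graphs whose twin classes are independent sets in one and cliques in the other, it is not a proof.

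Your heredity argument for $\{G\mid \nd(G)=k\}$ is valid and simpler than the paper's uniform treatment. Even more directly, $K_1=G(L,\mathtt{a})$ lies in every $\cG_L$ and $\nd(K_1)=1\neq k$. Your example is wrong, however: $N_1\cup K_{k-1}$ has $\nd=1$ for $k=2$ and $\nd=2$ for $k\ge 3$, so never $k$; take $kK_2$ instead.

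Your claim that adding a twin never increases $\nd$ is also false. Adding a true twin to one vertex of $N_2$ yields $K_2\cup K_1$, raising $\nd$ from $1$ to $2$. What is true is that the class of graphs with $\nd\le k$ is closed under twinning, because one may choose the twin type to match the class of $v$ (these classes are always cliques or independent sets). So your conclusion that twinning gives no obstruction still stands.
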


\begin{proof}
    Assume there is a language $L \subseteq \{0,1\}^*$ such that $\mathcal{G}_L$ represents\longversion{ exactly} the graphs of neighborhood diversity~$(\leq)k$. 
    Define $G=(V,E)$ and $G'=(V',E')$\shortversion{:}\longversion{ with} 
    \begin{equation*}
    \begin{split}
        V &\coloneqq \{v_{i,j} \mid i\in [k],j\in [2]\},\quad V'\coloneqq \{v_{i,j}' \mid i\in [k],j\in [2]\}\\
        E&\coloneqq \{ \{v_{i,j}, v_{p,q}\} \mid  i,p\in [k], j,q\in [2], i\neq p \},\qquad E'\coloneqq \{ \{v'_{i,1}, v'_{i,2}\} \mid  i\in [k]\}
    \end{split}
    \end{equation*}
    Hence, $G'=kK_2$ and $G$ is the complement of $G'$. Clearly, the neighborhood diversity classes are given by $\{v_{i,1},v_{i,2}\}$ (for $G$) and $\{ v'_{i,1}, v'_{i,2} \}$ (for $G'$), with $i\in [k]$. Thus, $\nd(G)=\nd(G')=k$. Therefore, there are $w,w'$ such that $G=G(L,w)$ and $G'=G(L,w')$. Now we consider $G''=G(L,ww')$. Since $G,G'$ are induced subgraphs of $G''$, for $i,p\in [k], j,q\in [2]$ and $i \neq p$, $v_{i,j}$ together with $v_{p,q}$  and $v'_{i,j}$ together with $v'_{p,q}$ cannot be in one neighborhood diversity class. Hence $\nd(G'')\geq k$. Now assume there are $i,p\in [k]$ and $j,q\in [2]$ such that $v_{i,j}$ and $v'_{p,q}$ are in the same neighborhood diversity class. Then $\{v_{i,3-j},v'_{p,q}\}\notin E''$ but $\{v_{i,j},v'_{p,3-q}\}\in E''$. If $\{v_{i,j},v'_{p,q}\}\in E''$, $v_{i,j}$ and $v_{i,3-j}$ are not in one neighborhood diversity class. For $\{v_{i,j},v'_{p,q}\}\notin E''$, $v_{p,q}$ and $v_{p,1-q}$ are not in one neighborhood diversity class. Both cases imply $\nd(G'')>k$. %\qed
\end{proof}

%Under some technical conditions on~$L$, we can also rule out the possibility that a language-defined graph class only contains graphs of bounded neighborhood diversity. The idea of the construction is again based on twinning.

\begin{toappendix}
\begin{proposition}
    Let $\mathcal{G}$ be a graph class and $k\in \mathbb{N}\setminus \{0,1\}$ such that for all $G\in \mathcal{G}$, $\nd(G)\leq k$. If there is an $L\subseteq\{0,1\}^*$ with $\mathcal{G}=\mathcal{G}_L$ then for all $i\in \mathbb{N}$, $\langle 01 \rangle^i\subseteq L$ or $L \cap \langle 01 \rangle^i=\emptyset$. 
\end{proposition}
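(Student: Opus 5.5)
The plan is to argue by contraposition. Assume there is some $i\in\N$ and words $u,u'\in\langle 01\rangle^i$ with $u\in L$ and $u'\notin L$; then $i\geq 1$, since $\langle 01\rangle^0=\{\emptyword\}$ is a single word. We will show that $\cG_L$ contains graphs of arbitrarily large neighborhood diversity. This contradicts $\mathcal{G}=\mathcal{G}_L$ together with $\nd(G)\leq k$ for all $G\in\mathcal{G}$.

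\textbf{Encoding.} Write $u=u_1\cdots u_i$ and $u'=u'_1\cdots u'_i$ with blocks $u_j,u'_j\in\{01,10\}$. Take vertices $x_1,\dots,x_n$ and a word $w=p_1p_2\cdots p_i$, where each $p_j$ is a permutation (as a word) of $\{x_1,\dots,x_n\}$. Then every letter occurs exactly $i$ times. Moreover, for $a<b$ we have $h_{x_a,x_b}(w)\in\langle 01\rangle^i$, and its $j$-th block is $01$ \iffl $x_a$ precedes $x_b$ in $p_j$. Fix an arbitrary permutation $\pi$ of $[n]$ and choose the blocks as follows:
\begin{itemize}
\item if $u_j=u'_j=01$, let $p_j$ be the identity order;
\item if $u_j=u'_j=10$, let $p_j$ be the reversed identity order;
\item if $u_j=01\neq u'_j$, let $p_j$ list the vertices in the order given by $\pi$;
\item if $u_j=10\neq u'_j$, let $p_j$ be the reverse of the $\pi$-order.
\end{itemize}

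\textbf{Resulting graph.} With this choice, for $a<b$ the projection $h_{x_a,x_b}(w)$ equals $u$ if $\pi$ keeps $a,b$ in order, and equals $u'$ if $\pi$ inverts them. Since $u\in L$ and $u'\notin L$, the graph $G(L,w)$ is exactly the complement of the permutation graph of $\pi$, which has the inversions of $\pi$ as edges. This step is routine; the only care needed is to check the block bookkeeping in the four cases above.

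\textbf{Unbounded diversity.} Neighborhood diversity is invariant under complementation, because $N(u)\setminus\{v\}=N(v)\setminus\{u\}$ holds in $G$ \iffl it holds in $\overline G$. So it suffices to choose $\pi$ so that its permutation graph has neighborhood diversity greater than $k$. Paths $P_n$ are permutation graphs, and for large $n$ distinct inner vertices at distance at least $3$ have different neighborhoods, so $\nd(P_n)\to\infty$. Hence $\cG_L$ contains a graph with $\nd>k$, the desired contradiction.

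The main obstacle is the choice of the permutations $p_j$ so that the block pattern switches between $u$ and $u'$ consistently for every pair simultaneously. This is handled by using the single permutation $\pi$ in every block where $u$ and $u'$ differ.
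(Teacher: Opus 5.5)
Your proof is correct, but it takes a different route from the paper.

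\textbf{The paper's route.} The paper argues locally, by controlled twinning. In a word where $v$ occurs exactly $i$ times, it replaces the $j$-th occurrence of $v$ by $vx$ or $xv$ according to the $j$-th block of $u$. The new letter $x$ is then a twin of $v$, and it is a true twin if and only if $u\in L$. Using the words $u_f\in\langle 01\rangle^i\setminus L$ and $u_t\in L\cap\langle 01\rangle^i$ alternately, the paper grows by induction a threshold-like graph $G_r$ with $\nd(G_r)\geq r$.

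\textbf{Your route.} You give a one-shot encoding instead. Your word is a concatenation of $i$ permutations. A block is fixed wherever $u$ and $u'$ agree, and follows $\pi$ (or its reverse) wherever they differ. Hence every pair projects to $u$ or to $u'$, according to whether $\pi$ inverts it. Your bookkeeping in the four cases checks out. The degenerate case $u'=\widetilde{u}$ cannot occur anyway, by $0$-$1$-symmetry.

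\textbf{Comparison.} Your route gives a stronger intermediate conclusion and avoids the induction. If $L$ splits $\langle 01\rangle^i$, then $\cG_L$ contains every permutation graph, since complements of permutation graphs are permutation graphs. The price is that you rely on two standard facts: paths are permutation graphs, and $\nd$ is invariant under complementation (which you verify). The permutation $\pi=2\,1\,4\,3\cdots$ would avoid even the first fact: its inversion graph is a perfect matching $mK_2$ with $\nd=m$. The paper's route yields instead a local closure property: one may attach a true twin or a false twin, at will, to any vertex of frequency $i$, whatever the other letters' frequencies. This fits the twinning theme of that section.
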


\begin{proof}
    Assume there is a $L\subseteq\{0,1\}^*$ with $\mathcal{G}=\mathcal{G}_L$ and an $i\in \mathbb{N}$, $\langle 01 \rangle^i \nsubseteq L$ and $L \cap \langle 01 \rangle^i \neq \emptyset$. Let $u_t\in L \cap \langle 01 \rangle^i$ and $u_f \in \langle 01 \rangle^i \setminus L$. 

    Let $w\in V^*$, $v \in V$ with $\vert w\vert=i $, $u\in \langle 01 \rangle^i$ and $x\notin V$. By definition, $u$ can be partitioned into $u_1\cdots u_i$ such that $u_j \in \langle 01 \rangle$ for all $j\in [i]$. Denote $w=w_1vw_2\cdots w_i v w_{i+1}$ such that $w_1,\ldots,w_{i+1}\in (V \setminus \{v\})^*$. Choose $x\notin V$. Define $f(w,v,x,u) \coloneqq w_1 g(u_1,v,x) w_2\cdots w_i g(u_i,v,x) w_{i+1}$ with 
    $$g(u_j,v,x) = \begin{cases}
        vx, & u_i=01\\
        xv, & u_i=10
    \end{cases}\quad.$$ 
    Consider $G=(V,E)=G(L,w)$ and $G'=(V\cup\{x\},E')= G(L,f(w,v,x,u))$.
    Let $p,q\in V\cup\{x\}$. If $x\notin \{p,q\}$ then 
    \begin{equation*}
        \begin{split}
            &h_{p,q}(f(w,v,x,u))\\
            ={}&h_{p,q}(w_1 g(u_1,v,x) w_2\cdots w_i g(u_i,v,x) w_{i+1})\\
            ={}& h_{p,q}(w_1)h_{p,q}( g(u_1,v,x))h_{p,q}( w_2)\cdots h_{p,q}(w_i)h_{p,q}( g(u_i,v,x) )h_{p,q}( w_{i+1})\\
            ={}& h_{p,q}(w_1)h_{p,q}(v) h_{p,q}( w_2)\cdots h_{p,q}(w_i)h_{p,q}( v )h_{p,q}( w_{i+1})\\
            ={}& h_{p,q}(w_1v w_2\cdots w_i v  w_{i+1})\\
            ={}& h_{p,q}(w).
        \end{split}
    \end{equation*}
    Hence, $\{p,q\}\in E$ \iffl $\{p,q\}\in E'$. Hence, $G'[V]=G$. 
    
    Next, consider $p=x$ and $q\neq v$. Hence,
    \begin{equation*}
        \begin{split}
            &h_{x,q}(f(w,v,x,u))\\
            ={}&h_{x,q}(w_1 g(u_1,v,x) w_2\cdots w_i g(u_i,v,x) w_{i+1})\\
            ={}& h_{x,q}(w_1)h_{x,q}( g(u_1,v,x))h_{x,q}( w_2)\cdots h_{x,q}(w_i)h_{x,q}( g(u_i,v,x) )h_{x,q}( w_{i+1})
        \end{split}
    \end{equation*}
    Since $w_i\in ((V\cup \{x\})\setminus \{v,x\})^*$, $h_{x,q}(w_i)=h_{v,q}(w_i)=1^{|w_i|_q}$. Moreover, $h_{x,q}( g(u_1,v,x)) = h_{v,q}(v) = 0 $, 
    \begin{equation*}
        \begin{split}
            &h_{x,q}(f(w,v,x,u))\\={}& h_{v,q}(w_1)h_{v,q}(v)h_{v,q}( w_2)\cdots h_{v,q}(w_i)h_{v,q}( v )h_{v,q}( w_{i+1})=h_{v,q}(w).
        \end{split}
    \end{equation*}
    Last, we consider  
    \begin{equation*}
        \begin{split}
            &h_{v,x}(f(w,v,x,u))\\=&h_{v,x}(w_1 g(u_1,v,x) w_2\cdots w_i g(u_i,v,x) w_{i+1})\\
            ={}& h_{v,x}(w_1)h_{v,x}( g(u_1,v,x))h_{v,x}( w_2)\cdots h_{v,x}(w_i)h_{v,x}( g(u_i,v,x) )h_{v,x}( w_{i+1})\\
            ={}& h_{v,x}( g(u_1,v,x))\cdots h_{v,x}( g(u_i,v,x) )\\
            ={}& u_1\cdots u_i =u.
        \end{split}
    \end{equation*}
    Hence, $\{v,x\}\in E$ \iffl $u \in L$. Thus, $x$ is a twin of $v$. $x$ is a true twin \iffl $u\in L$.
    One can hence view the operation of adding~$x$ as a specific form of twinning.

    Define $G_r\coloneqq([2r],E_r)$ for $r\in \mathbb{N}$ with
    $$E_r\coloneqq \{\{2i-1,2j-1\}\mid i,j\in [r]\}\cup \{\{2i-1,2j\}\mid i\in [r],j \in [i]\}.$$ 
    For $i,j\in [r]$ with $i<j$, $2i+1 \in N(2j)\setminus N(2i)$. Therefore, $\nd(G_r)\geq r$. Now we want to prove that $G_r\in \mathcal{G}_L$. $G_1=G(L,f(1^i,1,2,u_t))$ is a $K_2$. Define $w_1=f(1^i,1,2,u_t)$ and  $w_{r+1}\coloneqq f(f(w_{r},2r,2r+1,u_f),2r+1,2(r+1),u_t)$, $G_{r}'=G(L,w_r)$ for $r\in \mathbb{N}$. 

    We want to prove by induction that $G_r=G'_r$. For $r=1$, both are isomorphic to a $K_2$. Let $r\in \mathbb{N}$ such that $G_r=G'_r$ and we consider now $G_{r+1}$ and $G_{r+1}$. Clearly, both have the same vertex set. Furthermore, $G_{r+1}[[2r]]=G_r=G_r'=G'_{r+1}[[2r]]$. By our argumentation above and the definition of $u_t$ and $u_f$, $2r+1$ is a false twin of $2r$ on $G'_{r+1}[2r+1]$ and $2(r+1)$ is a true twin of $2r+1$ on $G'_{r+1}$. Thus, \begin{equation*}
        \begin{split}
            N_{G'_{r+1}}[2r+2]=N_{G'_{r+1}}[2r+1]=N_{G'_{r}}(2r) \cup \{2r+1,2(r+1)\}\\ = \{2i-1,2i\mid i\in [r+1]\}=N_{G_{r+1}}[2r+1]=N_{G_{r+1}}[2r+2]
        \end{split}
    \end{equation*}
    Hence, $G_r=G'_r$ for all $r\in \mathbb{N}$ and the neighborhood diversity in $\G_L$ is not bounded, contradicting that $\cG=\cG_L$.%\qed 
\end{proof}
\end{toappendix}

\section{Multi-Word-Representability: Some First Ideas}

In the original definition, a graph~$G$ is $k$-word-representable if, given $k$ words $W=\{w_1,\dots,w_k\}$, $G$ has an edge $uv$ if $u$ and $v$ alternate in one of the words in~$W$. In other words, $G$ is the union of the graphs $G_1,\dots,G_k$ described by $w_1,\dots, w_k$.
Similarly, given a 0-1-symmetric language $L$ and $k$ words $W=\{w_1,\dots,w_k\}$, we can define $G(L,W)$ as the union of the graphs $G(L,w_1)$, \dots, $G(L,w_k)$.
This could define the notion of \emph{generalized multi-word-representability}.
We propose to study this notion more in depth, as it obviously generalizes both multi-word-representability and language-representability.

\begin{example}
We revisit $G(\langle0011,0101\rangle,\mathtt{aebcdcadbffe})$ from \autoref{fig:graphs-for-one-word-with-different-L}. Observe that this graph equals the union of $G(\langle 0101\rangle,\mathtt{aebcdcadbffe})$ and $G(\langle 0101\rangle,\mathtt{fabcdfdcbaee})$. Hence, it is 2-word-representable with respect to $\langle 0101\rangle$. However, this is indeed a circle graph, as testified by $\mathtt{acfdcbeaefbcd}$.
\end{example}

%In order to prove the main result of this section, we need another word-combinatorial result.
\begin{proposition}\label{prop:kl-uniform}
Let $w\in \Sigma^*$. Then, $w\in \left(\Sigma^{\text{$\ell$-uni}}\right)^{k}$ (i.e., $w$ is the concatenation of $k$ $\ell$-uniform words) \iffl $h_{\ta,\tb}(w)\in \left(\{0,1\}^{\text{$\ell$-uni}}\right)^{k}$ for all $\{\ta,\tb\}\in \binom{\Sigma}{2}$.
\end{proposition}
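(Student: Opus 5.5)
The plan is to make both directions explicit in terms of the positions of certain occurrences of letters. Then the backward direction reduces to a pairwise comparison of positions, and pairwise comparisons are exactly what the projections $h_{\ta,\tb}$ see. Throughout, $\ell$-uniform over $\Sigma$ means that every letter of $\Sigma$ occurs exactly $\ell$ times. If $|\Sigma|=1$ the right-hand side is vacuous, so I assume $|\Sigma|\geq 2$; otherwise I would note the degenerate case separately.

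\emph{Forward direction.} Write $w=u_1\cdots u_k$ with each $u_j\in\Sigma^{\ell\text{-uni}}$. Since $h_{\ta,\tb}$ is a morphism, $h_{\ta,\tb}(w)=h_{\ta,\tb}(u_1)\cdots h_{\ta,\tb}(u_k)$. Each factor contains exactly $\ell$ zeros and $\ell$ ones, so it lies in $\{0,1\}^{\ell\text{-uni}}$. Nothing more is needed here.

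\emph{Backward direction.} First, every letter occurs exactly $k\ell$ times in $w$: for $\ta\in\Sigma$, pick any $\tb\neq\ta$ and count zeros in $h_{\ta,\tb}(w)$. For $\ta\in\Sigma$ and $j\in[k]$, let $p_\ta(j)$ be the position in $w$ of the $(j\ell)$-th occurrence of $\ta$. For $j<k$, let $q_\ta(j)$ be the position of the $(j\ell+1)$-th occurrence of $\ta$.

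The key observation concerns a binary word $x\in(\{0,1\}^{\ell\text{-uni}})^k$. Every prefix $x_1\cdots x_j$ of the factorisation contains exactly $j\ell$ zeros and $j\ell$ ones. Hence the $(j\ell)$-th $0$ and the $(j\ell)$-th $1$ both lie in the first $j$ blocks, while the $(j\ell+1)$-th $0$ and the $(j\ell+1)$-th $1$ both lie later. Applied to $x=h_{\ta,\tb}(w)$ and transported back to positions in $w$ (projection preserves relative order), this gives $p_\ta(j)<q_\tb(j)$ for all $\ta\neq\tb$ and all $j<k$. Together with the trivial $p_\ta(j)<q_\ta(j)$, this means $m_j\coloneqq\max_{\ta\in\Sigma}p_\ta(j)<\min_{\tb\in\Sigma}q_\tb(j)$.

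Now cut $w$ after positions $m_1<m_2<\dots<m_{k-1}$, and take the end of $w$ as the final cut. The prefix up to $m_j$ contains at least $j\ell$ occurrences of each letter, because it contains position $p_\ta(j)$. It contains at most $j\ell$ of each, because it stops before $q_\ta(j)$. So each prefix has exactly $j\ell$ occurrences of every letter, and consecutive differences show that each block is $\ell$-uniform. Monotonicity $m_j<m_{j+1}$ follows since $m_j<q_\ta(j)\leq p_\ta(j+1)\leq m_{j+1}$.

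The only delicate point is the key observation that the block structure of the binary projection forces the inequalities $p_\ta(j)<q_\tb(j)$. One also has to make sure that $p$ and $q$ refer to occurrences in $w$ and not only in the projection. Once that bookkeeping is stated carefully, the rest is a counting argument.
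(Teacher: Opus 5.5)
Your proof is correct and uses the same idea as the paper: compare the position of the $(j\ell)$-th occurrence of one letter with the $(j\ell+1)$-th occurrence of another, a relative order that $h_{\ta,\tb}$ preserves. The paper argues by contradiction and simply asserts that a failed factorization yields such an inverted pair; your explicit cut at $m_j=\max_{\ta}p_\ta(j)$ actually proves that step, and your remark that the statement breaks down when $|\Sigma|=1$ is a valid caveat the paper omits.
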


\begin{proof}
If $w=w_1\cdots w_k$, with $w_i$ being $\ell$-uniform for each $1\leq i\leq k$, then $h_{\ta,\tb}(w)=h_{\ta,\tb}(w_1)\cdots h_{\ta,\tb}(w_k)$, with each $h_{\ta,\tb}(w_i)$  being $\ell$-uniform for any $\{\ta,\tb\}\in \binom{\Sigma}{2}$, as all $h_{\ta,\tb}$ are morphisms.
Conversely, assume $h_{\ta,\tb}(w)\in \left(\{0,1\}^{\text{$\ell$-uni}}\right)^{k}$ for all $\{\ta,\tb\}\in \binom{\Sigma}{2}$. Clearly, $w$ must be $k\cdot \ell$-uniform. If $w\notin \left(\Sigma^{\text{$\ell$-uni}}\right)^{k}$, then there must be some $i\in [k]$ and $\ta\neq\tb$ such that within~$w$, the position $p_\ta(i\ell)$ of the  $(i\ell)^{\text{th}}$ occurrence of~$\ta$ is after the position $p_\tb(i\ell+1)$   of the  $(i\ell+1)^{\text{th}}$ occurrence of~$\tb$. But then $p_0(i\ell)>p_1(i\ell+1)$ in $h_{\ta,\tb}(w)$, which contradicts $h_{\ta,\tb}(w)\in \left(\{0,1\}^{\text{$\ell$-uni}}\right)^{k}$.
\end{proof}

\noindent
In  this paper, we would only like to mention the following formal-language related main result.

\begin{theorem}
Let  $\ell>1$ and let $L\subseteq\{0,1\}^{\text{$\ell$-uni}}$ be 0-1-symmetric and assume that there is a $j \in [\ell]\cup\{0\}$ with $0^{\ell - j} 1^\ell 0^j \notin L$. Then for each $k>1$, there is a $k\cdot \ell$-uniform language $L_k$ such that the graph~$G$ is $k$-word-representable via~$L$ \iffl $G$ is $L_k$-representable.
\end{theorem}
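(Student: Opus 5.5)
The idea is to encode a $k$-tuple of representing words as a single $k\ell$-uniform word, namely their concatenation, and to let $L_k$ read the $k$ blocks back out pairwise. Since a pair of letters does not see block boundaries directly, I would define the blocks through \emph{occurrence windows}. For a $k\ell$-uniform word $u\in\{0,1\}^*$ and $i\in[k]$, let $u^{(i)}$ be the scattered subword of $u$ formed by the $((i-1)\ell+1)^{\text{th}}$ through $(i\ell)^{\text{th}}$ occurrences of $0$ together with the same-numbered occurrences of $1$. The same definition applies to words $w\in V^*$ in which every letter occurs $k\ell$ times; call the result $w^{(i)}$. Then set
\[ L_k\coloneqq\{u\in\{0,1\}^{k\ell\text{-uni}}\mid \exists i\in[k]:\ u^{(i)}\in L\}. \]
This is a finite $k\ell$-uniform language. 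It is $0$-$1$-symmetric because $\widetilde{u}^{(i)}=\widetilde{u^{(i)}}$ and $L$ is symmetric. The key identity, immediate from the definitions, is $h_{a,b}(w^{(i)})=(h_{a,b}(w))^{(i)}$ for any $k\ell$-uniform $w$ and letters $a\neq b$. Consequently $E(G(L_k,w))=\bigcup_{i}E(G(L,w^{(i)}))$ for $k\ell$-uniform $w$. Moreover, if $w=w_1\cdots w_k$ with every $w_i$ being $\ell$-uniform over $V$, then $w^{(i)}=w_i$, as in \autoref{prop:kl-uniform}.

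For the direction from $L_k$-representability to $k$-word-representability, let $G=G(L_k,w)$. Letters whose frequency in $w$ differs from $k\ell$ are isolated in $G$. I would take $V'$ to be the set of letters of frequency $k\ell$ and form $w_i\coloneqq (h_{V'}(w))^{(i)}$. Each $w_i$ is then an $\ell$-uniform word over $V'$. The identity above shows that $G[V']$ is the union of the graphs $G(L,w_i)$, and it remains to append the isolated letters of $V\setminus V'$ to every $w_i$ by the padding described next.

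For the converse, suppose $G=\bigcup_i G(L,w_i)$ with $w_i\in V_i^*$. I would first normalize each $w_i$ into an $\ell$-uniform word over the whole of $V$ without changing $G(L,w_i)$. Since $L$ is $\ell$-uniform, every letter of frequency different from $\ell$ in $w_i$, and every letter missing from $w_i$, is isolated in $G(L,w_i)$. So I would delete these letters, obtaining $w_i'$, and reinsert them as a set $X=\{x_1,\dots,x_m\}$ that must stay isolated. This is where the hypothesis on $j$ enters. I would put
\[ w_i''\coloneqq x_1^{\ell-j}x_2^{\ell-j}\cdots x_m^{\ell-j}\cdot w_i'\cdot x_m^{j}\cdots x_1^{j}. \]
For $y\notin X$ we get $h_{y,x}(w_i'')=1^{\ell-j}0^\ell1^j=\widetilde{0^{\ell-j}1^\ell0^j}\notin L$. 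For $s<t$ we get $h_{x_s,x_t}(w_i'')=0^{\ell-j}1^{\ell-j}1^{j}0^{j}=0^{\ell-j}1^\ell0^j\notin L$; the reversed order of the suffix is exactly what makes this nested pattern appear. So $G(L,w_i'')=G(L,w_i)$ on $V$. Then $w\coloneqq w_1''\cdots w_k''$ is $k\ell$-uniform with $w^{(i)}=w_i''$, and hence $G(L_k,w)=G$.

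I expect the main obstacle to be exactly this padding step, together with the padding of the isolated letters in the first direction. The padding is the only place where a specific non-edge pattern is needed, and the pattern must work simultaneously against all old letters and among the new letters. The nested prefix/suffix construction above is what I would try first. The remaining steps are routine: the window identity, symmetry of $L_k$, and the case of an empty $V'$ or $X$.
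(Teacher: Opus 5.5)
Your proposal is correct and follows the paper's proof essentially step for step. Your window extraction $u^{(i)}$ is the paper's projection $p_i$, $L_k$ is defined identically, and your normalization $x_1^{\ell-j}\cdots x_m^{\ell-j}\cdot w_i'\cdot x_m^{j}\cdots x_1^{j}$ of each $w_i$ into an $\ell$-uniform word is exactly the paper's. You deviate only in small ways: you justify $h_{a,b}(w^{(i)})=(h_{a,b}(w))^{(i)}$ directly, which is cleaner than the paper's appeal to \autoref{prop:kl-uniform}; you verify $0$-$1$-symmetry of $L_k$ explicitly; and in the reverse direction you reuse the $j$-padding for the isolated letters, whereas the paper appends $a^{\ell-1}$, the only place it uses $\ell>1$.
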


\begin{proof}Fix  $\ell>1$ and some 0-1-symmetric $L\subseteq\{0,1\}^{\text{$\ell$-uni}}$. Choose $0\leq j\leq \ell$ with $0^{\ell - j} 1^\ell 0^j \notin L$.
Let $k>1$ be arbitrary. 
For an alphabet $\Sigma$ and for $i\in [k]$, let $p_i: \Sigma^{\text{$k\ell$-uni}} \to \Sigma^{\text{$\ell$-uni}}$ be the projection to the $((i-1)\ell+1)^{\text{th}}$ through the $(i\ell)^{\text{th}}$ occurrence of each letter; e.g., for $k = 3$ and $\ell = 2$, $p_3(\mathtt{{\color{gray}aacaa}a{\color{gray}bccbcb}ca{\color{gray}b}bbc}) = \mathtt{acabbc}$. Let $L_k \coloneqq \{ w \in \{0,1\}^{\text{$k\ell$-uni}} \mid \exists i \in [k]: p_i(w) \in L \} $.
Let $G = (V,E)$ be a graph.

First, assume that $G$ is $k$-word-representable via $L$, i.e., there exist $w_1, \dots, w_k \in V^*$ such that $G = \bigcup_{i \in [k]} G(L,w_i)$. For each $i \in [k]$, let $U_i \coloneqq \{ \ta \in V \mid |w_i|_\ta = \ell\}$ and $w_i' \coloneqq \prod_{\ta \in V \setminus U_{i}} a^{\ell -j} \cdot h_{U_i}(w_i) \cdot (\prod_{\ta \in V \setminus U_{i}} a^j )^R$. Observe that $G(L,w_i)=G(L,w_i')$ and that $w_i'\in \Sigma^{\text{$\ell$-uni}}$ for each $i\in [k]$. Let $w \coloneqq \prod_{i \in [k]} w_i'$. Let $\{\ta,\tb\} \in \binom{V}{2}$. 
\begin{align*}
\{\ta,\tb\} \in E 
& \Leftrightarrow \exists i \in [k]: \{\ta,\tb\} \in E(G(L,w_i)) \\
& \Leftrightarrow \exists i \in [k]: h_{\ta,\tb}(w_i) \in L \Leftrightarrow \exists i \in [k]: h_{\ta,\tb}(w_i') \in L\\
& \Leftrightarrow \exists i \in [k]: h_{\ta,\tb}(w_i') = h_{\ta,\tb}(p_i(w)) = p_i(h_{\ta,\tb}(w)) \in L \\
& \Leftrightarrow h_{\ta,\tb}(w) \in L_k 
\end{align*}
Notice that \autoref{prop:kl-uniform} implies $h_{\ta,\tb}(p_i(w)) = p_i(h_{\ta,\tb}(w))$.
Therefore, $G = G(L_k,w)$. 

For the other direction, assume $G \in \mathcal{G}_{L_k}$, i.e., there is a word $w \in V^*$ such that $G = G(L_k, w)$. Let $U \coloneqq \{ \ta \in V \mid |w|_\ta = \ell k\} $. For each $i \in [k]$, let $w_i \coloneqq p_i(h_U(w)) \cdot \prod_{\ta \in V \setminus U} \ta^{\ell -1}$. For each $\ta \in U$ and $\tb \in V \setminus \{\ta\}$, $\{ \ta, \tb \} \notin E$ and for each $i \in [k]$, $\{ \ta, \tb \} \notin E(G(L,w_i))$. 
For each $\{\ta,\tb\} \in \binom{V\setminus U}{2}$, 
\begin{align*}
\{\ta,\tb\} \in E 
& \Leftrightarrow h_{\ta,\tb}(w) = h_{\ta,\tb}(h_U(w)) \in L_k \\
& \Leftrightarrow \exists i \in [k]: p_i(h_{\ta,\tb}(h_U(w))) = h_{\ta,\tb}(p_i(h_U(w))) = h_{\ta,\tb}(w_i) \in L \\
& \Leftrightarrow \exists i \in [k]: \{\ta,\tb\} \in E(G(L,w_i)).
\end{align*}
(Again, $h_{\ta,\tb}$ and $p_i$ commute by \autoref{prop:kl-uniform}.) 
Therefore, $G = \bigcup_{i \in [k]} G(L,w_i)$. 
\end{proof}

\noindent
With the same arguments that have led to \autoref{thm:hereditary}, one can prove:
\begin{theorem}\label{thm:hereditary-multiword}
For any 0-1-symmetric language $L$ and any $k\geq 1$, the class of graphs $\cG_{L,k}$ (that are $k$-word-representable via~$L$) is  hereditary.
\end{theorem}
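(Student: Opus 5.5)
The plan is to mirror the proof of \autoref{thm:hereditary}, which rests on \autoref{lem:inducedsubgraph}. Let $G=(V,E)\in\cG_{L,k}$, so there are words $w_1,\dots,w_k$ with $G=\bigcup_{i\in[k]}G(L,w_i)$. I will assume, as the union notation suggests, that each $w_i$ lies in $V^*$ with $V(G(L,w_i))=V$. If the definition allows $\alph(w_i)\subsetneq V$, then the vertex set of $G(L,w_i)$ is just the letters occurring in $w_i$. In that case the same argument works after replacing $A$ by $A\cap\alph(w_i)$ in the $i$-th word. Fix $A\subseteq V$ nonempty; the goal is to show $G[A]\in\cG_{L,k}$.

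The candidate representation of $G[A]$ is the tuple of projected words $h_A(w_1),\dots,h_A(w_k)$. By \autoref{lem:inducedsubgraph}, applied to each $w_i$ separately, we get $G(L,h_A(w_i))=G(L,w_i)[A]$.

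It then remains to check that taking induced subgraphs commutes with taking unions of graphs on the common vertex set $V$. This is a one-line edge computation. For $\{u,v\}\in\binom{A}{2}$, the edge $\{u,v\}$ lies in $E(G[A])$ iff it lies in $E$. By the definition of $G$ as a union, that holds iff there exists $i\in[k]$ with $\{u,v\}\in E(G(L,w_i))$. This in turn holds iff there exists $i$ with $\{u,v\}\in E(G(L,w_i)[A])=E(G(L,h_A(w_i)))$. Concretely, the whole chain reduces to the identity $h_{u,v}(h_A(w_i))=h_{u,v}(w_i)$ for $u,v\in A$. Hence $G[A]=\bigcup_i G(L,h_A(w_i))$, and so $G[A]\in\cG_{L,k}$.

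There is no real obstacle here. The only point needing care is the bookkeeping of vertex sets when some $w_i$ does not contain every letter of $V$, which is the case handled in the first paragraph. It also helps to note that the union of $k$ graphs includes, as a degenerate case, repeating one word $k$ times. Beyond that, $0$-$1$-symmetry is needed only to make each $G(L,w_i)$ well-defined, exactly as in \autoref{thm:hereditary}.
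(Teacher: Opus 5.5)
Your proposal is correct and is essentially the argument the paper intends: apply \autoref{lem:inducedsubgraph} to each word $w_i$ separately, using $h_{u,v}(h_A(w_i))=h_{u,v}(w_i)$, and note that induced subgraphs commute with graph union, exactly as for \autoref{thm:hereditary}. One loose end remains: a projection $h_A(w_i)$ may be empty. Your remark on repeated words handles it, by replacing such an empty projection with a copy of a nonempty projected word, which leaves the union unchanged.
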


\begin{toappendix}
Kenkireth \emph{et al.}~\cite{KenSalSas2025} introduced the \emph{multi-word representation number} $\mu(G)$ of a graph~$G$. We propose to generalize it towards the \emph{covering number} $\mu_L(G)$ ({w.r.t.}\longversion{with respect to}~$L$) as the smallest number~$k$ of words such that $G$ is $k$-word-representable via~$L$. With $L=\langle 001,01\rangle$, by \autoref{exa:threshold}, $\mu_L(G)$ is the well-known \emph{threshold covering number}, also known as  \emph{threshold dimension}\footnote{Attention: there are other definitions for the notion of threshold dimension in the context of metric dimension~\cite{MolMurOel2020}.}, which motivates our definition. For this notion, we refer to \cite{ChvHam77,CozLei84,CozLei87,CozHal91,ErdOrdZal87,Yan82}; in particular, the last paper showed that it is \NP-complete to determine if the threshold covering number of a given graph is at most~3. (The same question for~2 seems to be open while it is polynomially-time solvable for~1.) Of course, similar complexity questions can be studied for any language~$L$. To the best of our knowledge, nothing has been explored in this area so far. In this sense and following previous naming conventions in graph theory, the \emph{multi-word representation number} introduced in~\cite{KenSalSas2025}  might be rather called the \emph{covering number with respect to word representability}. Currently, it is even open if there exists any graph~$G$ with $\mu_{L_{\classical}}(G)\geq 3$. This makes \NP-hardness difficult to prove even for the question if $\mu_{L_{\classical}}(G)\leq k$, given a graph~$G$ and an integer $k\geq 2$ (that might be also fixed). However, for $k=1$, this question is known to be \NP-hard even on triangle-free graphs by~\cite{HalKitPya2016} as this is just the question if a given graph is word-representable.
\end{toappendix}

\section{Conclusion}
In this work, we further investigated the 
\longversion{new notion of $L$-representability -}\shortversion{idea of} representing graph classes by formal languages. \longversion{We have shown that e}\shortversion{E}ven with small finite languages, we can represent powerful classes like interval or circle graphs. We fully characterized the graph classes described by $1$-, $(1,2)$- and $2$-uniform languages. For most other languages, this remains \longversion{an open question}\shortversion{open}. We refer to \cite{FenFFKS2026} for the study of\longversion{ many} other finite languages, and to \cite{FerFMS2026a} \longversion{concerning}\shortversion{for} infinite languages. 
%In contrast to the classical representation, 
Our setting implies some closure properties, e.g., if a graph class is not closed under twinning, there is no language representing it and all\longversion{ graph} classes described by our model are hereditary. Those properties give insights for identifying the graph classes that can be described by a given binary language. We have also extended our approach further to multi-word-representability and showed that under some conditions, such a notion would be also captured by word-representability (\shortversion{w.r.t.}\longversion{with respect to} another language).

\bibliographystyle{eptcs}
\bibliography{ab,hen,wrg-addendum}

%\inappendixtrue
%\appendix

\end{document}